\algrenewcommand{\algorithmiccomment}[1]{$\rhd$ #1}
\algrenewcommand\algorithmicprocedure{\textbf{action}}
\newtheorem{definition}{Definition}
\newtheorem{theorem}{Theorem}
\newtheorem{lemma}{Lemma}
\newtheorem{corollary}{Corollary}
\newcommand*{\N}{\mathds{N}}
\newcommand{\RID}{\textsc{RID}\xspace}
\newcommand{\notauthorized}[1]{\textsc{not\_authorized(\ensuremath{#1})}\xspace}
\newcommand{\probe}[1]{\textsc{probe(\ensuremath{#1})}\xspace}
\newcommand{\probefail}[1]{\textsc{probefail(\ensuremath{#1})}\xspace}
\newcommand{\outrelayclosed}[1]{\textsc{out-relay-closed(\ensuremath{#1})}\xspace}
\newcommand{\inrelayclosed}[1]{\textsc{in-relay-closed(\ensuremath{#1})}\xspace}
\newcommand{\ping}[1]{\textsc{ping(\ensuremath{#1})}\xspace}
\newcommand{\transmit}[1]{\textsc{transmit(\ensuremath{#1})}\xspace}
\newcommand{\timeout}{\textsc{Timeout}\xspace}
\newcommand{\send}[1]{\textsc{send(\ensuremath{#1})}\xspace}
\newcommand{\myref}[1]{\ensuremath{\hat{#1}}\xspace}
\newcommand{\relayprims}{\ensuremath{\mathcal{IFR}}\xspace}
\newcommand{\incoming}{\ensuremath{\operatorname{incoming}}\xspace}
\newcommand{\direct}{\ensuremath{\operatorname{direct}}\xspace}
\newcommand{\sameTarget}{\ensuremath{\operatorname{same-target}}\xspace}
\newenvironment{proofof}[1]{\par
  \normalfont \topsep6\p@\@plus6\p@\relax
  \trivlist
  \item[\hskip\labelsep
        \bfseries
    Proof of #1\@addpunct{.}]\ignorespaces
}{%
  \qed
  \endtrivlist\@endpefalse
}
\date{}
\begin{document}

 \title{Relays: A New Approach for the Finite Departure Problem in Overlay Networks\thanks{This work was partially supported by the German Research Foundation (DFG) within the Collaborative Research Center ``On-The-Fly Computing'' (SFB 901).}}

\author{Christian Scheideler \and Alexander Setzer}

\maketitle

\begin{abstract}
A fundamental problem for overlay networks is to \emph{safely} exclude leaving
nodes, i.e., the nodes requesting to leave the overlay network are excluded
from it without affecting its connectivity. To rigorously study
self-stabilizing solutions to this problem, the {\em Finite Departure Problem}
(FDP) has been proposed \cite{DBLP:conf/sss/ForebackKNSS14}. In the FDP we are
given a network of processes in an arbitrary state, and the goal is to
eventually arrive at (and stay in) a state in which all leaving processes
irrevocably decided to leave the system while for all weakly-connected
components in the initial overlay network, all staying processes in that
component will still form a weakly connected component. In the standard
interconnection model, the FDP is known to be unsolvable by local control
protocols, so oracles have been investigated that allow the problem to be
solved \cite{DBLP:conf/sss/ForebackKNSS14}. To avoid the use of oracles, we
introduce a new interconnection model based on relays. Despite the relay model
appearing to be rather restrictive, we show that it is universal, i.e., it is
possible to transform any weakly-connected topology into any other
weakly-connected topology, which is important for being a useful
interconnection model for overlay networks. Apart from this, our model allows
processes to grant and revoke access rights, which is why we believe it to be
of interest beyond the scope of this paper. We show how to implement the relay
layer in a self-stabilizing way and identify properties protocols need to
satisfy so that the relay layer can recover while serving protocol requests.
\end{abstract}

\section{Introduction}

Once distributed systems become large enough, membership changes in these
systems are not an exception but the norm. This particularly holds for
peer-to-peer systems but is also true for large server-based systems as
servers may need to be taken offline for some maintenance or new servers need
to be included in the system to improve or maintain the service quality. So
protocols need to be in place to allow members of a distributed system to join
and leave it without disrupting its functionality. The most basic requirement
for maintaining the functionality of a system is that it stays weakly
connected. While this is easy to guarantee when new members join a system, it
is not so easy to guarantee when members leave the system, in particular, if
multiple members want to leave the system at the same time. In the literature
on peer-to-peer systems, many proposals for leave protocols have already been
made
(see, e.g., \cite{SaiaT08,KSW10,HayesST12,PRT14,APRRU15,DGS16}). However, most
of these solutions cannot give any guarantees if the system is not in some
well-defined state. Distributed systems can easily be pushed into a
non-well-defined state if there are network partitions or faulty members, so
it would be desirable to have leave protocols that do not need any assumptions
on the system state.

In order to rigorously study the problem of guaranteeing weak connectivity for
any situation in which a collection of members (henceforth also simply called
{\em processes}) wants to leave the system, Foreback et
al.~\cite{DBLP:conf/sss/ForebackKNSS14} introduced the \emph{Finite Departure
Problem (FDP)}. In the FDP the leaving processes have to irrevocably decide in
finite time when it is safe to leave the network, i.e., their departure does
not cause the network to get disconnected. Foreback et al. showed that there
is no self-stabilizing local-control protocol for the FDP. At the heart of the
proof are two serious problems: The standard assumption used in overlay
networks research that a process may freely pass knowledge about its neighbors
to any one of its neighbors has the effect that a process $v$ cannot locally
decide whether $v$ is critical for the connectivity of the network or not,
simply because it does not have any control on and thereby potentially
incomplete knowledge about its incoming connections (i.e., the set of
processes knowing its address). Also, when assuming asynchronous
communication, where message may have arbitrary finite delays, a process $v$
may not know whether messages carrying critical connectivity information are
still on their way to $v$. This caused Foreback et al. to introduce the NIDEC
oracle, which gives a process $v$ the power to determine whether its address
is still known somewhere in the system (NID is a short form of "no ID") and
whether there are still messages on their way to $v$ (EC is a short form of
"empty channel").

Is it possible to avoid the use of oracles by using a different link layer
model? We show that this is indeed the case. In fact, we need two layers: a
self-stabilizing link layer and, on top of that, a self-stabilizing relay
layer, which is our main innovation. On top of the relay layer, a
self-stabilizing local-control protocol can then be designed to solve the FDP
problem without the use of an oracle. While the link layer ensures that a
process is aware of the messages that are still in transit along its outgoing
connections, the relay layer gives the processes the power to rigorously
control who is allowed to send messages to it. Despite appearing to be rather
restrictive, we show that the relay concept is universal in a sense that one
can get from any weakly connected topology to any other weakly connected
topology while staying weakly connected throughout the transformation process.
Because the relay layer now allows the rigorous study of access control
problems in overlay networks, which opens up new directions like rigorous
studies on the DoS-resistance of overlay networks (given that messages can
only be sent via relay connections), we expect it to be of interest beyond
this paper.

\subsection{System model}

We consider a distributed system consisting of a set of processes that are interconnected to each other (with more details on the type of interconnections once we introduce relays in the next section). The processes are controlled by a local-control protocol that specifies the variables and actions that are available in each process.
We assume that there is a reliable link layer that transmits messages from processes to other processes based on an ID of the target process contained in the message.
More specifically, each process specifies a set of variables, called \emph{buffers} containing messages to be sent to other processes and the ID of the respective target process is stored along with the buffer or inside the message.
We assume that the link layer may take an arbitrary but finite amount of time to process a message that was put into one of these variables, but messages never get lost.
We assume the link layer makes sure that every transmitted message will eventually be removed from the buffer it was taken from after it has been processed by the receiver.
There are no resources available beyond the processes and the link layer as specified above (such as shared storage or a gateway), so the processes entirely rely on themselves and the link layer in order to handle certain tasks.
This implies that there is no way for two disconnected components of processes to connect to each other.

There are two types of \emph{actions} that a protocol can execute.
The first type has the form of a standard procedure $\langle label\rangle (\langle parameters \rangle) \rightarrow \langle commands \rangle$, where $label$ is the unique name of that action, $parameters$ specifies the parameter list of the action, and $commands$ specifies the commands to be executed when calling that action.
Such actions can be called locally (which causes their immediate execution) or remotely.
In fact, we assume that every message must be of the form $\langle label \rangle (\langle parameters \rangle)$, where $label$ specifies the action to be called in the receiving process and $parameters$ contains the parameters to be passed to that action call.
All other messages are ignored by the processes.
The second type has the form $ \langle label\rangle: \langle guard \rangle \rightarrow \langle commands \rangle$, where $label$ and $commands$ are defined as above and $guard$ is a predicate over local variables.
We call an action whose guard is simply \textbf{true} a \emph{timeout} action.

The \emph{system state} is an assignment of values to every variable of each process (including its buffers).
An action in some process $v$ is \emph{enabled} in some system state if its guard evaluates to \textbf{true}, or if there is a message
requesting to call it that was transmitted to the process by the link layer and has not been processed yet.

A \emph{computation} is an infinite fair sequence of system states such that for each state $S_i$, the next state $S_{i+1}$ is obtained by executing an action that is enabled in $S_i$.
This disallows the overlap of action executions, i.e., action executions are \emph{atomic}.
We assume \emph{weakly fair action execution}, meaning that if an action is enabled in all but finitely many states of a computation, then this action is executed infinitely often.
Note that a timeout action of a process is executed infinitely often.
Besides this, we place no bounds on process execution speeds, and no restrictions on the execution order of enabled actions, i.e., we allow fully asynchronous computations and non-FIFO message delivery.

\subsection{Problem statement} \label{sec:FDP}

A protocol is \emph{self-stabilizing} with respect to a set of legitimate states if it satisfies the following two properties: 
\begin{description}
\item[\emph{Convergence:}] starting from an arbitrary system state, the
protocol is guaranteed to eventually arrive at a legitimate state.
\item[\emph{Closure:}] starting from a legitimate state the protocol remains in
legitimate states thereafter.
\end{description}
A self-stabilizing protocol is thus able to recover from transient faults regardless of their nature. Moreover, a self-stabilizing protocol does not have to be initialized as it eventually starts to behave correctly regardless of its initial state.
A formal definition of the FDP can be found in \cite{DBLP:conf/sss/ForebackKNSS14}, which we briefly recap in the following.

\subsubsection{Definition of the Finite Departure Problem}
We assume each process to be either tagged as {\em leaving} or {\em staying}, and this information is available in a read-only Boolean variable called {\em leaving}. A leaving process makes a decision to leave the system by executing the \textbf{stop} command. A process that has executed \textbf{stop} will be {\em inactive} (i.e., none of its actions will be enabled) from that point on, otherwise it is called {\em active}. A process $p$ can \emph{safely} leave a system if the active processes of its weakly connected component are still weakly connected without $p$.

In the \emph{Finite Departure Problem (FDP)} the problem is to eventually reach a state in which
(i) every staying process is active,
(ii) every leaving process is inactive, and (iii) for each
weakly connected component of the initial network, the staying processes
in that component still form a weakly connected component. 
Such a state is called a {\em legitimate} state for the FDP. 
Our goal is to come up with a self-stabilizing local-control protocol for the FDP, i.e., it satisfies the convergence and closure property w.r.t. these legitimate states. 
For the convergence we only consider initial states in which all processes are initially active (as inactive processes are equivalent to non-existing processes).
We also restrict the initial state to contain only a finite number of messages in the buffers.
Without this assumption, there might be an infinite number of corrupted messages, which would make it impossible to guarantee convergence in finite time in general.
Finally, we do not allow the presence of connections to non-existing processes as is normally assumed in self-stabilizing topologies (see, e.g.,~\cite{DBLP:journals/tcs/BernsGP13,DBLP:conf/sss/NorNT13,DBLP:conf/wdag/ScheidelerSS16,DBLP:conf/opodis/ScheidelerSS15,KoutsopoulosSS15}).
Note that this does not make the problem trivial because even under that assumption there is no self-stabilizing local-control protocol for the FDP \cite{DBLP:conf/sss/ForebackKNSS14}.

\subsection{Related work}\label{sec:related_work}
The idea of self-stabilization in distributed computing was introduced in a classical paper by E.W. Dijkstra in 1974~\cite{Dijkstra74}, in which he investigated the problem of self-stabilization in a token ring.
In the past 10 years, several self-stabilizing local-control protocols have been proposed for various overlay networks (e.g., \cite{corona,JRSST09,DolevK08,AspnesW07,JacobRSS2012,DBLP:journals/tcs/BernsGP13}), but none of them has considered the leaving of nodes as an individual problem until the work of Foreback et al. \cite{DBLP:conf/sss/ForebackKNSS14}.
In that work, two problems are considered: the Finite Departure Problem (FDP) and the Finite Sleep Problem (FSP). The authors show that there is no self-stabilizing local-control protocol for the FDP, so oracles are investigated that allow the FDP to be solved. In the FSP, the leaving processes do not have to make an irrevocable decision when to leave the network. They just fall asleep whenever they think it is safe to do so, but they will be woken up again as long as there are still messages in their channel. So the goal of the FSP is just to ensure that eventually a state is reached where all leaving nodes are permanently asleep. Foreback et al. show that for the FSP problem, a self-stabilizing local-control protocol does exist. While their protocol only works together with a self-stabilizing protocol for arranging nodes in a sorted list, more universal protocols for the FSP were presented in \cite{KoutsopoulosSS15}. Another extension of the results in \cite{DBLP:conf/sss/ForebackKNSS14} was presented in \cite{FNT16}. In that paper, the authors study churn in general (including join and leave requests) and consider the case that neither the number of churn requests in total, nor the number of concurrent churn requests can be bounded by a constant.
They prove that a solution to this problem is possible if and only if not every request needs to be satisfied.

Aside from these results, there has been research on self-stabilizing link layers \cite{DBLP:journals/ipl/DolevDPT11,DBLP:conf/sss/DolevHSS12} which even guarantee FIFO-delivery, thus giving stronger guarantees than required for the relay layer.

\subsubsection{Related work concerning relays}\label{sec:related_work_concerning_relays}
To the best of our knowledge, our relay concept has not been proposed before in the distributed systems community. Of course, there is a long history of using relays in communication networks. Relays are commonly used when two devices are too far away from each other to exchange information directly, like in wireless networks, or if two devices cannot interact directly because of firewalls. Relay networks have also been used to improve availability (a prominent example are Resilient Overlay Networks \cite{ABKM01}), to provide anonymity (a prominent example is the TOR network \cite{DMS04}), or to improve performance (a prominent example is AKAMAIs IPA Relay service). In general, most of the peer-to-peer systems and overlay networks proposed so far are using their members as relays for the exchange of requests or information between its members.

Our relay concept also has some interesting connections to the access control domain.
Due to the \textbf{delete} and \textbf{stop} commands, it is possible to grant and revoke access rights with relays.
There is a huge amount of literature on access control in distributed systems.
For surveys on access control approaches in various contexts such as operating systems, file systems, distributed systems, and web-based systems, see e.g. \cite{surveyDelessy07,surveyKirrane08,surveyMiltchev08,surveyHu06,surveyKoshutanski09,surveyLazouski10}.
Important requirements for access control schemes are
\begin{itemize}
\item Integrity: It should not be possible to construct, tamper with or steal an access right.
\item Propagation: There should be mechanisms in place controlling the transfer of access rights.
\item Revocation: It should be possible to revoke an access right.
\end{itemize}
Interestingly, our relay approach can satisfy these requirements if the processes cannot tamper with the relay layer.
 
The simplest ways of controlling access rights are to use passwords or cryptographic keys, but these can easily be delegated from one process to another. Another simple way is to use access control lists (ACLs), which gained prominence in the 1970s with the advent of multiuser systems like UNIX. In distributed systems, the ACL approach usually requires a trusted third party, in order to prevent tampering with the ACLs. Other popular access control models are Role-Based Access Control (RBAC), Attribute-Based Access Control (ABAC), Policy-Based Access Control (PBAC), and Risk-Adaptive Access Control (RAdAC). For all of these models, various variants have been proposed depending on the context in which they are used and the focus on particular properties. In most of the implementations of these models, trusted third parties are used as well since otherwise it is hard to guarantee integrity and prevent uncontrolled propagation of access rights. More decentralized approaches keep track of delegation chains which is somewhat similar to our relay approach: if one of the delegations is revoked, all delegations beyond it will not be accepted any more so that the corresponding processes will lose their access rights to certain objects. However, to the best of our knowledge, this chaining approach has not been used in order to control the interconnection of processes. An example where access rights are provided via explicit communication channels is the Singularity operating system \cite{hunt05,wobber07}. A key aspect of Singularity are Software-Isolated Processes (SIPs), which encapsulate pieces of an application or a system and provide information hiding, failure isolation, and strong interfaces. Communications between SIPs is through bidirectional, strongly typed, higher-order channels. When a channel is created, both of its endpoints are returned to the SIP that created it. These endpoints can be freely delegated along existing channels but not replicated, which provides a more flexible form a access control than our relay approach, but this still opens up the possibility of stealing access rights resp. delegating them by mistake.

\subsection{Our contributions}

We present a novel approach for interconnecting processes which is based on so-called \emph{relays}.
For this we introduce a novel relay layer that acts between the application and the link layer and show that depending on the way an application uses this layer, this relay layer can self-stabilize to a legal state in which a transfer of messages is guaranteed.
After that, we show that our relay approach is universal in a sense that one can get from any weakly connected network to any other weakly connected network while maintaining weak connectivity in the transformation process.
In Section~\ref{sec:adapt_existing_solutions} we show that existing solutions for the FDP can be transformed for our relay layer such that they solve the FDP without the use of an oracle (assuming a reliable link layer instead).
Our relay concept also has some interesting connections to the access control domain as we pointed out in Section~\ref{sec:related_work_concerning_relays}.
\section{The Relay Layer}\label{sec:relay_layer}

We assume that all connections between processes happen through relays which are managed by a so-called {\em relay layer}. Each process $v$ is assumed to interact with its own, separate relay layer $RL(v)$ (so that it is clear which relay is owned by which process), and $RL(v)$ is required to reside at the same machine as $v$ so that interactions between $v$ and $RL(v)$ are local. Whenever a message needs to be sent to $v$, it has to go through $RL(v)$. Each $RL(v)$ has a globally unique address, or short $RID$, that depends on the address of its machine, so that messages can be sent to it from any other relay layer that knows its $RID$.
Furthermore, every relay layer $RL(v)$ has a local buffer $RL(v).Buf$ that is used for the internal communication between relay layers:
Every $RL(v).Buf$ is expected to consist of pairs $(targetRID, message)$ in which $targetRID$ is the $RID$ of the relay layer $message$ is sent to by the link layer.
Here $message$ must be an internal message (any other type of message will be ignored).
The relay layer of the entire system is the set of relay layers over all of its processes. 

\subsection{Relays}

A relay is basically a socket that is non-transferably owned by exactly one process $v$, and that can have both incoming connections from other relays as well as an outgoing connection to some relay. More precisely, $RL(v)$ maintains the following variables for each relay $r$:
\begin{itemize}
\item $r.ID$: globally-unique identifier of relay $r$ (containing the $RID$ of its relay layer so that messages can be sent to $r$ when knowing its ID)
\item $r.state$: is either {\em alive} or {\em dead}
\item $r.out$: stores a $(Key,ID)$ pair where $Key$ is a set keys, and $ID$ is the ID of the target of the outgoing connection (if $ID=\bot$ then $r$ is a {\em sink}, i.e., messages are forwarded to the process owning it)
\item $r.level\in \N_0$: stores the distance of $r$ (in hops) to the sink relay reached via its outgoing connection (there is always a unique such one, see below)
\item $r.sinkRID$: stores the RID of the \emph{sink} of $r$, i.e., the RID of the relay layer of the process that will receive messages sent via $r$
\item $r.In$: set of triples of the form $(key, RID, \bot)$ or $(key,\bot, r')$ for some relay $r'$, where $key$ is a globally unique key (depending on the $RID$ of $r$'s relay layer), $RID$ specifies the address of the relay layer that can send messages to $r$ via $key$, and $r'$ is a relay via which $key$ was supposed to be forwarded; depending on the form, $key$ is a \emph{confirmed} or \emph{unconfirmed} key
\item $r.Buf$: stores all messages that the link layer should send to the relay layer with \RID $r.out.ID$ if $r.out.ID \ne \bot$ or to $v$ if $r.out.ID = \bot$
\end{itemize}
Note that we assume all buffers (i.e., $RL(v).Buf$ and $r.Buf$ for every relay $r$) to be insert-only, i.e., only the link-layer can remove a message from them.
Furthermore, we assume all IDs in the system to be valid, i.e., for every ID in the system the corresponding RID belongs to an existing process (it would be possible to lift this assumption by introducing another oracle or by giving more power to the underlying link layer, but this is beyond the scope of this paper).

The relay connections can be represented by a so-called relay graph.
\begin{definition}
Given any system state $S$, the {\em relay graph} $G=(V,E)$ of $S$ is a directed graph that is defined as follows:
$V=R \cup P$, where $R$ is the set of relays and $P$ is the set of active processes.
$E=E_P \cup E_{Ch}$ where $E_P$ is the set of all {\em explicit} edges and $E_{Ch}$ is the set of {\em implicit} edges. $E_P$ contains an edge $(v,w)$ whenever 
\begin{enumerate}
    \item $v \in P$ and $w \in R$ and $w$ is owned by process $v$, 
    \item $v \in R$ and $w \in R$ and relay $v$ has an outgoing connection to relay $w$ (i.e., $v.out.ID = w.ID$), or
    \item $v \in R$ and $w \in P$, and relay $v$ is a sink relay of process $w$ (i.e., $v.out.ID = \bot$).
\end{enumerate}    
    $E_{Ch}$ contains an edge $(v,w)$ whenever $v \in R$, $w \in R$ and a reference to $w$ is contained in the parameter list of a message in $v.Buf$. Thus, while explicit edges can be used to send messages, implicit edges cannot be used to send messages yet.
\end{definition}
Observe that the third requirement on a legitimate state implies that every relay graph is cycle-free.

\subsection{Relay layer primitives}\label{sec:relay_layer_primitives}

Whenever a process holds a reference to a relay $r$, which we denote by $\myref{r}$, we assume that it is a "dark" reference, i.e., the variables of the relay cannot be accessed by the process. However, the reference can be used by the processes to call a number of primitives offered by the relay layer
(in the following, we assume that all relays mentioned below are {\em owned} by the calling process, i.e., they are or have been created for it by its relay layer --- relays not owned by the calling process will be ignored):
\begin{enumerate}
\item \textbf{new Relay}: returns a reference to a new sink relay $r$ with a globally unique identifier $r.ID$, $r.state=alive$, $r.In=\{ \}$, $r.out=(\{ \},\bot)$, and $r.level=0$.

\item \textbf{delete $\myref{r}$}: prepares the relay referenced by $\myref{r}$ for deletion, in a sense that the relay layer sets $r.In=\{ \}$ and $r.state=dead$. This has the effect that $r$ will not accept any further messages, but $r$ still continues to deliver the messages in $r.Buf$. $r$ is deleted by the relay layer once $r.Buf$ is empty and all relay relay keys sent via $r$ have been confirmed or deleted.

\item \textbf{merge($R$)}: if for all relays $r \in R$, $r.state=alive$, $r.out.ID$ is equal to some common $ID$, $r.level$ is equal to some common $\ell$, $r.sinkRID$ is equal to some common $sinkRID$ and $r.In=\{ \}$, the relay layer creates a new relay $r'$ with new $r'.ID$, $r'.state=alive$, and $r'.out=(Key,ID)$ with $Key=\bigcup_{r \in R} r.out.Key$, $r'.level=\ell$, $r'.sinkRID = sinkRID$, $r'.In=\{ \}$, and $r'.Buf=\bigcup_{r \in R} r.Buf$. Also, all relays in $R$ are deleted. A reference to $r'$ is returned back to the process. (If one of the conditions above is not satisfied, merge does nothing.)

\item \textbf{getRelays}: returns (references to) the current set of all relays owned by $v$ that are still alive.

\item \textbf{incoming($\myref{r}$)}: returns $|r.In|$

\item \textbf{direct($\myref{r}$)}: returns true iff $r.level \le 1$

\item \textbf{is-sink$(\myref{r})$}: returns true iff $r.level=0$

\item \textbf{dead($\myref{r}$)}: returns true iff $r$ does not exist anymore or $r.state=dead$

\item \textbf{same-target($\myref{r_1},\myref{r_2}$)}: returns true iff $r_1.out.ID=r_2.out.ID$

\item \textbf{send($\myref{r},action(parameters))$}: if $r$ is still alive, adds a message of the form $((key,r.ID,r.out.ID),action(parameters'))$ for some arbitrary $key \in r.out.Key$ to $r.Buf$ (where $parameters'$ is an adapted form of $parameters$ explained below), where $(key,r.ID,r.out.ID)$ is called the {\em header} of the message.
\end{enumerate}

\begin{figure}
 \begin{subfigure}[b]{0.45\textwidth}\centering
   \includegraphics{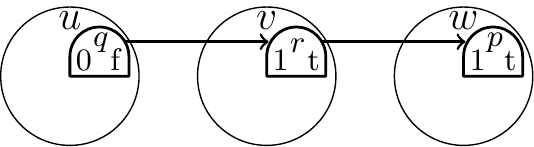}
   \caption{Initial situation. $u$ owns relay $q$, $v$ owns relay $r$ and $w$ owns relay $p$. By definition, $r$ and $p$ are direct relays, whereas $q$ is not.}\label{fig:model:example:a}
 \end{subfigure}
 \hfill
  \begin{subfigure}[b]{0.45\textwidth}\centering
   \includegraphics{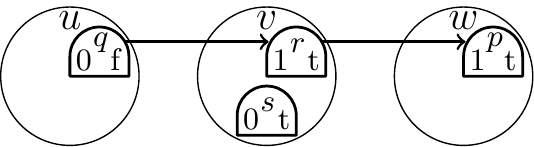}
    \caption{Situation after $v$ has executed \textbf{new Relay}, $v$ has an additional (sink) relay $s$. By definition, $s$ is a direct relay.}\label{fig:model:example:b}
 \end{subfigure}

 \hfill

 \begin{subfigure}[b]{0.45\textwidth}\centering
   \includegraphics{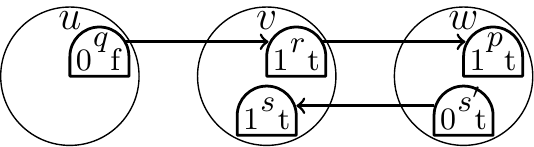}
    \caption{Situation after $v$ has executed $\send{\myref{r}, action(\myref{s})}$ for some action $action$, $RL(w)$ ($w$ is the so-called sink of $r$) has created a new relay $s'$ with an outgoing connection to $s$. 
    }\label{fig:model:example:c}
 \end{subfigure}
 \hfill
 \begin{subfigure}[b]{0.45\textwidth}\centering
   \includegraphics{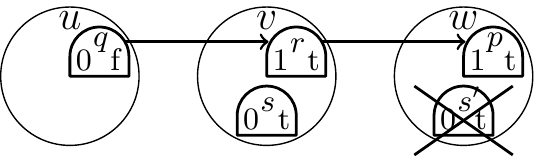}
    \caption{Situation after $w$ has executed \textbf{delete $\myref{s'}$}, $s'$ is marked as dead, $s.In$ has been updated (as $s$ no longer has an incoming connection), and the connection from $s'$ to $s$ has been removed.}
 \end{subfigure}
 \caption{Example with three processes $u$, $v$, and $w$. The characters inside a relay $r$ denote (from left to right), $|r.In|$, the ID of $r$, and whether $r$ is a direct relay. The arrows indicate outgoing connections of relays.}\label{fig:model:example}
\end{figure}
Figure~\ref{fig:model:example} gives examples of the uses of these primitives.

If a process $v$ executes \textbf{stop}, $v$ becomes inactive, and $RL(v)$ immediately deletes all sink relays and from then on periodically deletes all relays $r$ with $r.In = \emptyset$ and $r.Buf = \emptyset$.
$RL(v)$ continues to exist until all relays have been deleted, after which it shuts down.
We hightlight that protocols can prevent relay layers from existing forever by making sure that all indirect relay connections (i.e., relay connections where none of the endpoints is a sink) are closed eventually as we will prove.

Note that the fact that \textbf{merge} can be used to merge relays is the reason for why the variable $r.out.Key$ of a relay $r$ has to be a set instead of a single value only:
The merge could occur in an illegitimate state at which one of the merged relays may store a correct key while another one does not.
At this point it is not clear which one to choose.

For convenience, in the following we will use $RL(r)$ to denote the relay layer that owns a relay $r$, $RID(ID)$ to denote the \RID contained in $ID$, $RID(u)$ to denote the \RID of $RL(u)$ and $RID(r)$ to denote the \RID of $RL(r)$.

\subsection{Message processing and action handling}
All messages that can be sent by a process $v$ are required to be remote method invocations of the form $action(parameters)$ (otherwise, they will be ignored by $RL(v)$). 
More precisely, a process $v$ calls $send(\myref{r}, action(parameters))$ to ask $RL(v)$ to send out a message via $r$.
For simplicity, we assume $parameters$ to consist of a sequence of objects, some of which are relay references, and all other objects do not contain any relay reference at all.
We assume that each action has a fixed number of parameters and specifies which of its parameters are relay references.
When $send(\myref{r}, action(parameters))$ is called for an alive relay $r$, there are two possibilities:
If $r$ is a sink, i.e., $r.out.ID = \bot$, then $action(parameters)$ is put into $r.Buf$ such that the process owning $r$ will receive $action(parameters)$.
Otherwise, $RL(r)$ for every relay reference \myref{s} contained in $parameters$ creates a new globally unique key $key$, inserts $(key, \bot, r)$ into $s.In$ and replaces \myref{s} by the quadruple $(key, s.ID, s.level+1, s.sinkRID)$.
We refer to these quadruples by the term \emph{relay parameter}, the first entry of which is called its \emph{key}, the second is called its \emph{id}, the third is called its \emph{level}, and the fourth its \emph{sinkRID}.
Furthermore we assume that there is a part of each generated key that depends on the generating process and can be used to check whether a key $key$ was generated by a process $u$, in which case we say \emph{$key$ belongs to $u$}. %
Let the list of parameters resulting from the replacements be $parameters'$.
Then, $RL(r)$ puts a \transmit{((key, r.ID, r.out.ID), action(parameters'))} message into $r.Buf$ where $key$ is an arbitrary element from $r.out.Key$.
The pseudocode of the \send{} action can be found in Listing~\ref{lst:rl:send}.

\begin{lstlisting}[basicstyle=\small, mathescape=true,caption=Pseudocode for message processing,label=lst:rl:send, numbers=left, escapechar = |, backgroundcolor = \color{lightgray}]
send($\myref{r}$, action(parameters)) $\rightarrow$
if $r.state = alive$ then
  if $r.out.ID \ne \bot$ then // $r$ is not a sink relay
    let $s_1, \dots. s_k$ denote all parameters of $action$ that are relay 
     references
    for every $i \in \{1, \dots, k\}$ do
      create a new globally unique key $key$
      $s_i.In := s_i.In \cup \{(key, \bot, r)\}$ |\label{line:rl:new_key_added_to_r_in}|
      $s_i' := (key, s_i.ID, s_i.level + 1, s_i.sinkRID)$
      replace $s_i$ in $parameters$ by $s_i'$
    let $key$ be arbitrary such that $key \in r.out.Key$
    $r.Buf := r.Buf \cup \{\transmit{((key, r.ID, r.out.ID), action(parameters'))}\}$
  else
    $r.Buf := r.Buf \cup \{action(parameters)\}$
\end{lstlisting}

We assume that the link layer for every relay $r$ eventually processes every message in $r.Buf$ 
without changing its contents.
The link layer makes sure that every message $m' \in r.Buf$ for a relay $r$ is either processed by the process $v$ owning $r$, in case that $outID = \bot$, or successfully delivered to the process whose relay layer has the RID contained in $r.out.ID$.
After the link layer has processed a message $m'$ in $r.Buf$ for a relay $r$, it removes $m$ from $r.Buf$.

\begin{definition}[Valid message header]
A message $m$ of the form\\ $((key,inID,outID), action(parameters))$ is said to have a {\em valid} header for relay $r$ if $r.ID=outID$, and either $(key,RID, \bot) \in r.In$ with $RID = RID(inID)$, or $(key, \bot, r') \in r.In$
and $r'.sinkRID = RID(inID)$.
\end{definition}

When a message $m = ((key,inID,outID), action(parameters))$ is received by a process $w$, $RL(w)$ acts according to the Pseudocode given in Listing~\ref{lst:rl:ch_added}.
We assume that $\probe{controlKeys, keySequence}$ is a dedicated action type used for the relay layers only, in which $controlKeys$ is a set and $keySequence$ is a sequence of keys.
\begin{lstlisting}[basicstyle=\small,mathescape=true,caption=Pseudocode executed by $RL(w)$ when a message $m$ is received by $w$,label=lst:rl:ch_added, numbers=left, escapechar = |, backgroundcolor = \color{lightgray}]
|\transmit{m = ((key,inID,outID), action(parameters))} $\rightarrow$ |
if there is a relay $r'$ such that and $r'.state = alive$ and $m$ has a valid     
 header for $r'$ then
  if $(key,\bot, r'') \in r'.In$ for some relay $r''$ owned by this process and
   $r''.sinkRID = RID(inID)$ then |\label{line:rl:triple_replacement_begin}|
    // first message received via this connection, activate it          
    $r'.In := r.In \setminus \{(key,\bot, r'')\}$ |\label{line:rl:r_in_remove_message_received}|
    $r'.In := r.In \cup \{(key, RID(inID), \bot)\}$  |\label{line:rl:triple_replacement_end}|
  if $r'.out.ID = \bot$ then // r' is a sink relay
    if $action(parameters) = \probe{controlKeys, keySequence}$ then
      for every $key' \in controlKeys$ do
        if there is no relay $r''$ such that $key' \in r''.out.Key$ then |\label{line:rl:msg_arrived_probefail_check}|          
          let $(key_1, \dots, key_k) = keySequence$
          if there is an $RID$ such that $(key_k, RID, \bot) \in r'.In$ then        
            $RL.Buf := RL.Buf \cup \{ (RID, \probefail{key', (key_1, \dots, key_k)} \}$
    else if all ids of relay parameters of $m$ belong to the 
     same |\RID| $senderRID$ then 
      // (otherwise, the message is obviously corrupted)
      $r'.Buf := r'.Buf \cup \{action(parameters)\}$
      for each relay param $(key', ID', level', sRID')$ in parameters do     
        if there is no relay $r''$ with $key' \in r''.out.Key$ in $RL(w)$ then  |\label{line:rl:receipt_no_double_keys}|
          create a new relay $s$ with:
            $s.ID := newID$, where $newID$ is a new, globally unique 
             ID containing the RID of $RL(w)$
            $s.state := alive$, 
            $s.out := (\{key'\},ID')$, |\label{line:rl:key_added}|
            $s.level := level'$, and
            $s.sinkRID := sRID'$, and
            $s.In := \{ \}$, and         
            $s.Buf := \{ ((key',s.ID,ID'), \probe{\{\}, (key')}) \}$ |\label{line:rl:probe_activation_creation}|
          replace $(key', ID', level', sRID')$ in $parameters$ by $\myref{s}$
        else
          replace $(key', ID', level', sRID')$ in $parameters$ by $\bot$
  else // $m$ needs to be forwarded
    $r'.Buf := r'.Buf \cup \{\transmit{m}\}$
    replace $key$ by an arbitrary $key' \in r'.out.Key$
    replace $inID$ by $r'.ID$
    replace $outID$ by $r'.out.ID$
    if $action(parameters) = \probe{controlKeys, keySequence}$ then |\label{line:rl:probe_forwarding_begin}|
      append $key'$ to $keySequence$
      for every $key'' \in controlKeys$ do
        if there is a message $m' \in r'.Buf$ that contains a relay 
         parameter with key $key''$ then
          remove $key''$ from $controlKeys$ |\label{line:rl:probe_forwarding_end}|                    
else if there is a relay $r'$ s.t. $r'.ID = outID$ and $r'.state = alive$ then 
  // $m$ does not have a valid header for $r'$
  $RL.Buf := RL.Buf \cup \{ (RID(inID), \notauthorized{m}) \}$ |\label{line:rl:messagereceipt_notauthorized}|
else if $outID$ contains the |\RID| of $RL(w)$
  // there is no relay $r'$ s.t. $r'.ID = outID$ and $r'.state = alive$
  $RL.Buf := RL.Buf \cup \{ (RID(inID), \outrelayclosed{outID}) \} $|\label{line:rl:messagereceipt_outrelayclosed}|
\end{lstlisting}
Recall that when a process $v$ calls \send{\myref{r}, m}, and $m$ contains references to relays, $RL(v)$ replaces these references by relay parameters containing the necessary information to establish a connection to these relays.
Additionally $RL(v)$ inserts $(key, \bot, r)$ to $r'.In$ for every relay $r'$ that was contained in this message.
These will be replaced by $(key, RID, \bot)$ after the message has been received by a process. 
To prevent $(key, \bot, r)$ entries in $.In$ sets for which no corresponding messages in the system exist (which would prevent $.In$ from becoming empty after all other relays have been closed), a probing is done via the \probe{} messages to check whether such a message $m$ with a relay parameter with key $key$ exists:
On the path from $r$ to the sink relay, it is checked whether $m$ is contained in the buffer of the next relay on the path.
If this is not the case and the sink does not have a relay with that key, a \probefail{} message will be sent in return to inform $r'$ about this.
Note that the \probefail{} message type contains two parameters: the key that was not found and the sequence of keys that were used to get from the initiator of the \probe{} message to the sink.
The latter is used to find the way back to the initiator via the same path (in reverse order) that the \probe{} message took.
Details of this can be found in Listing~\ref{lst:rl:probefail}.
\begin{lstlisting}[basicstyle=\small,mathescape=true,caption=Pseudocode executed upon \probefail{key, keySequence},label=lst:rl:probefail, numbers=left, escapechar = |, backgroundcolor = \color{lightgray}]
$\probefail{key, keySequence} \rightarrow$
  let $key_1, \dots, key_k = keySequence$ ($k = |keySequence|$)  
  if there is a relay $r$ such that $key_k \in r.out.Key$ then
    if $k > 1$ then
      if there is an $RID$ such that $(key_{k-1}, RID, \bot) \in r.In$ then        
        $RL.Buf := RL.Buf \cup \{ (RID, \probefail{key, (key_1, \dots, key_{k-1})}) \} $
    else
      if there is a relay $r'$ such that $(key, \bot, r) \in r'.In$ then
        $r'.In := r'.In \setminus (key, \bot, r)$ |\label{line:rs:r_in_remove_probefail}|
\end{lstlisting}

When a $\notauthorized{m}$ control message is received and there is a non-sink relay $r$ such that $m$ could have been sent by this, the relay layer removes the key contained in $m$ from $r.out.Key$.
If there is still at least one key left in $r.out.Key$, the message is resent with another key.
Otherwise, all elements $(key, \bot ,r)$ are removed from $r'.In$ for every relay $r'$, and $r$ is deleted.
The pseudocode of this action is given in Listing~\ref{lst:rl:notauthorized}.
\begin{lstlisting}[basicstyle=\small,mathescape=true,caption=Pseudocode executed upon \notauthorized{m},label=lst:rl:notauthorized, numbers=left, escapechar = |, backgroundcolor = \color{lightgray}]
$\notauthorized{m = ((key,inID,outID,level), action(parameters))} \rightarrow$
  if there exists a relay $r$ with $r.ID = inID$, $r.out.ID = outID \ne \bot$, 
   $r.level = level$ and $key \in r.out.Key$ then
    $r.out.Key := r.out.Key \setminus \{key\}$
    if $|r.out.Key| > 0$ then
      replace $key$ in $m$ by an arbitrary $key' \in r.out.Key$
      $r.Buf := r.Buf \cup \{m\}$
    else // outgoing link of $r$ is broken / closed
      // remove all "pending" (unconfirmed) relays sent via r
      for all relays $r'$ do
        for all keys $key$ such that $(key, \bot, r) \in r'.In$ do
          $r'.In := r'.In \setminus \{(key, \bot, r) \}$ |\label{line:rs:r_in_remove_not_authorized}|
      delete $r$ |\label{line:rl:notauthorized_delete}|
\end{lstlisting}

The \timeout action mainly detects and corrects all values that are obviously corrupted and contradict to the definition of a legal state that will be given later.
In addition, for each relay $r$ it serves the following purposes: First, it periodically sends a \ping{r.ID, r.level, r.sinkRID, key} message to every relay layer whose \RID is contained as the second parameter of a triple $(key, RID, \bot)$ in $r.In$.
This is to give connected relays $r'$ with $r'.out.ID = ID$ and $key \in r'.out.Key$ the opportunity to correct their level or sinkRID information and also to determine if there are relays in $r.In$ that do not exist.
Second, it detects and fully removes deleted relays $r$ that do not need to be kept any more (e.g. because all of their messages have been transmitted) and it also shuts down the relay layer if the process is dead and all relays of it have been deleted.
In case $r$ is not a sink, it additionally sends out an \inrelayclosed{r.out.Key, RID(r), r.out.ID} message as to inform the relay layer of the relay with ID $r.out.ID$ that $r$ has been closed.
Third, its sends out the aforementioned \probe{} messages.
The full pseudocode of this action is given in Listing~\ref{lst:rl:timeout}.
\begin{lstlisting}[basicstyle=\small,mathescape=true,caption=Pseudocode of the periodically executed \timeout action,label=lst:rl:timeout, numbers=left, escapechar = |, backgroundcolor = \color{lightgray}]
$true \rightarrow$
  for all $\myref{r} \in getRelays$ do
    if $r.out.ID = \bot$ then 
      $r.level = 0$ |\label{line:rl:timeout_level_set}|
    else 
      if $r.level < 1$ then 
        $r.level := 1$
    if $r.out.ID = \bot$ and $r.out.Key \ne \{ \}$ then 
      $r.out.Key = \{ \}$      |\label{line:rl:timeout_key_empty}|
    if $r.out.ID \ne \bot$ and $r.out.Key = \{ \}$ then 
      delete r |\label{line:rl:timeout_delete_key_empty}|
    for all $(key, X, Y) \in r.In$ do
      if there is a $(key, X', Y') \in r.In$ such that $X' \ne X$ and $Y' \ne Y$ 
       or if there is a $(key, X', Y') \in r'.In$ for some relay $r' \ne r$ or 
       if $key$ does not belong to $RL(r)$ then
        $r.In := r.In \setminus \{ (key, X, Y) \}$ |\label{line:rl:timeout_r_in_remove1}|
    for all $(key, RID, \bot) \in r.In$ do
      $RL.Buf := RL.Buf \cup \{ (RID, \ping{r.ID, r.level, r.sinkRID, key}) \} $ |\label{line:rl:ping_send}|  
    for all $(key, \bot, r') \in r.In$ do
      if $r'$ does not exist then
        $r.In := r.In \setminus (key, \bot, r')$ |\label{todo:this:was:added}|
    for all $x \in r.In$ such that $x \ne (key, RID, \bot)$ and $x \ne (key, \bot, r')$
     for some |\RID| $RID$ and some existing relay $r'$ 
       $r.In := r.In \setminus \{ x \}$ |\label{line:rl:timeout_r_in_remove2}|
    if $r.state = dead$ then
      if there is no relay $r'$ owned by this process such that
       $(key,\bot,r) \in r'.In$ for some key $key$ then
        if $r.out.ID = \bot$ then
          delete $r$ |\label{line:rl:timeout_relay_dead_delete}|
          completely remove $r$
        else if $r.Buf = \emptyset$ then
          let $RID$ be the |\RID| of this relay layer
          let $oID := r.out.ID$
          $RL.Buf := RL.Buf \cup \{ (RID(oID), \inrelayclosed{r.out.Key, RID, oID}) \} $ |\label{line:rl:timeout:inrelayclosed}|
          completely remove $r$
    if the process is dead and $r.In = \{ \}$ and $r.Buf = \{ \}$ and there 
     is no relay $r'$ such that $(key,\bot,r) \in r'.In$ for some key $key$ then
      delete $r$  |\label{line:rl:timeout_process_dead_delete}|  
    if there is a relay $r' \ne r$ such that there is a $key \in r'.out.Key$ 
     such that $key \in r.out.Key$ then
      if $r'.ID > r.ID$ then
        delete r |\label{line:rl:timeout_double_used_key_delete}|
    for every $key \in r.Out.Key$ do
      let $keySequence$ be a sequence consisting of the single 
       element $key$
      let $controlKeys$ be the set of all keys $key'$ s.t. there is a 
       relay $r'$ such that $(key', \bot, r) \in r'.In$ and there is no message 
       in $r.Buf$ containing a relay parameter with key $key'$
      if the corresponding process is alive or $r.In \ne \emptyset$
        $r.Buf := r.Buf \cup \{ ((key, r.ID, r.out.ID), \probe{controlKeys, keySequence}) \}$       
    if the corresponding process is dead and $r.In = \emptyset$ and $r.Buf = \emptyset$
      delete $r$
  if the process is dead and process owns no relay then
    shut down this relay layer completely  
\end{lstlisting}

When a relay layer receives a \ping{ID, level, sinkRID, key} message it checks whether there is a corresponding relay $r$ with $r.out.ID = ID$ and $key \in r.out.Key$. 
If there is no such relay, it responds to the relay layer owning the relay with id $ID$ with an \inrelayclosed{} message indicating that there is no such relay with such a key. 
Otherwise, if $r.level \ge level +1$, it updates $r.level$ to $level$ and $r.sinkRID$ to $sinkRID$.
If $r.level < level+1$, it deletes $r$ (in this case correcting the value would be dangerous as this would allow for cycles in the relay graph).
The pseudocode of the \ping{} message is given in Listing~\ref{lst:rl:ping} and the pseudocode of the $\inrelayclosed{Keys, senderRID,ID}$, which basically removes every entry $(key,RID,\bot)$ from all $.In$ sets such that $key \in Keys$, is given in Listing~\ref{lst:rl:inrelayclosed}.
\begin{lstlisting}[basicstyle=\small,mathescape=true,caption=Pseudocode of the \ping{} action,label=lst:rl:ping, numbers=left, escapechar = |, backgroundcolor = \color{lightgray}]
$\ping{ID, level, sinkRID, key} \rightarrow$
  if $ID \ne \bot$ then
    if there is a relay $r$ with $r.out.ID = ID$ and $key \in r.out.Key$ do
      $r.sinkRID := sinkRID$
      if $r.level > level + 1$ then 
        $r.level := level + 1$
      if $r.level < level + 1$ then 
        delete $r$ |\label{line:rl:ping_delete}|
    else
      let $u$ be the process such that $RID(u) = RID(ID)$
      let $RID$ be the $RID$ of this process
      $RL.Buf := RL.Buf \cup \{ (RID(ID), \inrelayclosed{\{key\}, RID, ID}) \}$ |\label{line:rl:ping:inrelayclosed}|
\end{lstlisting}
\begin{lstlisting}[basicstyle=\small,mathescape=true,caption=Pseudocode of the \inrelayclosed{} action,label=lst:rl:inrelayclosed, numbers=left, escapechar = |, backgroundcolor = \color{lightgray}]
$\inrelayclosed{Keys, senderRID, ID} \rightarrow$
  for every $key \in Key$ do
    if process owns a relay $r$ s.t. $(key, RID, \bot) \in r.In$ 
     for some $RID$ then
      $r.In := r.In \setminus \{(key, RID, \bot)\}$
\end{lstlisting}

When \textbf{delete} \myref{r} is called, $RL(r)$ sets $r.state$ to $dead$ and sends an \outrelayclosed{r.ID} message to every relay layer whose RID is the second parameter of a triple in $r.In$.
Afterwards, it empties $r.In$ so that no message can be received via $r$ from that point in time.
The pseudocode of this is given in Listing~\ref{lst:rl:delete}.
\begin{lstlisting}[basicstyle=\small,mathescape=true,caption=Pseudocode executed upon \textbf{delete} \myref{r},label=lst:rl:delete, numbers=left, escapechar = |, backgroundcolor = \color{lightgray}]
|\textbf{delete} \myref{r}:|
  r.state := dead
  for every $(key, RID, \bot) \in r.In$ do
    let $u$ be the process such that the |\RID| of $RL(u)$ equals $RID$
    $RL.Buf := RL.Buf \cup \{ (RID, \outrelayclosed{r.ID}) \}$ |\label{line:rl:delete_outrelayclosed}|
  $r.In := \{\}$
\end{lstlisting}
Note that a relay $r$ is not closed immediately during the execution of \textbf{delete} \myref{r}.
This is to allow all messages still in $r.Buf$ to be delivered first.
Once this has happened, the relay will be removed completely upon the execution of \timeout.

When a relay layer receives an \outrelayclosed{ID} message and owns a relay $r$ with $r.out.ID = ID$, it removes all triples $(key, \bot, r)$ from $r'.In$ for every relay $r'$ owned by it, empties $r.out.Key$, sets $r.out.ID$ to $\bot$, and calls delete afterwards.
The pseudocode of this action can be found in Listing~\ref{lst:rl:outrelayclosed}.
\begin{lstlisting}[basicstyle=\small,mathescape=true,caption=Pseudocode of the \outrelayclosed{} action,label=lst:rl:outrelayclosed, numbers=left, escapechar = |, backgroundcolor = \color{lightgray}]
$\outrelayclosed{ID} \rightarrow$
  if this process owns a relay $r$ such that $r.out.ID = ID$ then
    // remove all "pending" (unconfirmed) relays sent via r
    for all relays $r'$ do
      for all keys $key$ such that $(key, \bot, r) \in r'.In$ do
        $r'.In := r'.In \setminus \{(key, \bot, r) \}$ |\label{line:rl:r_in_remove_outrelayclosed}|
    $r.out.Key := \{\}$
    $r.out.ID := \bot$
    delete r |\label{line:rl:outrelayclosed_delete}|
\end{lstlisting}

\subsection{Properties of the relay layer}\label{sec:legal_state}
In order to define legal states for the relay layer, we introduce the following notion of a \emph{valid relay}:
\begin{definition}[Valid Relay]\label{def:valid_relay}
A relay $r$ is \emph{valid} iff
\begin{enumerate}
\item $r.state = alive$, and \label{item:vr:state_alive}
\item $r.ID$ is globally unique, and \label{item:vr:rid_unique}
\item $r.out$ stores a pair (Key, ID) such that $Key$ is a set, and \label{item:vr:rout_key}
\item $r.In$ only consists of triples $(key,RID, \bot)$ with $RID \ne \bot$ or $(key, \bot, r'')$ for a valid relay $r''$ owned by $RL(r)$, and \label{item:vr:rin_stores_triples}
\item every key $key$ used as a first parameter of a triple in $r.In$ is locally unique (i.e., it does not appear in any other triple in $r.In$ or $r''.In$ for any relay $r'' \ne r$) and belongs to $RID(r)$, and\label{item:vr:in_key_unique}
\item there is no $\ping{r.ID,level, sinkRID, key}$ message in the system such that $level \ne r.level$ or $sinkRID \ne r.sinkRID$, or $(key, \bot, r'') \in r.In$ for any relay $r''$, and  \label{item:vr:ping}
\item there is no $\outrelayclosed{r.ID}$ message in the system, and \label{item:vr:outrelayclosed}
\item for every $(key, RID, \bot) \in r.In$ there is no\\ \notauthorized{m = ((key,inID,r.ID,level), action(parameters))} message in the system for any $level$ and any $inID$ such that $RID(inID) = RID$, and
	for every $(key, \bot, r'') \in r.In$ there is no\\ \notauthorized{m = ((key,inID,r.ID,level), action(parameters))} message in the system for any $level$ and any $inID$ such that $RID(inID) = r''.sinkRID$, and  \label{item:vr:notauthorized}
\item for every $(key, \bot, r'') \in r.In$,
there is no \probefail{key, (key_1, \dots)} message in the system such that $key_1 \in r''.out.Key$, and,
let $(r_1 = r'', r_2, \dots, r_k)$ be the sequence of relays such that $r_{i+1}.ID = r_i.out.ID$ for all $1 \le i < k$ and $r_k.out.ID = \bot$, then 
either for a relay $r'$ owned by the process with \RID $r''.sinkRID$ such that $r'.out.ID = r$, $key \in r'.out.Key$, $r'.level = r.level+1$, there is a \probe{\{\}, key} message with a valid header in transit to $r$ and there is no \probe{controlKeys, (key_1, \dots)} message such that $key \in controlKeys$ and $key_1 \in r''.out.Key$ in $r'''.Buf$ for any relay $r''' \notin \{r_1, \dots, r_{k-1}\}$, or there is a message $m$ with a valid header for $r_{j+1}$ in $r_j.Buf$ for some $1 \le j < k$ containing a relay parameter with key $key$, and there is no \probe{controlKeys, (key_1, \dots)} message such that $key \in controlKeys$ and $key_1 \in r''.out.Key$ in $r'''.Buf$ for any relay $r''' \notin \{r_1, \dots, r_j\}$. 
In addition to all of these properties \textbf{either} \label{item:vr:newprop}
    \item $r$ is a sink, i.e., $r.out = (\{\},\bot)$, $r.level = 0$, and $r.sinkRID = RID(r)$, \textbf{or} \label{item:vr:sink}
    \item \label{item:vr:non_sink}
    \begin{enumerate}
	\item $r.out.ID \ne \bot$, and \label{item:vr:non_sink:outID}
	\item there is a valid relay $r'$ with $r'.ID = r.out.ID$, and  \label{item:vr:non_sink:nextValid}
	\item $r.level = r'.level + 1$, and $r.sinkRID = r'.sinkRID$, and \label{item:vr:non_sink:level_and_rid_correct} 
	\item there is a $key \in r.out.Key$ such that $(key, RID, \bot) \in r'.In$ and $RID = RID(r)$, or 
	$(key, \bot, r'') \in r'.In$ for a relay $r''$ and $r''.sinkRID = RID(r)$ and $((key,r.ID,r.out.ID), \probe{\{\}, key}) \in r.Buf$, and \label{item:vr:non_sink:one_key_correct}
	\item for every $key \in r.out.Key$, there is no relay $r''' \ne r$ owned by the same process such that $key \in r'''.out.Key$ \label{item:vr:non_sink:out_key_unique}
	\item there is no \inrelayclosed{Keys, RID(r), r.out.ID} message in transit to $RL(r')$ such that $key \in r.out.Key$ for a $key \in Keys$ \label{item:vr:non_sink:inrelayclosed}
\end{enumerate}
\end{enumerate}
\end{definition}

Using this definition, we can define a \emph{valid relay graph} as follows:

\begin{definition}[Valid relay graph]
A \emph{valid relay graph} of a system state $S$ is the subgraph of the relay graph $G=(R \cup P,E_P \cup E_{Ch})$ of S such that every $r \in R$ is valid and every $(v,w) \in E_{Ch}$ is due to a valid relay parameter.
\end{definition}
Note that every valid relay graph is cycle-free due to Property~\ref{item:vr:non_sink:level_and_rid_correct}) of a valid relay.
We say a state $S$ is \emph{legal} if there is no difference between the relay graph of $S$ and its valid relay graph.
Furthermore, we say an application is \emph{deliberate} if it does not delete a relay $r'$ if $r'.In \ne \emptyset$ (note that this includes that it does not call \textbf{stop} as long as there are sink relays with incoming connections).
Given the above definitions, we obtain the following results whose proofs can be found in Appendix~\ref{sec:missing_proofs_legal_state}:
\begin{theorem}\label{thm:rl:messages_sent_via_valid_relay_will_be_received_by_sink}
 If the application is deliberate, every message sent via a valid relay $r$ will be received by the process $u$ with $RID(u) = r.sinkRID$.
\end{theorem}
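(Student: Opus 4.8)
The plan is to prove the statement by induction on $r.level$. This measure is well defined and finite: by Properties~\ref{item:vr:sink}) and \ref{item:vr:non_sink}) of Definition~\ref{def:valid_relay}, either $r$ is a sink with $r.level = 0$ and $r.sinkRID = RID(r)$, or there is a valid relay $r'$ with $r'.ID = r.out.ID$, $r'.level = r.level - 1$ and $r'.sinkRID = r.sinkRID$; iterating yields a finite chain $r = r_1, r_2, \dots, r_k$ of valid relays with $r_{i+1}.ID = r_i.out.ID$, $r_k.out.ID = \bot$, strictly decreasing levels, and $r_i.sinkRID = r.sinkRID$ for all $i$. The induction hypothesis I would use is, for every valid relay $X$ of level $\ell$: (i) if $X$ is a sink, every message placed into $X.Buf$ is eventually processed by the owner of $X$, which has RID $X.sinkRID = RID(X)$; (ii) if $X$ is not a sink, every \transmit{}-message residing in $X.Buf$ with header $(\cdot, X.ID, X.out.ID)$ eventually has its content (relay parameters resolved into fresh relays at the sink) received by the process with RID $X.sinkRID$. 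The theorem is then the case $X = r$, since $\send{\myref{r}, \cdot}$ (Listing~\ref{lst:rl:send}) places exactly such a message in $r.Buf$.

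For $\ell = 0$, $X$ is a sink, and the link-layer progress assumption together with weak fairness deliver the message in $X.Buf$ to the owner of $X$ --- which, since the application is deliberate, is not stopped while it owns a sink relay with incoming connections, and which has RID $RID(X) = X.sinkRID$. For the inductive step with non-sink $X$ of level $\ell \ge 1$, let $Y$ be the valid relay with $Y.ID = X.out.ID$ (Property~\ref{item:vr:non_sink:nextValid})), so $Y.level = \ell - 1$ and $Y.sinkRID = X.sinkRID$ (Property~\ref{item:vr:non_sink:level_and_rid_correct})). The link layer eventually delivers $m = ((key, X.ID, X.out.ID), action(\dots))$ to $RL(Y)$, which runs Listing~\ref{lst:rl:ch_added}. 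If $m$ has a valid header for $Y$ (by Property~\ref{item:vr:non_sink:one_key_correct}) this holds for at least one ``good'' key in $X.out.Key$), the first branch fires: if $Y$ is a sink, $action$ with its relay parameters instantiated is appended to $Y.Buf$, and (i) delivers it to the owner of $Y$, with $RID(Y) = Y.sinkRID = X.sinkRID$; if $Y$ is not a sink, $m$ is re-headered to $(\cdot, Y.ID, Y.out.ID)$ and appended to $Y.Buf$, so (ii) applied to $Y$ (level $\ell-1$) completes the step via $Y.sinkRID = X.sinkRID$.

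Two points need care. First, $\send{}$ and forwarding pick an \emph{arbitrary} key from $X.out.Key$, which in a not-fully-legal state may also contain keys that are not good; with such a key $m$ has no valid header for $Y$, but $RL(Y)$ still finds the alive relay $Y$ with $Y.ID = outID$ and returns \notauthorized{m} to $RL(X)$, whereupon (Listing~\ref{lst:rl:notauthorized}) $X$ drops that key and, while $X.out.Key \ne \emptyset$, re-inserts $m$ with another key. Since by Property~\ref{item:vr:non_sink:one_key_correct}) a valid $X$ always keeps a good key, and a message carrying a good key never causes \notauthorized{} at $Y$ (the first branch of Listing~\ref{lst:rl:ch_added} is taken instead), no good key is ever removed, $X$ is not deleted by this mechanism, and --- as $X.out.Key$ is finite and gains no keys while $X$ is valid --- after finitely many bounces $m$ is re-sent with a good key, reducing to the previous paragraph.

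Second --- and this is where I expect the real difficulty to lie --- one must show that the chain $r_1,\dots,r_k$ and the good-key entries in $r_{i+1}.In$ for the link $r_i \to r_{i+1}$ persist until $m$ has traversed them. Here I would argue: a deliberate application never deletes $r_{i+1}$ while $r_{i+1}.In \ne \emptyset$, and $r_{i+1}.In$ keeps the relevant entry as long as $r_i$ exists pointing to $r_{i+1}$, since the \inrelayclosed{} that would retire it is emitted by $RL(r_i)$ only after $r_i.Buf$ is empty, i.e.\ after $m$ has left $r_i$; and the relay layer's \timeout{}, \ping{}, \outrelayclosed{}, \inrelayclosed{} and \notauthorized{} handlers never delete a valid relay or corrupt its level, sinkRID or out-key set --- which is exactly what the ``no such message in the system'' clauses of Definition~\ref{def:valid_relay} (Properties~\ref{item:vr:ping}), \ref{item:vr:outrelayclosed}), \ref{item:vr:notauthorized}), \ref{item:vr:non_sink:inrelayclosed})) forbid, so one has to verify that these clauses keep holding for the relays on the path until $m$ passes each of them. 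Deletion of $r_1 = r$ itself by the application is harmless: \textbf{delete}~\myref{r} only marks $r$ dead and clears $r.In$ while keeping $r.out.ID$ and $r.Buf$, and $r$ is removed completely only once $r.Buf$ is empty, so $m$ is still forwarded. Combining the induction with this persistence analysis yields the theorem, the persistence analysis being the main obstacle.
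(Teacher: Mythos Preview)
Your approach---induction on $r.level$, i.e., walk the chain of valid relays down to the sink---is exactly the idea behind the paper's proof, but the paper's proof is a single sentence: it merely asserts that, by Properties~\ref{item:vr:rout_key} and~\ref{item:vr:rin_stores_triples} and the pseudocode, ``the message is always forwarded from $r$ to $r'$ \dots\ until it is received by a process with $r.out.ID = \bot$.'' It does not engage at all with either of the two subtleties you isolate.

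Your bad-key analysis (that $\send{}$ and forwarding pick an \emph{arbitrary} key, which need not be the one guaranteed by Property~\ref{item:vr:non_sink:one_key_correct}), and that this is harmless because the \notauthorized{} bounce eventually selects a good key without ever removing the good one) is a genuine point the paper's one-liner skips; your argument for it is correct. Your persistence concern---that the chain and the matching $In$-entries must survive until $m$ has passed---is what the paper offloads to Lemma~\ref{lem:delete_never_called} (the relay layer never internally calls \textbf{delete} on an alive relay satisfying Property~\ref{item:vr:non_sink}, which is all that is needed here) together with the invariance arguments of Lemma~\ref{lem:valid_relay_remains_valid}; your sketch is essentially a reconstruction of those lemmas inline, and your identification of the ``no such message in the system'' clauses (Properties~\ref{item:vr:ping}, \ref{item:vr:outrelayclosed}, \ref{item:vr:notauthorized}, \ref{item:vr:non_sink:inrelayclosed}) as the relevant invariants is on target. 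In short: same route, but you actually carry out the walk where the paper just points at the path.
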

Thus the process $u$ is also called the \emph{sink process} of $r$.
Observe that in the valid relay graph, every relay $r$ is connected via a directed path to some process $v$, which is the sink process of $r$.

\begin{theorem}\label{thm:rl:valid_relay_initially_will_remain_valid}
 If the application is deliberate, for every computation that starts in a legal state every state is legal.
\end{theorem}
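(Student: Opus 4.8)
The plan is to prove closure by induction over the states of the computation, showing that legality is preserved by every single action. It is convenient to unfold the definition: a state $S$ is legal iff (i) every relay of $S$ is valid in the sense of Definition~\ref{def:valid_relay}, and (ii) every relay parameter occurring inside a message in some buffer references a valid relay and carries that relay's correct id, level and sinkRID. So, assuming $S_i$ is legal and $a$ is an action enabled in $S_i$, I would show that the successor $S_{i+1}$ again satisfies (i) and (ii); the base case is exactly the hypothesis of the theorem.

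The induction step is a (long) case analysis over $a$. The easy cases: the read-only primitives (\textbf{getRelays}, \textbf{incoming}, \textbf{direct}, \textbf{is-sink}, \textbf{dead}, \textbf{same-target}) and any primitive invoked on a relay not owned by the caller change nothing; \textbf{new Relay} creates a fresh sink with empty $In$, $out=(\{\},\bot)$, $level=0$, $sinkRID=RID(r)$ and a globally unique id, so Properties~\ref{item:vr:state_alive})--\ref{item:vr:rout_key}), \ref{item:vr:sink}) hold and \ref{item:vr:in_key_unique})--\ref{item:vr:newprop}) are vacuous, and nothing else is touched; \textbf{merge} has a precondition forcing every merged relay to be alive with empty $In$ and common $out.ID$, $level$, $sinkRID$, so the new relay is valid (feed the inductive hypothesis for one merged relay into Property~\ref{item:vr:non_sink:one_key_correct})) and no valid relay held a confirmed key into a deleted relay, leaving only the re-checking of Properties~\ref{item:vr:rin_stores_triples}) and \ref{item:vr:newprop}) for relays that forwarded through a merged relay. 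A useful general simplification is that, in a legal state, Properties~\ref{item:vr:ping})--\ref{item:vr:notauthorized}) and \ref{item:vr:non_sink:one_key_correct})--\ref{item:vr:non_sink:inrelayclosed}) guarantee that no \ping{}/\outrelayclosed{}/\notauthorized{}/\probefail{}/\inrelayclosed{} message and no invalid-header \transmit{} message exists that concerns a valid relay or a valid key; hence the handlers \notauthorized{}, \outrelayclosed{}, \probefail{} and the failure branches of \transmit{} and \ping{} only ever touch already-stale data, whose removal from some $.In$ or $.out.Key$ preserves every Property, and the only relay ever removed by such a handler is one left with empty $out.Key$, which is not valid and hence absent in a legal state.

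This leaves three substantive cases. For \textbf{delete} \myref{r} and \textbf{stop} I invoke the \emph{deliberate} hypothesis: the relay being closed has empty $In$, so by Property~\ref{item:vr:non_sink:one_key_correct}) no valid relay holds a confirmed key into it, and by the simplification above there are no pending control messages about it; the \outrelayclosed{} messages emitted by \textbf{delete} range over the empty $In$ and so none are produced; hence $r$ leaves the valid relay graph consistently and no other relay or parameter becomes invalid. For \textbf{send} / the \transmit{} handler the heart of the argument is tracking the witness demanded by Property~\ref{item:vr:newprop}): immediately after \send{\myref{r},action(\dots,\myref{s},\dots)} the fresh entry $(key,\bot,r)$ added to $s.In$ is witnessed by the message just placed into $r.Buf$ (the ``message in $r_j.Buf$'' disjunct with $j=1$), and no \probe{} mentions the fresh $key$; each forwarding hop moves that witness from $r_j$ to $r_{j+1}$ while only enlarging the set of relays on which a competing \probe{} is forbidden; and when the message is consumed at the sink, the newly created relay $s'$ (with $s'.out=(\{key\},s.ID)$, $s'.level=s.level+1$, $s'.sinkRID=s.sinkRID$, a globally fresh id, and the \probe{} seeded into $s'.Buf$) is valid and simultaneously becomes the ``probe in transit'' witness for $s$'s entry, while the emitted relay parameters $(key,s.ID,s.level+1,s.sinkRID)$ are valid by construction. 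One also checks that the ``arbitrary $key'\in r'.out.Key$'' used on forwarding is at worst stale (a \notauthorized{} then removes it, leaving the valid key guaranteed by Property~\ref{item:vr:non_sink:one_key_correct})) and never invalidates $r'$. For \timeout I would go line by line: the corrections of $level$, $sinkRID$, $out.Key$ and the removals from $.In$ only touch values that already violate a Property and are therefore no-ops in a legal state; the \ping{} messages emitted for confirmed keys carry the correct $level$ and $sinkRID$ and so are permitted by Property~\ref{item:vr:ping}); and the \probe{} emitted for a relay $r$ lands in $r.Buf$ itself, hence on the initiating relay's own path, so it is not a forbidden competing probe, while its $controlKeys$ is exactly the set of pending keys currently absent from $r.Buf$, consistent with the witness bookkeeping.

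The step I expect to be the real obstacle is Property~\ref{item:vr:newprop}) (the \probe{}/\probefail{} invariant), intertwined with Properties~\ref{item:vr:non_sink:one_key_correct}) and \ref{item:vr:non_sink:inrelayclosed}): these are the only clauses that couple the relay-graph structure with the multiset of in-transit control messages, so essentially every action has to be re-examined against them, and the delicate part is precisely the bookkeeping of which relays on the path from an unconfirmed relay to its sink currently hold the witnessing message (or a \probe{} that might overtake it). By contrast, the purely local Properties~\ref{item:vr:state_alive})--\ref{item:vr:rout_key}), \ref{item:vr:in_key_unique}), \ref{item:vr:sink}), \ref{item:vr:non_sink:outID})--\ref{item:vr:non_sink:level_and_rid_correct}) are routine to re-verify for each action.
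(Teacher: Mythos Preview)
Your high-level strategy---induction on states with a case analysis over the enabled action---is exactly the right shape, and your identification of Property~\ref{item:vr:newprop} as the crux is accurate. However, the paper organises the closure argument quite differently, and the difference exposes a weakness in your unfolding of ``legal state''.

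The paper does not carry out the action-by-action analysis inside the proof of the theorem. Instead it isolates an auxiliary notion, \emph{valid relay parameter} (an 11-clause definition paralleling Definition~\ref{def:valid_relay}), and proves four preservation lemmas: valid relays stay valid unless deleted (Lemma~\ref{lem:valid_relay_remains_valid}); relay parameters created from valid relays via valid relays are valid and remain valid until consumed (Lemma~\ref{lem:valid_relay_parameter_created_is_valid}); a valid relay parameter consumed at a sink yields a valid relay (Lemma~\ref{lem:valid_relay_parameter_yields_valid_relay}); and merging a set containing a valid relay yields a valid relay (Lemma~\ref{lem:merge_from_valid_creates_valid}). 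The theorem then follows in a few lines by observing that relays arise only via \textbf{new Relay}, receipt of a relay parameter, or \textbf{merge}. All the hard work you describe for \textbf{send}/\transmit{} and the Property~\ref{item:vr:newprop} bookkeeping is packaged into Lemma~\ref{lem:valid_relay_parameter_created_is_valid}, which tracks the parameter as it moves through buffers.

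Your unfolding of condition~(ii) as ``every relay parameter references a valid relay and carries that relay's correct id, level and sinkRID'' is too weak to make your induction close. The paper's valid-relay-parameter invariant additionally requires (among other things) that no relay in the system already holds the parameter's key in its $out.Key$, that no other parameter or message uses the key, and that any \probe{} carrying the key in $controlKeys$ sits strictly behind the parameter on the relay path. Without these clauses you cannot conclude, at the moment the parameter is consumed at the sink, that the freshly created relay satisfies Property~\ref{item:vr:non_sink:out_key_unique}) and Property~\ref{item:vr:non_sink:one_key_correct}), nor that the check in Line~\ref{line:rl:receipt_no_double_keys} succeeds. If you strengthen your~(ii) to the full valid-relay-parameter definition (or derive the missing clauses from Property~\ref{item:vr:newprop} of the referencing relay), your direct approach goes through; but at that point you are essentially reproving Lemmas~\ref{lem:valid_relay_remains_valid}--\ref{lem:valid_relay_parameter_yields_valid_relay} inline.
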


\begin{theorem}\label{thm:rl:fair_delete_no_indirect_forwarding}
 If the application is deliberate, and does not send the reference of an indirect relay (i.e., a relay $r$ such that $direct(r) = false$) and does not send any reference via a relay that is not valid, every computation will reach a legal state.
\end{theorem}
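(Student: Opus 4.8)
The plan is to prove convergence in three phases, each establishing an invariant that is preserved forever thereafter under the stated restrictions (deliberate application, never sends a reference to an indirect relay, never sends a reference via an invalid relay); the engine driving each phase is that every relay layer's \timeout action runs infinitely often (weak fairness) and that the link layer eventually delivers and removes every buffered message. Crucially, the restrictions guarantee that no \emph{new} corruption is ever injected: the only \textsc{transmit} messages the application originates are sent via valid relays and carry only relay parameters for direct relays, and every internal message that \timeout (or a message handler) generates from a relay whose \emph{local} state is already clean carries only self-consistent information. So the whole difficulty is draining the finitely much corruption present in the initial state.

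\emph{Phase 1 (local sanity).} I would first show that after finitely many \timeout executions, and forever after, every relay $r$ in the system satisfies the purely local parts of Definition~\ref{def:valid_relay}: $r.state \in \{alive, dead\}$, $r.out$ is a well-formed $(Key,ID)$ pair, $r.In$ consists only of well-formed triples $(key,RID,\bot)$ or $(key,\bot,r'')$ whose keys are locally unique and belong to $RID(r)$ (Properties~\ref{item:vr:state_alive}--\ref{item:vr:in_key_unique}), and the local half of the sink/non-sink dichotomy holds --- sinks have $r.out=(\{\},\bot)$, $r.level=0$, $r.sinkRID=RID(r)$, and non-sinks have $r.out.ID\neq\bot$ with non-empty $r.out.Key$ and no two same-process relays sharing an $out.Key$ element. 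Every one of these is explicitly restored by the code of Listing~\ref{lst:rl:timeout}, and once restored it stays restored, because the only actions that could re-violate it are driven by messages that Phase 2 will remove; the level- and $sinkRID$-repairs that \emph{are} message-driven are therefore deferred to Phase 2.

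\emph{Phase 2 (draining the corrupt message pool).} Using that the initial state has finitely many messages and that buffers are insert-only, I would define the ``corrupt pool'' as the initially-present messages together with all of their descendants: forwarded hops of a corrupt \textsc{transmit}, the \textsc{probe} activations and the new relays a corrupt \textsc{transmit} spawns when it reaches a sink, the \textsc{probefail}, \textsc{in-relay-closed}, \textsc{out-relay-closed} and \textsc{not-authorized} messages generated in response, and the \textsc{ping}s that \timeout generates for $In$-entries that themselves stem from the pool. The heart of the argument is a termination measure --- roughly, the sum over pool messages of the number of hops each still has to travel toward a sink or back toward an initiator along the relevant finite relay chain, with a lexicographic tie-break accounting for the bounded follow-up traffic each consumption emits --- that strictly decreases at every step, so the pool empties. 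Once it is empty, the message-dependent conditions of a valid relay hold (no bad \textsc{ping}, \textsc{out-relay-closed}, \textsc{not-authorized} or \textsc{probefail}; Properties~\ref{item:vr:ping}, \ref{item:vr:outrelayclosed}, \ref{item:vr:notauthorized}), and the deferred level/$sinkRID$-repairs of Phase 1 stabilize. From here on every surviving relay meets all of Definition~\ref{def:valid_relay} except possibly the cross-relay linkage Properties~\ref{item:vr:non_sink:nextValid}--\ref{item:vr:non_sink:inrelayclosed} and the probe handshake of Property~\ref{item:vr:newprop}.

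\emph{Phase 3 (linkage and the probe handshake).} Finally I would clean up chains of non-sink relays from the sink end upward. A surviving non-sink relay $r$ \textsc{ping}s the relay layer of its target $r.out.ID$ at every \timeout; if no matching relay lives there, $r$ is sent an \textsc{in-relay-closed} and deleted (Listings~\ref{lst:rl:ping} and~\ref{lst:rl:inrelayclosed}), and if one lives there but the levels are inconsistent, $r.level$ is corrected downward, or $r$ is deleted in exactly the case in which keeping it would create a cycle. Because the application is deliberate, no relay with a non-empty $In$ set is deleted out from under its predecessors, so these deletions propagate consistently and leave only relays whose outgoing target is a valid relay with $r.level = r'.level+1$ and matching $sinkRID$; a symmetric argument using the \textsc{probe}/\textsc{probefail} handshake removes every $(key,\bot,r'')$ entry of an $In$ set for which no matching in-transit message exists, while --- by the Phase-2 emptiness of the pool --- the only relay parameters still buffered are the \textsc{probe} activations attached to freshly and legitimately created relays, which are valid. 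Hence the relay graph equals its valid relay graph, i.e.\ the state is legal, and by Theorem~\ref{thm:rl:valid_relay_initially_will_remain_valid} it stays legal. The step I expect to be hardest is Phase 2: pinning down a measure that provably decreases even though consuming one corrupt message can emit several new ones, while simultaneously certifying that those emitted messages do not themselves count as ``new'' corruption once the local state is clean --- and, intertwined with this, getting the \textsc{probe}/\textsc{probefail} bookkeeping of Property~\ref{item:vr:newprop} exactly right, since that is the invariant tying unconfirmed $In$-entries to messages genuinely in flight.
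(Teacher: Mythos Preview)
Your three-phase decomposition by property type (local sanity, then drain the message pool, then repair linkage) is a defensible alternative, but the paper organizes the argument differently and thereby sidesteps your hardest step. It inducts on relay \emph{level}: first every sink becomes valid (Lemma~\ref{lem:sink_becomes_valid}), then every relay whose target is an already-valid relay becomes valid or gets deleted (Lemma~\ref{lem:alive_non_sink_becomes_valid_or_gets_deleted}), and the hypothesis that the application sends only references to \emph{direct} relays, and only via valid relays, then guarantees that once all direct relays are valid no new invalid relay or relay parameter can ever appear (Lemmas~\ref{lem:valid_relay_parameter_created_is_valid}, \ref{lem:valid_relay_parameter_yields_valid_relay}, \ref{lem:merge_from_valid_creates_valid}). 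The message-draining you isolate as Phase~2 is packaged into these per-relay lemmas rather than handled by one global termination measure: each lemma only needs the finitely many initial control messages touching \emph{that} relay to be consumed, which is much easier to make precise than a potential over an evolving ``corrupt pool'' whose members spawn further messages. Your route could be completed, but the direct-only restriction is exactly what lets the paper stop after levels~$0$ and~$1$ instead of chasing arbitrary chains as your Phase~3 attempts.

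One concrete protocol error in your Phase~3: you have the \textsc{ping} direction backwards. A relay $r$ does not ping its target $r.out.ID$; rather, $r$'s \emph{target} $r'$ sends \ping{r'.ID,\ldots,key} to each process recorded in $r'.In$ (Line~\ref{line:rl:ping_send}), and the recipient either corrects its own $level$/$sinkRID$ (deleting itself on a level underflow, Line~\ref{line:rl:ping_delete}) or replies with \textsc{in-relay-closed}, which removes an entry from $r'.In$ --- it does not delete $r$. The mechanism that kills $r$ when its target is absent is \textsc{out-relay-closed}, triggered when one of $r$'s own outgoing messages (e.g.\ the periodic \textsc{probe}) reaches a relay layer with no matching alive relay (Line~\ref{line:rl:messagereceipt_outrelayclosed}). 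As written, your Phase~3 cleanup runs in the wrong direction.
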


This implies:
\begin{corollary}
 If the application does not issue any commands, starting from any initial state $S$ the system will reach a state $S'$ such $S'$ and every subsequent state are legal.
\end{corollary}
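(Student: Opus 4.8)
The plan is to obtain the corollary as an immediate consequence of Theorems~\ref{thm:rl:fair_delete_no_indirect_forwarding} and~\ref{thm:rl:valid_relay_initially_will_remain_valid}; essentially all that has to be checked is that an application issuing no commands satisfies the hypotheses of both theorems, after which convergence and closure can simply be chained together.

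First I would verify that an application that never issues a command is \emph{deliberate} in the sense of Section~\ref{sec:legal_state}: since it never invokes \textbf{delete} $\myref{r}$ and never invokes \textbf{stop}, the requirement ``it does not delete a relay $r'$ with $r'.In \ne \emptyset$ (and does not call \textbf{stop} while there are sink relays with incoming connections)'' holds vacuously. For the same reason — it never calls \send{} — it trivially does not send the reference of an indirect relay and does not send any reference via a non-valid relay. Hence all three hypotheses of Theorem~\ref{thm:rl:fair_delete_no_indirect_forwarding} are met, so every computation starting in $S$ reaches some legal state $S'$.

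Second, I would apply Theorem~\ref{thm:rl:valid_relay_initially_will_remain_valid} to the suffix of the computation that starts at $S'$: the application is still deliberate there (it continues to issue no commands), $S'$ is legal, and therefore every state from $S'$ onward is legal. Concatenating the two parts, the computation starting in $S$ reaches a legal state $S'$ all of whose successors are legal, which is precisely the statement.

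The only subtlety worth spelling out — and the nearest thing to an obstacle — is that ``issues no commands'' refers only to the process-level primitives (\textbf{new Relay}, \textbf{delete}, \textbf{merge}, \send{}, \textbf{stop}); the relay layer still runs its periodic \timeout action and keeps handling its internal messages, so the system is not frozen, and it is exactly this background activity that the cited theorems analyze. Once this is clarified, no further argument is needed: the corollary is the composition of the convergence statement of Theorem~\ref{thm:rl:fair_delete_no_indirect_forwarding} with the closure statement of Theorem~\ref{thm:rl:valid_relay_initially_will_remain_valid}.
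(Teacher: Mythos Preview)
Your proposal is correct and matches the paper's approach: the paper simply writes ``This implies'' after Theorem~\ref{thm:rl:fair_delete_no_indirect_forwarding} and states the corollary without further proof, so your explicit verification that a command-free application vacuously satisfies the hypotheses of both Theorem~\ref{thm:rl:fair_delete_no_indirect_forwarding} (for convergence) and Theorem~\ref{thm:rl:valid_relay_initially_will_remain_valid} (for closure) is exactly the intended argument, just spelled out in more detail than the paper bothers with.
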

Note that this resembles the classical definition of self-stabilization in which it is assumed that starting from the initial state no change occurs to the system other than by the self-stabilizing protocol.

Since the relay layer of a process that issues \textbf{stop} is not always shutdown immediately, the following is important as well:
\begin{theorem}\label{thm:rl_of_stopped_process_will_eventually_be_destroyed}
 If the application does not keep an indirect relay for an infinite time, all relay layers of inactive processes will eventually be shut down.
\end{theorem}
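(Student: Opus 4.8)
The plan is to fix an arbitrary process $v$ that is inactive (either from the start of the computation or after executing \textbf{stop}) and to show that $RL(v)$ eventually owns no relay at all; the last line of the \timeout pseudocode (Listing~\ref{lst:rl:timeout}) then shuts $RL(v)$ down, and since $v$ is arbitrary the theorem follows. The first, routine observation is that once $v$ is inactive the set of relays owned by $RL(v)$ is non-increasing: a new relay owned by $RL(v)$ can only be created by $v$ calling \textbf{new Relay} or \textbf{merge}, or by a message arriving at an \emph{alive} sink relay of $v$ (Listing~\ref{lst:rl:ch_added}), but an inactive $v$ issues no commands and all of its sink relays were marked dead and had their $.In$ emptied at the moment \textbf{stop} was executed. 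It therefore suffices to show that every relay owned by $RL(v)$ at the time $v$ becomes inactive is eventually removed completely.

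Sink relays are disposed of easily: a dead sink relay $r$ has $r.In=\{\}$, never occurs as the third component of a triple $(key,\bot,\cdot)$ in any $.In$ set (those are produced in Listing~\ref{lst:rl:send} only for non-sink relays), and its buffer is drained by the link layer, so the dead/$r.out.ID=\bot$ branch of \timeout removes it. The core of the argument concerns the non-sink relays owned by $v$. For such a relay $r$, \timeout forces $r.level\ge1$ after one execution, so every relay $s$ with $s.out.ID=r.ID$ is, after the \ping{} handshake that \timeout triggers (the \ping{} emission in Listing~\ref{lst:rl:timeout} and its handler in Listing~\ref{lst:rl:ping}), either deleted or brought to $s.level=r.level+1\ge2$, i.e.\ made \emph{indirect}. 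The key finiteness fact is that only finitely many relays ever satisfy $s.out.ID=r.ID$ after $v$ becomes inactive: such a relay is created only when a message carrying a relay parameter with id $r.ID$ reaches a sink, relay parameters with id $r.ID$ are generated only when $\myref{r}$ is handed to \send{} by the owner of $r$ --- that is, by $v$ --- and forwarding neither copies nor creates relay parameters, while the initial state contains only finitely many messages; hence the supply of such parameters is finite and, since $v$ is inactive, never replenished. By the hypothesis of the theorem each such indirect $s$ is eventually deleted by the application and (by the same clean-up mechanism applied to $s$) eventually removed completely, which causes an \inrelayclosed{} message to be sent to $RL(v)$ that erases the corresponding $(key,RID,\bot)$ entry from $r.In$; entries of $r.In$ that do not correspond to any such relay are erased through the \ping{}/\inrelayclosed{} round trip that \timeout performs, using the standing assumption that every RID occurring in the system belongs to an existing relay layer. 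In parallel, the finitely many $(key,\bot,r')$ entries of $r.In$ (all with $r'$ owned by $v$) are resolved by the probe mechanism and message delivery: each is either upgraded to a $(key,RID,\bot)$ entry once the in-flight message that carries the corresponding reference reaches a sink, or removed via a \probefail{}, \notauthorized{}, \outrelayclosed{} or \inrelayclosed{} message once that message can no longer be delivered. Consequently $r.In$ becomes, and stays, empty.

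Once $r.In=\{\}$ holds permanently, \timeout stops putting fresh \probe{} messages into $r.Buf$ and no relay can still forward a message through $r$ (its potential senders have all been deleted), so $r.Buf$ is drained by the link layer; the clean-up rule that $RL(v)$ runs for inactive processes then marks $r$ as dead, and, once the (again finitely many, non-replenishable) messages sent via $r$ have resolved so that all $(key,\bot,r)$ entries in $v$'s relays have vanished and $r.Buf=\emptyset$, the dead/$r.out.ID\ne\bot$ branch of \timeout sends the final \inrelayclosed{} and removes $r$ completely. Applying this to the finitely many relays of $v$ and invoking weak fairness yields that $RL(v)$ eventually owns no relay and shuts down.

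I expect the main obstacle to be making the ``finite, non-replenishable, eventually resolved'' bookkeeping precise for all three kinds of residual state attached to a relay simultaneously --- the confirmed incoming entries $(key,RID,\bot)$, the unconfirmed entries $(key,\bot,r')$, and the messages still being forwarded through the relay --- because their disappearances are mutually dependent: a relay $s$ is removed only after its own residual state is cleared, which can involve yet further relays, and during the transient the relay graph need not be acyclic, so a message might be forwarded around a (soon-to-be-broken) cycle a bounded number of times. I would handle this by exhibiting, for each inactive $v$, a finite collection of relays and messages that no action enlarges once $v$ is inactive, ordering relays by their eventually-stable $level$ value so that a relay only ever waits on strictly higher-level relays (or on messages that resolve independently), and then closing the argument by well-founded induction on that order together with weak fairness; the invariants established for the probe mechanism in the legality analysis (Definition~\ref{def:valid_relay} and the proof of Theorem~\ref{thm:rl:fair_delete_no_indirect_forwarding}) are reused here to guarantee that every probe and every closure message indeed reaches its intended relay.
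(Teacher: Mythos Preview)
Your proposal is correct and follows essentially the same approach as the paper's proof: for each relay $r$ of an inactive process, argue that the confirmed entries $(key,RID,\bot)$ in $r.In$ vanish via the \ping{}/\inrelayclosed{} round trip (using the hypothesis to guarantee that any relay pointing to $r$, once forced to level~$\ge2$, is eventually deleted), that the unconfirmed entries $(key,\bot,r'')$ vanish via the probe mechanism or via deletion of $r''$, and that once $r.In=\emptyset$ the buffer drains and $r$ is removed. Your explicit finiteness bookkeeping and the anticipated cycle-breaking via \ping{}-induced deletions mirror exactly what the paper does (the latter somewhat tersely), so there is no substantive difference in route.
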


\section{Universality of the relay approach}\label{sec:universality}
We introduce three rules for the manipulation of edges of a relay graph and show that they are universal, i.e., using them it is possible to get from any arbitrary weakly connected valid relay graph to any other weakly connected valid relay graph involving the same set of processes.
For simplicity, in this section any relay graphs we consider are assumed to be valid relay graphs.
The rules we present are an adaptation of known rules introduced by Koutsopoulos et al.~\cite{KoutsopoulosSS15} (defined below) to our relay model.
In that work, the authors proved these rules to be universal in the common model, which we will rely on in our proofs.
For convenience, in the following, for a relay $r$, we denote the process that stores the sink relay of $r$ as the \emph{sink process of $r$}.
Furthermore, we say a process $u$ has a \emph{relay $r$ to} another process $v$ if $v$ is the sink process of $r$, and $u$ stores $\myref{r}$ in one of its variables or there is a message in transit to $u$ that will cause such a reference to be created upon receipt.
Additionally, a relay $r$ is called a \emph{direct relay} if and only if $\direct(r)$ evaluates to true.
Otherwise, $r$ is called \emph{indirect}.
The set \relayprims of relay rules consists of the following rules:
\begin{description}
\item[Relay Introduction] Assume a process $u$ has a relay $r$ to a process $v$ and another relay $s$ to a process $w$.
  Then $u$ may send $\myref{s}$ to $v$ (via $r$).
\item[Relay Fusion] Assume a process $u$ has two relays $r$ and $r'$ with $\sameTarget(\myref{r},\myref{r}')$. 
    Then $u$ may merge the two relays.
\item[Relay Reversal] Assume a process $u$ has two relays $r$ and $s$ such that $r \ne s$ and $\incoming(r) = 0$. 
  Then $u$ may send $\myref{s}$ via $r$ and subsequently delete $r$.
\end{description}
Examples of these rules are presented in Figure~\ref{fig:relays:prims}.
\begin{figure}
 \begin{subfigure}[b]{\textwidth}\centering
   \begin{minipage}{0.48\textwidth}\centering
   \includegraphics{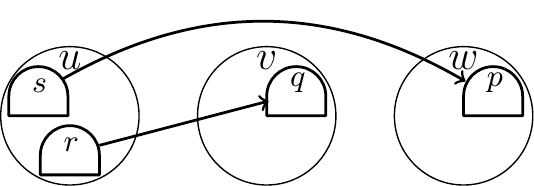}
   \end{minipage}
   \hfill
   \begin{minipage}{0.40\textwidth}\centering
   \includegraphics{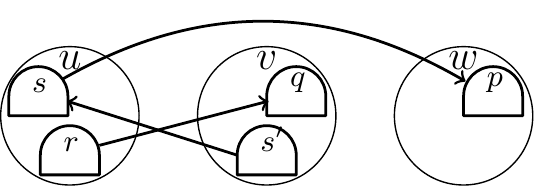}
   \end{minipage}
   \centering
   \caption{Left: Initial situation for \textbf{Relay Introduction}. Right: After $u$ has sent $\myref{s}$ to $v$ (via $r$). }
 \end{subfigure}
 
 \hfill
 
 \begin{subfigure}[b]{\textwidth}\centering
   \begin{minipage}{0.48\textwidth}\centering
   \includegraphics{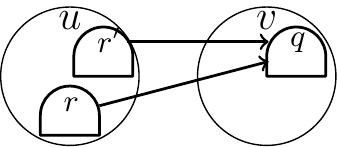}
   \end{minipage}
   \hfill
   \begin{minipage}{0.40\textwidth}\centering
   \includegraphics{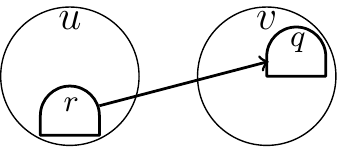}
   \end{minipage}
   \centering
   \caption{Left: Initial situation for \textbf{Relay Fusion}. %
   Right: After $u$ has deleted $r'$. }
 \end{subfigure}
 
  \hfill

   \begin{subfigure}[b]{\textwidth}\centering
   \begin{minipage}{0.48\textwidth}\centering
   \includegraphics{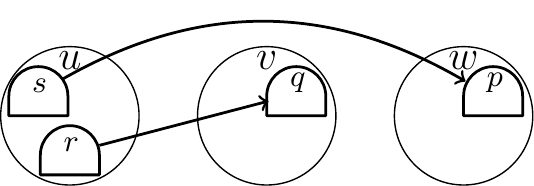}
   \end{minipage}
   \hfill
   \begin{minipage}{0.40\textwidth}\centering
   \includegraphics{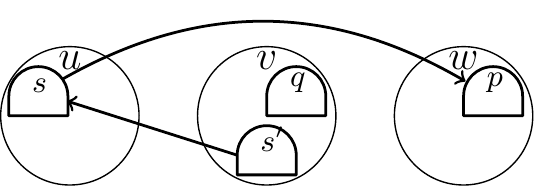}
   \end{minipage}   
   \centering
   \caption{Left: Initial situation for \textbf{Relay Reversal}. Right: After $u$ has sent $\myref{s}$ to $v$ (via $r$) and deleted $r$.}\label{fig:relays:primsC}
 \end{subfigure}

 \caption{Visualization of the rules in \relayprims.}\label{fig:relays:prims}
\end{figure}
The following is easy to show:
\begin{theorem}\label{thm:connectivity}
 \relayprims preserves weak connectivity, i.e., if any of the rules is applied to a weakly connected relay graph $G$, then the resulting graph $G'$ is also weakly connected.
\end{theorem}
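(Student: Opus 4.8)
The plan is to treat each of the three rules in \relayprims separately and, for each one, argue that the set of edges added to or removed from the relay graph cannot destroy weak connectivity. Throughout I will work with the underlying undirected graph obtained from the relay graph $G=(V,E)$, so that "weakly connected" just means "connected", and I will use the two basic facts that (a) adding an edge to a connected graph keeps it connected, and (b) removing an edge $(x,y)$ from a connected graph keeps it connected provided $x$ and $y$ remain joined by some other path. The key observation I will exploit for the deletion cases is that when a process $u$ executes \textbf{send}$(\myref{r}, action(\dots,\myref{s},\dots))$, the relay layer adds an implicit edge (in $E_{Ch}$) to the graph — a reference to $s$ travelling through $r$'s buffer toward $r$'s sink — and only afterwards is $r$ deleted; so the "new" connectivity is in place before the "old" connectivity is withdrawn.

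First I would handle \textbf{Relay Introduction}. Here $u$ merely calls \textbf{send}$(\myref{r}, action(\myref{s}))$, which adds a message carrying a reference to $s$ into $r.Buf$; by the definition of the relay graph this adds one implicit edge $(r, s')$-type connection (more precisely an edge from $r$ to $s$ via the implicit-edge clause), and possibly later an explicit path once the message is received. No edge is removed. By fact (a), $G'\supseteq G$ on the vertex set $V$ (up to the transient creation of a new relay at the sink, which is itself attached to the graph), so $G'$ is connected. Next, \textbf{Relay Fusion}: $u$ calls \textbf{merge}$(\{r,r'\})$, which by the primitive's specification requires $r.out.ID = r'.out.ID = ID$ and replaces $r,r'$ by a single new relay $r''$ with $r''.out.ID = ID$, inheriting the union of the outgoing keys and buffers, and then deletes $r$ and $r'$. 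In the graph this replaces the two parallel-ish edges $u\!-\!r\!-\!w'$ and $u\!-\!r'\!-\!w'$ (where $w'$ is the common target, $w'.ID = ID$) by $u\!-\!r''\!-\!w'$; every vertex formerly reachable through $r$ or $r'$ is now reachable through $r''$, and in particular $u$ and $w'$ stay connected. Since $r$ and $r'$ have no other incident edges that are not reproduced at $r''$ (their $.In$ sets were required to be empty for \textbf{merge} to fire, and any reference in their buffers is carried over into $r''.Buf$), contracting/merging them cannot disconnect anything, so $G'$ is connected.

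The main obstacle is \textbf{Relay Reversal}, since it is the only rule that genuinely removes connectivity, namely the deletion of $r$. I would argue as follows. The precondition $\incoming(r)=0$ means $r.In=\emptyset$, so the only edges incident to $r$ in the relay graph are: the explicit edge $(u,r)$ (since $r$ is owned by $u$) and the explicit edge $(r, w_r)$ where $w_r$ is the target of $r$'s outgoing connection (or the sink process of $r$ if $r$ is a sink) — plus possibly some implicit edges from $r.Buf$, but those are outgoing from $r$ and go toward relays $r$ already reaches. The action first executes \textbf{send}$(\myref{r},action(\myref{s}))$: this deposits into $r.Buf$ a message carrying a reference to $s$, creating an implicit edge that, combined with $r$'s existing outgoing connection, yields a path from $u$ down through $r$ to $r$'s sink and onward to (a copy of) $s$, hence to $w$, the sink process of $s$, and also keeps $u$ connected to $w_r$. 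Only then is $r$ deleted. After deletion, any path in $G$ that used the edge $(u,r)$ as its first step can be re-routed: from $u$ we instead use $s$ (which $u$ still owns) to reach $w$, and from $w$ we can reach the former sink of $r$ because the reference to $s$ that was just sent through $r$ establishes — once delivered — a relay connection from (the sink process of) $r$'s old target back to $w$; on the implicit-edge level this connection is present in the graph immediately upon the \textbf{send}. Thus every pair of vertices that was connected via $r$ remains connected, and since $r$ had no incoming edges, no other part of the graph relied on $r$ as an intermediate hop. Hence $G'$ is connected.

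I would then remark that a fully rigorous version of the Reversal argument is exactly a graph-theoretic restatement of the corresponding universality lemma of Koutsopoulos et al.~\cite{KoutsopoulosSS15} for the common model, transported through the correspondence between relay-graph edges (explicit plus implicit) and edges in their model; the only genuinely new content is checking that the implicit edge created by \textbf{send} is already counted in the relay graph (which it is, by the $E_{Ch}$ clause of the relay-graph definition) so that the "add-before-remove" ordering does the work. I expect the write-up to be short, with the bulk of the care going into the bookkeeping of which implicit and explicit edges are present immediately after the \textbf{send} but before the \textbf{delete} in the Reversal case.
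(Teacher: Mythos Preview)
Your approach—case analysis on the three rules, with Relay Introduction only adding edges, Relay Fusion collapsing redundant parallel structure, and Relay Reversal creating an implicit $E_{Ch}$-edge via the \textbf{send} before the \textbf{delete}—is exactly the paper's; the paper in fact gives only a three-sentence sketch along these lines. One simplification for your Reversal case: you need not route through $w$ (the sink of $s$) or appeal to what happens ``once delivered''; the implicit edge deposited in $E_{Ch}$ by the \textbf{send} is the edge $(r,s)$ itself, so the undirected path $u$--$s$--$r$--(target of $r$)--$\cdots$--(sink process of $r$) is already present in $G'$, which is precisely the paper's one-line argument (cf.\ Figure~\ref{fig:relays:primsC}).
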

The idea of the proof is as follows: 
Relay Introduction does not delete any relay, thus its application cannot harm the connectivity of the relay graph.
Relay Fusion only merges redundant relays.
Last, Relay Reversal preserves weak connectivity because although $u$ deletes
a connection to the sink process of $r$, the message sent causes an edge from $r$ to $s$ (and thus there is an undirected path from $u$ to the sink process of $r$), see Figure~\ref{fig:relays:primsC}.

The universality of the three relay rules is given by the following theorem.
\begin{theorem}\label{thm:universality}
 The rules in \relayprims are universal in a sense that one can get from any weakly connected relay graph $G=(V, E)$ to any other weakly connected relay graph $G'=(V, E')$, where w.l.o.g. $E$ and $E'$ consist solely of explicit edges.
\end{theorem}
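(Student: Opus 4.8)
The plan is to reduce Theorem~\ref{thm:universality} to the known universality of the common-model rules of Koutsopoulos et al., of which \relayprims is the relay-level counterpart. To every valid relay graph $G$ I would associate its \emph{process graph} $\pi(G)$: the directed graph on the set $P$ of active processes with an edge $(u,v)$ exactly when $u$ has a relay to $v$ (in the sense already fixed in this section --- $u$ owns some $\myref{r}$ whose sink process is $v$, or a message in transit will create such a reference). By Theorem~\ref{thm:rl:messages_sent_via_valid_relay_will_be_received_by_sink} every relay has a unique sink process and is joined to it by a directed path; contracting each relay into its owner therefore collapses $G$ to a multigraph on $P$ whose weak connectivity agrees both with that of $G$ (every relay lies in the same component as its owner) and with that of $\pi(G)$ (two owners are adjacent in the contraction iff some relays of theirs share a sink process, hence iff they lie in the same $\pi$-component). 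So $G$ is weakly connected iff $\pi(G)$ is. Since a computation in which the application issues no commands keeps the relay graph legal while the link layer drains all messages (Theorem~\ref{thm:rl:valid_relay_initially_will_remain_valid}), I may also assume w.l.o.g.\ that there are no implicit edges at the start and the end, and re-establish this between simulated steps.

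I would then check that each rule of \relayprims simulates one common-model rule on $\pi$. \textbf{Relay Fusion} replaces two relays of $u$ with the same target by one; it leaves $\pi(G)$ unchanged, matching the fusion rule. \textbf{Relay Introduction} on relays $r$ (to $v$) and $s$ (to $w$) of $u$ delivers $\myref{s}$ to $v$, and the relay layer creates a relay owned by $v$ pointing at $s$; its sink process is again $w$, so $\pi(G)$ gains $(v,w)$ and keeps $(u,v),(u,w)$, matching the introduction rule. \textbf{Relay Reversal} on $r$ (to $v$) and $s$ (to $w$) with $\incoming(r)=0$ does the same step and then deletes $r$, so $\pi(G)$ gains $(v,w)$ and loses $(u,v)$, matching the reversal rule; moreover here $r.In=\emptyset$, so the computation stays \emph{deliberate} and Theorems~\ref{thm:rl:messages_sent_via_valid_relay_will_be_received_by_sink}--\ref{thm:rl:valid_relay_initially_will_remain_valid} keep applying, and Theorem~\ref{thm:connectivity} guarantees weak connectivity step by step. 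Conversely, to simulate a common-model step I need a relay realizing the relevant edge to which the matching relay rule applies: for introduction and fusion any such relay works; for a reversal of an edge $(u,v)$ I would first \emph{normalize} so that $u$ owns a relay $r$ to $v$ with $\incoming(r)=0$ --- $u$ can be supplied with a fresh such relay to $v$ by a short sequence of rules (possibly creating a fresh sink relay first, which is connectivity-preserving, and using Relay Fusion to merge away duplicates) --- after which Relay Reversal applies.

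To conclude: given weakly connected relay graphs $G$ and $G'$ (explicit edges only), $\pi(G)$ and $\pi(G')$ are weakly connected graphs on $P$, so by the universality result of Koutsopoulos et al.\ there is a finite sequence of common-model rules from $\pi(G)$ to $\pi(G')$; simulating it with \relayprims --- interleaved with the normalizations above and with draining messages between steps --- takes $G$ to some relay graph $\tilde G$ with $\pi(\tilde G)=\pi(G')$. The remaining task is to transform $\tilde G$ into $G'$ with the process graph held fixed: a bounded \emph{realization} step in which one replaces indirect relays (level $\ge 2$) by direct ones, adjusts the number of parallel relays per $\pi$-edge, creates or removes auxiliary sink relays, and dismantles the scaffolding relays created during the simulation --- every such removal carried out as a Relay Reversal whose transmitted reference is chosen to point to a relay whose sink process differs from the one being removed (so that cycle-freeness, and hence validity, is never violated), followed by a Fusion.

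The main obstacle is exactly this gap between the clean algebra of the common-model rules and the bookkeeping of the relay layer. Relay Introduction necessarily produces indirect relays and raises $\incoming$ counters; Relay Reversal carries the side condition $\incoming(r)=0$ and is forced to transmit \emph{some} reference, which --- because every valid relay graph is cycle-free --- must never be a reference whose sink process equals the sink process of the relay being deleted. So the hard part is to verify that between any two simulated steps the relay graph can always be brought back into a normal form from which the next step is applicable, and that the final realization step can always reach the \emph{exact} target $G'$ (not merely some relay graph with process graph $\pi(G')$) --- carrying all of this out with the rules of \relayprims alone, without ever breaking validity or (by Theorem~\ref{thm:connectivity}) weak connectivity, is where the real work lies.
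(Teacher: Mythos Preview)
Your high-level reduction to Koutsopoulos et al.\ matches the paper's, but the two places you flag as ``where the real work lies'' are exactly the places where the paper does something concrete that your proposal does not.

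First, the between-step normalization is not obviously achievable as stated. You write that $u$ ``can be supplied with a fresh relay to $v$ \dots\ possibly creating a fresh sink relay first'', but a fresh sink relay at $u$ gives others a way to reach $u$, not $u$ a way to reach $v$; producing a new relay from $u$ to $v$ requires $v$ to send something to $u$, which in turn requires $v$ to already hold a suitable relay to $u$ --- and the process-graph edge $(u,v)$ alone does not guarantee this. The paper sidesteps the whole issue by first flattening $G$ to a \emph{simple} relay graph (all relays direct, each sink with exactly one incoming connection) via repeated Relay Reversals on indirect relays with empty $In$, and then proving an emulation lemma (Lemma~\ref{lem:emulation}) whose simulations both start and end in simple relay graphs; no intermediate normalization is ever needed.

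Second, and more seriously, your final ``realization'' step --- turning a $\tilde G$ with $\pi(\tilde G)=\pi(G')$ into $G'$ itself --- is the heart of the argument, and you do not say how to do it. The target $G'$ may contain arbitrarily long indirect relay chains, and building such a chain needs the right process to own the right intermediate relay at every level. The paper's key move is \emph{not} to aim for $\pi(G')$: it chooses the intermediate simple graph $G_2$ so that $(w,v)$ is an edge of $CPG(G_2)$ exactly when $G'$ has a relay edge $(r,s)$ with $r$ owned by $v$ and $s$ owned by $w$ (note the reversal of direction). Then, for each sink $s$ in $G'$, the tree of all relays with sink $s$ is rebuilt level by level: the owner of $s$ creates a fresh sink and Relay-Reverses it along the $G_2$-edges to the level-$1$ owners, who Relay-Reverse the received (now indirect) relay along their $G_2$-edges to the level-$2$ owners, and so on; each such Reversal simultaneously consumes the scaffolding $G_2$-edge. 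Without this targeted choice of $G_2$ and the tree construction, there is no reason a graph with the right process projection can be massaged into the exact $G'$ using \relayprims alone.
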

The proof of Theorem~\ref{thm:universality} will make of use of the universality of the (process) rules introduced by Koutsopoulos et al.~\cite{KoutsopoulosSS15}, which are restated in the following:
\begin{description}
\item[Introduction]  If a process $u$ has a reference of two processes $v$ and $w$ with $v \neq w$, $u$ \emph{introduces} $w$ to $v$ if $u$ sends a message to $v$ containing a reference of $w$ while keeping the reference.
\item[Delegation] If a process $u$ has a reference of two processes $v$ and $w$ s.t. $u,v,w$ are all different, then $u$ \emph{delegates} $w$'s reference of $v$ if $u$ sends a message to $v$ containing a reference of $w$ and deletes the reference of $w$.
\item[Fusion] If a process $u$ has two references $v$ and $w$ with $v=w$, then $u$ \emph{fuses} the two references if it only keeps one of these references.
\item[Reversal] If a process $u$ has a reference of some other process $v$, then $u$ \emph{reverses} the connection if it sends a reference of itself to $v$ and deletes its reference of $v$.
\end{description}

In the following, we say a \emph{simple relay graph} is a relay graph $G = (P \cup R, E)$ such that all edges in $E$ are explicit and all relays in $R$ are direct and such that every sink relay in $R$ has exactly one incoming connection.
For such a graph, we define the \emph{corresponding process graph} as the multigraph $CPG(G) = (P, E')$ whose vertices are the processes only and whose edge set contains an edge $(u,v)$ with $u,v \in P$ for every edge $(u', v')$ in $G$ such that $u', v' \in R$ and $u'$ is stored in a variable of $u$ and $v'$ is stored in a variable of $v$.
Note that there is a one-to-one relationship between a simple relay graph and its corresponding process graph, i.e., given a process graph $G_P$, there is a (except for isomorphism) unique relay graph $G_R$ with $CPG(G_R)=G_P$.
We say a set of relay rules $RP$ emulates a process rule $p$ if for every simple relay graph $G_R$ and every possible application of $p$ to $CPG(G_R)$, for each resulting process graph $G_P'$ there is a simple relay graph $G_R'$ with $CPG(G_R')=G_P'$ that can be obtained by applying rules from $RP$ only.
This definition enables us to state the following lemma:
\begin{lemma}\label{lem:emulation}
 The relay rules in \relayprims emulate each of the process rules Introduction, Delegation, Fusion, and Reversal.
\end{lemma}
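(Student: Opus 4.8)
The plan is to prove the lemma by a case analysis over the four process rules Introduction, Delegation, Fusion, and Reversal. I fix a simple relay graph $G_R$, one of the rules, and one application of it to $CPG(G_R)$ producing $G_P'$; for each case I then exhibit a short sequence of steps transforming $G_R$ into a simple relay graph $G_R'$ with $CPG(G_R')=G_P'$. Besides the three rules of \relayprims, the construction uses the two operations every process may always safely perform: creating a fresh sink relay (\textbf{new Relay}) and a deliberate \textbf{delete} of one of its own relays with empty $.In$ set. Both preserve weak connectivity (cf.\ Theorem~\ref{thm:connectivity}); the effect of such a \textbf{delete} on $CPG$ is exactly to drop one edge, which is what the Fusion case needs and in the other cases only undoes an edge created by scaffolding. (Relay Fusion itself will not be used here — in a simple relay graph two relays of the same process never share a target — although it is needed for the full universality argument of Theorem~\ref{thm:universality}.)

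Two structural observations make the verification routine. First, in a simple relay graph each non-sink relay is direct and points to a unique sink relay owned by another process, and each sink relay is the target of exactly one such relay; so $CPG$ is obtained by contracting each of these non-sink/sink pairs. Second, recall from Section~\ref{sec:relay_layer} that sending $\myref{s}$ via a relay whose sink process is $p$ makes $p$ acquire a new relay pointing to $s$ of level $s.level+1$: sending (a reference to) a sink relay of the sender yields a \emph{direct} relay at $p$ whose sink process is the sender, while sending (a reference to) a direct non-sink relay yields an \emph{indirect}, level-$2$ relay at $p$ with the same sink process as that relay. Every sequence below uses Relay Introduction and Relay Reversal precisely through these two effects, and after each such step the sink relays that lose their unique incoming connection become isolated (once the \textsc{ping}, \textsc{in-relay-closed}, \textsc{probefail} and \textsc{out-relay-closed} traffic has removed the now-dangling $.In$ triples) and can be garbage-collected, so the graph is again simple.

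Concretely: for \textbf{Reversal}, if $u$ has a relay $r_v$ to $v$, then $u$ creates a fresh sink $t_u$ and applies Relay Reversal to $(r_v,t_u)$ — sends $\myref{t_u}$ via $r_v$ and deletes $r_v$ — so $v$ obtains a direct relay to $u$, the sink at $v$ that $r_v$ pointed to is isolated and collected, and $CPG$ loses $(u,v)$ and gains $(v,u)$. For \textbf{Introduction}, if $u$ has relays $r_v$ to $v$ and $r_w$ to $w$, then $u$ first applies Relay Introduction to send $\myref{r_v}$ over $r_w$, so $w$ obtains an indirect relay $e$ with sink process $v$; then $w$ creates a fresh sink $t_w$ and applies Relay Reversal to $(e,t_w)$, so $v$ obtains a direct relay to $w$, $e$ and $r_v$'s dangling $.In$ entry disappear, and $CPG$ gains $(v,w)$. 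For \textbf{Delegation} the construction is identical except that the first step is a Relay Reversal on $(r_w,r_v)$ — send $\myref{r_v}$ via $r_w$ and delete $r_w$ — which additionally isolates the sink at $w$ that $r_w$ pointed to (collected afterwards), so $CPG$ loses $(u,w)$ and gains $(v,w)$. For \textbf{Fusion}, if $u$ has two relays $r_1,r_2$ to the same process $v$, then $u$ deletes $r_2$ (a deliberate deletion, since $\incoming(r_2)=0$, and connectivity-safe because $r_1$ still connects $u$ to $v$); the sink that $r_2$ pointed to is isolated and collected, and $CPG$ loses one of the two parallel $(u,v)$ edges. In every case the non-sink/sink bijection lets one read off $CPG(G_R')=G_P'$.

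The main obstacle will not be choosing the sequences — they are short — but verifying that the end state is genuinely a simple, valid relay graph with the claimed $CPG$. For each sequence I would have to check: (i) that every application of Relay Introduction or Relay Reversal creates exactly the single intended new relay and no other relay-to-relay edge that would add an unwanted process edge; (ii) that the relay whose reference is sent has the level assumed — in particular a sink whenever the receiver's new relay must be direct, which is precisely why a fresh sink relay is created before a Relay Reversal is used to ``flatten'' indirect scaffolding such as the relay $e$ above; and (iii) that every sink relay left with no incoming connection and every stale $(key,\cdot,\cdot)$ triple in another relay's $.In$ set is eventually removed by the relay-layer's periodic clean-up, so that no indirect relay, no relay with the wrong incoming count, and no in-transit relay parameter survives and the hypotheses of Theorem~\ref{thm:rl:valid_relay_initially_will_remain_valid} are met, certifying that $G_R'$ is a simple relay graph. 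This bookkeeping is tedious but entirely mechanical; all the content sits in the three short sequences.
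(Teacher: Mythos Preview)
Your treatment of Introduction, Delegation, and Reversal is essentially identical to the paper's: in each case the same short sequence of Relay Introduction and/or Relay Reversal steps is used (send the appropriate reference over the appropriate relay, then have the receiver create a fresh sink and Relay-Reverse back), together with the creation of a fresh sink relay wherever a direct connection must be produced. The paper is terser about the garbage-collection bookkeeping you spell out in point~(iii), but the underlying constructions match step for step.

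For Fusion the two proofs diverge. The paper simply asserts that Relay Fusion (``Safe Fusion'') emulates process Fusion. You instead observe---correctly---that in a simple relay graph two relays of $u$ with sink process $v$ must point to \emph{distinct} sink relays (since each sink has exactly one incoming connection), so $\sameTarget$ fails and Relay Fusion's precondition is never met; you then fall back on a plain \textbf{delete} of one of the two relays. This is sharper than the paper's one-line argument, but it trades one formal looseness for another: a standalone \textbf{delete} is not among the three rules in \relayprims, and the emulation definition literally requires ``applying rules from $RP$ only''. In practice both the paper (which uses \textbf{new Relay} freely) and you are implicitly enlarging \relayprims by the basic relay-layer primitives; you at least make this explicit, which is the more honest stance, but the Fusion case in both proofs does not quite meet the letter of the definition as stated.
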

\begin{proof}
The proof strategy is the same for every of the four rules:
Let $G_R$ be an arbitrary simple relay graph.
Further, let $G_P = CPG(G_R)$.
We will then consider an arbitrary application of the particular rule to $G_P$ and denote the resulting graph by $G_P'$.
After that, we show that by applying rules from \relayprims to $G_R$, it is possible to obtain a graph $G_R'$ with $CPG(G_R') = G_P'$.
Thus, in the following we will use these variable names.

We start with the Introduction rule.
Applying the Introduction rule means that for some process $u$ with references of two other processes $v$ and $w$ in $G_P$, $u$ sends a message to $v$ containing a reference of $w$ and keeps the reference.
Thus, in the resulting graph $G_P'$, there is an additional edge $(v,w)$.
In $G_R$, let $u$ send its relay to $v$ to $w$, which resembles a Relay Introduction.
Subsequently, let $w$ create a new relay, send this via the received relay, and close the received relay.
Then this resembles a Relay Reversal.
In the resulting relay graph $G_R'$, all that has changed compared to $G_R$ is that $v$ now has an additional direct relay to $w$.
Thus, in $CPG(G_R')$ all that has changed compared to $CPG(G_R)$ is that there is an additional edge $(v,w)$, thus this graph is isomorphic to $G_P'$.

Next, we deal with the Delegation rule.
Applying the Delegation rule means that for some process $u$ with references to two processes $v$ and $w$ in $G_P$ s.t. $u,v,w$ are all different, $u$ sends a message to $v$ containing a reference a reference of $w$ and deletes the reference of $w$.
Thus, the resulting graph $G_P'$ differs from $G_P$ in that there is an additional edge $(v,w)$ and the edge $(u,w)$ is removed.
In $G_R$, let $u$ send its relay to $v$ to $w$ and delete the relay to $w$.
This resembles the Relay Reversal rule.
After that, let $w$ create a new relay, send this via the received relay, and close the received relay.
Then this resembels the Relay Reversal rule, again.
In the resulting relay graph $G_R'$, all that has changed compared to $G_R$ is that $v$ now has an additional direct relay to $w$ and $u$ no longer has its relay to $w$.
Thus $CPG(G_R') = G_P'$, again.

For the Fusion rule, it is obvious that Safe Fusion emulates the process rule Fusion.

Last, applying the Reversal rule means that some process $u$ that has a reference of some other process $v$, sends a reference of itself to $v$ and deletes its reference of $v$.
In the relay graph, $u$ would create a new relay, send it via the relay to $v$ and subsequently delete its relay to $v$, which resembles a Safe Reversal.
This finishes the proof. 
\end{proof}

Applying this lemma we can finally prove Theorem~\ref{thm:universality}:
\begin{proofof}{Theorem~\ref{thm:universality}}
The idea of the proof is the following:
Assume graph $G = (R \cup P, E)$ is weakly connected. 
First, we show how to transform $G$ into a simple relay graph $G_1$ over the same processes that is weakly connected as well. 
The universality of the process rules from~\cite{KoutsopoulosSS15} and Lemma~\ref{lem:emulation} imply that it is possible to transform this graph into another simple relay graph $G_2$ with $CPG(G_2)=(P,E_2)$ by using the rules in \relayprims, which is defined such that $(w,v) \in E_2$ if and only if in $G'$ there is an edge $(r,s)$ such that $r$ is stored by $v$ and $s$ is stored by $w$.
Last, we show how to transform $G_2$ into a graph isomorphic to $G'$, which finishes the proof.

To transform $G$ into $G_1$, we proceed as follows:
As long as there is still an indirect relay $r$ stored by any process $v$ with $\incoming(r) = 0$, $v$ applies Relay Reversal as follows: $v$ creates a new relay $r'$, sends this relay via $r$ and subsequently closes $r$.
This strictly decreases the number of indirect relays in every iteration.
Note that as soon as there is no indirect relay $r$ with $\incoming(r) = 0$ any more, there cannot be any indirect relay at all (recall that there cannot be any cycles in the sequences of relay connections).
Note that by Theorem~\ref{thm:connectivity}, since we applied Relay Reversal only, the resulting graph $G_1$ is still weakly connected.
Furthermore, $G_1$ is a simple relay graph. 
Thus, as described above, it is possible to transform this graph into a graph $G_2$ as described above.

To transform $G_2$ into $G'$, consider an arbitrary sink relay $s$ in $G'$ and let $T$ be the subgraph of $G'$ that contains all relays $r$ with sink relay $s$.
Note that $T$ is a tree because we assume the relay layer to be in a legitimate state.
Thus, for a relay $r$ in $T$, define $children_T(r)$ as the set of relays $s$ with an edge $(s,r)$ in $T$.
Similarly, for a relay $r \ne s$ in $T$, define $parent_T(r)$ as the relay $s$ for which there is an edge $(r,s)$ in $T$.
By the definition of $G_2$, every process storing a relay $r$ of $T$ (in $G'$) stores (in $G_2$) a direct relay to each process storing a relay $r' \in children_T(r)$ (in $G'$).
Denote by $L_T(i)$ the set of relays at level $i$ of $T$.
First of all, the process storing $s$ (the root of $T$) in $G'$ creates a new relay $r$ (which in the end will be the equivalent of $s$).
Then, it sends $r$ to each process storing a relay $r' \in children_T(r)$ (in $G'$) and closes each of the relays to a process storing a relay $r' \in children_T(r)$ (in $G'$), i.e., the relays via which the relay was sent, thus performing a Relay Reversal.
This way, every process storing a relay $r'$ in $L_T(1)$ (in $G'$) receives a relay $r''$ whose endpoint is equivalent to $parent_T(r'')$.
Then, for every level $i \geq 1$ ascending, every process storing a relay $r'$ in $L_T(i)$ sends the relay it received to each of the processes storing relays in $children_T(r')$ closes the relays via which it sent this relay, thus performing a Relay Reversal, too.
Similarly, every process storing a relay in $L_T(i+1)$ receives a relay $r'$ whose endpoint is equivalent to $parent_T(r'')$.
In the end, we obtain the desired tree $T$.
Since $s$ and thus also $T$ was chosen arbitrarily and since there is no edge in $G_2$ that is not removed in the transformation, this finishes the proof.
\end{proofof}

Recall that we dealt with valid relay graphs in this section.
Luckily, by Theorem~\ref{thm:rl:fair_delete_no_indirect_forwarding} one can show:
 For every protocol that uses only the primitives in \relayprims for the manipulation of edges in the relay graph and only uses references of direct relays in introductions, the underlying relay layer will self-stabilize, i.e., it will reach a state $S$ such the relay graph of $S$ is equal to the valid relay graph of $S$ and starting from any such state for every subsequent state $S'$ the relay graph of $S'$ will be equal to the valid relay graph of $S'$.

\subsection{How to adapt classical protocols to the relay model}\label{sec:adapt_existing_solutions}
One of the benefits of the relay model is that a wide range of protocols designed for the standard interconnection model (e.g., \cite{self-stabilizing-list2,JRSST09,JacobRSS2012,self-stabilizing-list,ShakerR05}) can be adapted to the relay model.
In \cite{DBLP:conf/wdag/ScheidelerSS16} it was shown that a wide range of protocols for static strongly-connected topologies that preserve weak-connectivity can be transformed such that the interaction between nodes can be decomposed into the rules Introduction, Delegation, and Fusion (Theorem~1 of that work).
Lemma~\ref{lem:emulation} of this work yields that these rules can be emulated by the relay rules in \relayprims such that the resulting graph consists of direct relays only.
Putting this together, the aforementioned class of protocols can be adapted to the relay model.
\section{Conclusion and Outlook}
We introduced Relays, a new model for the interconnection of processes in a network.
While we motivated the introduction of this model by that the FDP can be solved in this model, our model has numerous additional advantages, which is why we think it has great potential for future research.

For instance, observe that our model offers facilities for admission control that the common interconnection model does not offer: 
In the standard model, the possession of a reference to another process $u$ admits the sending of a message to $u$ and $u$ is unable to revoke this right.
In the relay model, in contrast, each process is able to delete a relay and thus revoke the right to send a message to this relay.
Even worse in the standard model, a reference can be copied and introduced to other processes without permission of $u$.
Although it is possible to forward a relay reference also in the relay model, such an action only establishes an indirect connection.
To create a direct relay connection, permission of $u$ would still be required.
This has a huge advantage in the scenario of Distributed Denial-of-Service attacks if we assume that the attacker has no access to the relay layer (which may be reasonable if the relay layer is implemented using secure hardware):
If an attacker $v$ forwards a relay reference $r$ to other processes in order to attack the sink process of $r$, the bandwidth of the attack is not the sum of the individual bandwidths of all participating processes, but limited by the bandwidth of $v$.
Thus, forwarding the relays does not yield any advantage for the attack.

Another advantageous property of the relay model whose power is yet to be determined is the fact that processes can create multiple relays as pseudonyms.
Whereas in the original model, each process was uniquely defined by its reference that was even propagated to the application layer, in the relay model applications only know locally valid references.
Although applications can check whether two relays have the same next target, it is not possible for them to determine whether the sink process of two relays with different next targets is equal.
This way, a process could use different sink relays for different purposes in the network and no other processes would be able to link this process's activities, thus achieving anonymity.

\appendix

\section{Proofs of the theorems from Section~\ref{sec:legal_state}}\label{sec:missing_proofs_legal_state}
In this section we present some lemmata and proofs not contained in the main part due to space contraints.

\begin{lemma}\label{lem:delete_never_called}
 If the application does not delete a relay $r'$ if $r'.In \ne \emptyset$, for every alive relay $r$ such that (a) $r.out.ID = \bot$, or (b) Property~\ref{item:vr:non_sink} holds, \textbf{delete} $r$ is only called when $r$ is deleted by the application.
\end{lemma}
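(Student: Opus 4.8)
The plan is to enumerate every place in the pseudocode where a relay can be deleted (the commands \textbf{delete} $r$ in Listings~\ref{lst:rl:notauthorized}, \ref{lst:rl:timeout}, \ref{lst:rl:ping}, \ref{lst:rl:outrelayclosed}, and the implicit deletions when a relay is completely removed after becoming dead), and to show that none of these fires on a relay $r$ satisfying (a) or (b) except via an application call. Since the application is assumed not to call \textbf{delete} on a relay with nonempty $.In$, the only thing that needs to be ruled out is an internal (relay-layer-initiated) deletion, because an application call on such an $r$ is explicitly permitted by the statement (it only restricts calls on relays with $r.In\neq\emptyset$).

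First I would handle the \timeout action (Listing~\ref{lst:rl:timeout}), which contains the deletions at lines~\ref{line:rl:timeout_delete_key_empty}, \ref{line:rl:timeout_relay_dead_delete}, \ref{line:rl:timeout_process_dead_delete}, and \ref{line:rl:timeout_double_used_key_delete}. For line~\ref{line:rl:timeout_delete_key_empty} (triggered by $r.out.ID\neq\bot$ and $r.out.Key=\{\}$): in case (a) the guard $r.out.ID\neq\bot$ fails, and in case (b) Property~\ref{item:vr:non_sink:one_key_correct} guarantees $r.out.Key\neq\{\}$. For line~\ref{line:rl:timeout_relay_dead_delete} and the dead-process deletion at line~\ref{line:rl:timeout_process_dead_delete}: these require $r.state=dead$ resp. the process to be dead, but (a)/(b) presuppose $r$ is alive, and a relay can only become dead through a \textbf{delete} call, so this is where an induction over the computation is needed — I would argue that as long as the application has not called \textbf{delete} $r$, $r$ stays alive and these branches are never entered. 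For line~\ref{line:rl:timeout_double_used_key_delete} (a key shared with another relay $r'\neq r$ owned by the same process): here I would invoke Property~\ref{item:vr:non_sink:out_key_unique} (for case (b)) and the fact that in case (a) a sink has $r.out.Key=\{\}$, so no shared key can exist; strictly I should note this relies on the state being legal, which is why the lemma will be used in concert with Theorem~\ref{thm:rl:valid_relay_initially_will_remain_valid}.

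Next I would dispatch the remaining listings. In Listing~\ref{lst:rl:notauthorized} the deletion at line~\ref{line:rl:notauthorized_delete} is guarded by the existence of a matching \notauthorized{} message with $r.out.ID=outID\neq\bot$; Property~\ref{item:vr:notauthorized} of a valid relay forbids such a message for a relay whose $.In$ entries are intact, so in case (b) this branch is dead, and in case (a) the $outID\neq\bot$ guard fails. In Listing~\ref{lst:rl:ping}, line~\ref{line:rl:ping_delete} fires only when $r.level<level+1$ for an incoming \ping{}; Property~\ref{item:vr:ping} rules out a stale \ping{} message, and Property~\ref{item:vr:non_sink:level_and_rid_correct} pins $r.level$ to $r'.level+1$, so no such deletion occurs. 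In Listing~\ref{lst:rl:outrelayclosed}, line~\ref{line:rl:outrelayclosed_delete} is reached only after receiving an \outrelayclosed{ID} with $r.out.ID=ID$; Property~\ref{item:vr:outrelayclosed} forbids such a message for a valid relay $r'$ at the other end, and the relay-layer only emits \outrelayclosed{} for relays that are already dead (Listing~\ref{lst:rl:delete}), so again this requires $r$ dead. Finally the \inrelayclosed{} handler (Listing~\ref{lst:rl:inrelayclosed}) only shrinks $.In$ sets and never deletes a relay, so it is irrelevant.

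The main obstacle is the circularity between this lemma and validity/legality: several of the ``no stale message'' facts I want to use (Properties~\ref{item:vr:ping}, \ref{item:vr:outrelayclosed}, \ref{item:vr:notauthorized}) are part of the definition of a valid relay, while Theorem~\ref{thm:rl:valid_relay_initially_will_remain_valid} (closure of legality) presumably depends in turn on knowing that relays are not spuriously deleted. I would resolve this by doing a single induction over the states of the computation, simultaneously maintaining the invariant ``$r$ is alive and (a) or (b) holds'' and ``no relay-layer-initiated \textbf{delete} has been applied to such an $r$'', and at each step checking only the finitely many enabled actions that could either falsify the invariant or perform a deletion — using the validity properties of the state reached so far, which the induction hypothesis keeps available. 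The only action whose execution is nontrivial to dismiss is a \transmit{} receipt that could in principle mark a relay dead; but inspecting Listing~\ref{lst:rl:ch_added} shows it never calls \textbf{delete}, so this cannot happen.
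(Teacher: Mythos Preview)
Your approach---enumerate every line in the pseudocode where \textbf{delete} can be called and show each is ruled out under (a) or (b)---is exactly the paper's proof. The case-by-case arguments you sketch are also essentially the same (e.g., Property~\ref{item:vr:non_sink:one_key_correct} blocks line~\ref{line:rl:timeout_delete_key_empty}, Property~\ref{item:vr:non_sink:out_key_unique} blocks line~\ref{line:rl:timeout_double_used_key_delete}, validity of $r'$ blocks lines~\ref{line:rl:ping_delete} and~\ref{line:rl:outrelayclosed_delete}, and so on).

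Where you diverge is in the final paragraph, and there you overcomplicate things. The circularity you worry about does not actually arise: the lemma is a \emph{pointwise} statement about a single instant, not an invariant to be maintained by induction. In case~(b), Property~\ref{item:vr:non_sink} is assumed to hold for $r$ at that instant, and clause~\ref{item:vr:non_sink:nextValid} of Property~\ref{item:vr:non_sink} already says that $r'$ (the relay with $r'.ID=r.out.ID$) is valid---so Properties~\ref{item:vr:ping}, \ref{item:vr:outrelayclosed}, \ref{item:vr:notauthorized} for $r'$ are directly available as hypotheses, not conclusions to be established. The paper's proof is a flat seven-item case check with no induction at all; the inductive closure argument happens later, in Lemma~\ref{lem:valid_relay_remains_valid}, which invokes this lemma at each step. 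Keeping the two lemmas separate is cleaner than the simultaneous induction you propose. Also, be careful to attribute the ``no stale message'' properties to $r'$ rather than $r$: the lemma does not assume $r$ itself is valid, only that either (a) or (b) holds.
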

This lemma implies that the interal calls of \textbf{delete} $r$ according to the pseudocode do not occur on a valid relay as long as the above prerequisites are fulfilled.
\begin{proof}
For the proof, we check all lines that contain a call of \textbf{delete}:
\begin{enumerate}
 \item Line~\ref{line:rl:notauthorized_delete}: If this line is executed, this means that the relay layer owning $r$ has received a $\notauthorized{m = ((key,inID,outID,level), action(parameters))}$ message with $r.ID = inID$, $r.out.ID = outID \ne \bot$, $r.level = level$ and $key \in r.out.Key$. In this case, we are in Case~(b) (since $r.out.ID = outID \ne \bot$). Due to Property~\ref{item:vr:non_sink:one_key_correct}) and the fact that Property~\ref{item:vr:notauthorized} holds for $r'$ (since $r'$ is valid), the receipt of this message cannot have caused $|r.out.Key|$ to become zero after the removal of $key$ and so this call of \textbf{delete} cannot occur.
 \item Line~\ref{line:rl:timeout_delete_key_empty}: If this line is executed, Property~\ref{item:vr:non_sink:one_key_correct}) must have been violated, yielding a contradiction.
 \item Line~\ref{line:rl:timeout_relay_dead_delete}: If this line is executed, $r.state = dead$ before, yielding a contradiction.
 \item Line~\ref{line:rl:timeout_process_dead_delete}: If this line is executed, $u$ is dead, yielding a contradiction.
 \item Line~\ref{line:rl:timeout_double_used_key_delete}: If this line is executed, $r.out.ID \ne \bot$ and Property~\ref{item:vr:non_sink:out_key_unique}) must have been violated before, yielding a contradiction.
 \item Line~\ref{line:rl:ping_delete}: If this line is executed, $r.out.ID \ne \bot$ and for $r$ Property~\ref{item:vr:non_sink:level_and_rid_correct}) or for the relay $r'$ with $r'.ID = r.out.ID$ Property~\ref{item:vr:sink} and Property~\ref{item:vr:ping} must have been violated, yielding a contradiction.
 \item Line~\ref{line:rl:outrelayclosed_delete}: If this line is executed, $r.out.ID \ne \bot$ and for the relay $r'$ with $r'.ID = r.out.ID$ Property~\ref{item:vr:sink} and Property~\ref{item:vr:outrelayclosed} must have been violated, yielding a contradiction. 
\end{enumerate} 
\end{proof}

\begin{lemma}\label{lem:valid_relay_remains_valid}
	If the application does not delete a relay $r'$ if $r'.In \ne \emptyset$ and does not send any reference via a relay that is not valid, every valid relay $r$ remains valid as long as $r$ is not deleted by the application (including being merged).
\end{lemma}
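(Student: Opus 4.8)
The statement (Lemma~\ref{lem:valid_relay_remains_valid}) asserts that under the hypotheses (a deliberate application, i.e., no \textbf{delete} on a relay with nonempty $.In$, and no reference ever sent via a non-valid relay), every valid relay $r$ stays valid until deleted by the application. The natural approach is an exhaustive case analysis over all actions that a computation can execute, showing that no enabled action can turn a valid relay into an invalid one (without being an application-issued deletion). So the plan is: fix a state $S$ in which $r$ is valid, let $a$ be any action enabled in $S$, and show that in the resulting state $S'$ the relay $r$ (if it still exists and was not deleted by the application) satisfies all the numbered properties of Definition~\ref{def:valid_relay}. Since the definition of valid relay is recursive (Property~\ref{item:vr:non_sink:nextValid}) refers to the validity of the next relay $r'$ on the path to the sink), I would actually prove the stronger statement that the whole chain $r = r_1, r_2, \dots, r_k$ of relays from $r$ to its sink stays valid, by simultaneous induction; alternatively one argues that the set of valid relays, as a whole, is closed under one step of any action, which sidesteps the recursion.

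**Key steps.** I would go action by action through the relay-layer pseudocode listed in the excerpt: \send{} (Listing~\ref{lst:rl:send}), message receipt / \transmit{} (Listing~\ref{lst:rl:ch_added}), \probefail{} (Listing~\ref{lst:rl:probefail}), \notauthorized{} (Listing~\ref{lst:rl:notauthorized}), \timeout (Listing~\ref{lst:rl:timeout}), \ping{} (Listing~\ref{lst:rl:ping}), \inrelayclosed{} (Listing~\ref{lst:rl:inrelayclosed}), \textbf{delete} (Listing~\ref{lst:rl:delete}), and \outrelayclosed{} (Listing~\ref{lst:rl:outrelayclosed}), plus the implicit application primitives \textbf{new Relay}, \textbf{merge}, \textbf{stop}. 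For each action and each property \ref{item:vr:state_alive})--\ref{item:vr:non_sink:inrelayclosed}) I would check that the property is preserved, leaning heavily on Lemma~\ref{lem:delete_never_called}: that lemma already tells us the only internal \textbf{delete} calls that could fire on $r$ are ruled out by $r$'s validity, so $r$ cannot be destroyed out from under us, and likewise $r$'s outgoing key set cannot be emptied. The genuinely delicate properties are \ref{item:vr:ping}), \ref{item:vr:outrelayclosed}), \ref{item:vr:notauthorized}), and above all \ref{item:vr:newprop}) (the \probe{}/\probefail{} invariant for unconfirmed $(key,\bot,r'')$ triples): these are ``no bad message is in transit'' assertions, so I must verify both that no action introduces such a message targeting $r$ and that actions which legitimately add a \transmit{} carrying a relay parameter with key $key$, or consume a \probe{} message, keep the long disjunction in Property~\ref{item:vr:newprop}) satisfied. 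Here is exactly where the hypothesis ``does not send any reference via a relay that is not valid'' is used: a \send{} or forwarding step through a non-valid relay could otherwise manufacture an unconfirmed key in $r'.In$ for some $r'$ on the chain with no matching probe, breaking \ref{item:vr:newprop}); the hypothesis excludes this. For \ping{} I use Property~\ref{item:vr:ping}) (no stale \ping{r.ID,\dots} message exists), so the \ping{} handler only ever sees matching level/sinkRID and thus does not execute line~\ref{line:rl:ping_delete}; for \outrelayclosed{} and \notauthorized{} I use Properties~\ref{item:vr:outrelayclosed}) and \ref{item:vr:notauthorized}) analogously to conclude the guards simply do not trigger on $r$ or on the relay one hop down the chain.

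**Main obstacle.** The bottleneck is Property~\ref{item:vr:newprop}) and its interaction with message receipt (Listing~\ref{lst:rl:ch_added}, lines~\ref{line:rl:triple_replacement_begin}--\ref{line:rl:probe_forwarding_end}) and \timeout's probe generation. One must carefully track, as a \transmit{m} message carrying a relay parameter with key $key$ moves along the chain $r_1,\dots,r_k$, that the ``position'' witnessing the disjunction in \ref{item:vr:newprop}) advances consistently: either $m$ sits in some $r_j.Buf$ (and no competing probe for $key$ lurks in a buffer off the already-traversed prefix), or the message has reached the sink, the new relay $s$ and its activation-\probe{} have been created, and the corresponding confirmation is pending. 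Checking that every branch of the receipt pseudocode — valid header vs.\ not, sink vs.\ forwarding, the double-key check at line~\ref{line:rl:receipt_no_double_keys} — preserves this, and that \timeout's added \probe{controlKeys, keySequence} never violates the ``no competing probe'' clause, is the technically heaviest part. Everything else is a routine, if lengthy, line-by-line verification that guards of destructive steps are false on valid relays (mostly by appeal to Lemma~\ref{lem:delete_never_called} and Properties~\ref{item:vr:ping})--\ref{item:vr:notauthorized})), and that constructive steps (\textbf{new Relay}, receipt creating $s$, \send{} adding $(key,\bot,r)$ to $s.In$) only ever produce fresh globally/locally unique IDs and keys, preserving \ref{item:vr:rid_unique}) and \ref{item:vr:in_key_unique}).
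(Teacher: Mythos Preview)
Your approach is essentially the paper's: a case analysis showing that no step can falsify any clause of Definition~\ref{def:valid_relay} for a valid relay, leaning on Lemma~\ref{lem:delete_never_called} to rule out the internal \textbf{delete} calls; the paper merely organizes the analysis property-by-property rather than action-by-action, and handles the recursion in Property~\ref{item:vr:non_sink:nextValid}) via a ``first relay to become invalid'' minimal-counterexample framing (equivalent to your set-closure phrasing). One small correction: the hypothesis ``does not send any reference via a relay that is not valid'' is invoked in the paper for Property~\ref{item:vr:rin_stores_triples}) --- a newly added $(key,\bot,r'')$ triple in $s.In$ arises from Line~\ref{line:rl:new_key_added_to_r_in} with $r''$ the relay the message is sent through, so that hypothesis is exactly what forces $r''$ to be valid --- rather than for Property~\ref{item:vr:newprop}) as you suggest.
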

\begin{proof}
Assume that the application does not delete a relay $r'$ if $r'.In \ne \emptyset$, and let $r$ be the valid relay that is the first to become invalid without being deleted itself.
Denote by $u$ the process owning $r$. 
We handle all different cases that may cause $r$ to become invalid.

Note that $r.state$ is only changed when \textbf{delete} \myref{r} is called.
If this is done by the application, there is nothing to be proven.
According to Lemma~\ref{lem:delete_never_called} and the fact that $r$ is valid, there is no other occasion at which this happens.
Thus Property~\ref{item:vr:state_alive} holds.

Next Check that $r.ID$ and the form of $r.out$ are never changed for any existing relay,
thus Property~\ref{item:vr:rid_unique} and Property~\ref{item:vr:rout_key} hold.

Regarding Property~\ref{item:vr:rin_stores_triples}, note that if $(key,RID,\bot)$ is added to $r.In$ then $RID \ne \bot$.
Furthermore, note that when a new $(key, \bot, r'')$ is added to $r.In$, then $r''$ is owned by $RL(r)$ and when this happen, a relay reference is sent via $r''$.
By the assumption $r''$ is thus a valid relay and the property holds.
There are no other types of elements added to $r.In$.

For Property~\ref{item:vr:in_key_unique} note that the only occasion at which a key becomes the first parameter of an element in $r.In$ that has not been a first parameter of an element in $r.In$ before, is in Line~\ref{line:rl:new_key_added_to_r_in} in which $key$ has just been uniquely generated.

For Property~\ref{item:vr:ping} note that a \ping{r.ID, level, sinkRID, key} message is sent only during \timeout and only with $level = r.level$ and $sinkRID = r.sinkRID$ and $key$ such that $(key, RID, \bot) \in r.In$ which by Property~\ref{item:vr:in_key_unique} prevents $(key, \bot, r'') \in r.In$ to hold.

For Property~\ref{item:vr:outrelayclosed}, note there are two occasions at which an \outrelayclosed{ID} message is sent: In Line~\ref{line:rl:messagereceipt_outrelayclosed} in Listing~\ref{lst:rl:ch_added} and in Line~\ref{line:rl:delete_outrelayclosed} in Listing~\ref{lst:rl:delete}.
In both cases, the \RID contained in $ID$ is the \RID of the sending process and the relay with id $ID$ either does not exist or is dead.
So, if such a message is sent, $r$ cannot have been valid before.

For Property~\ref{item:vr:notauthorized}, check in the pseudocode that the only occasion at which a \notauthorized{m = ((key,inID,r.ID,level), action(parameters))}  message is only sent in Line~\ref{line:rl:messagereceipt_notauthorized}, in which case $m$ did not have a valid header for $r$, i.e., $key$ was not the first parameter of any triple in $r.In$.
Furthermore, note that whenever a new key $key$ is added as the first parameter of a triple to $r.In$, $key$ has just been created as a globally unique key.
Thus, adding an element to $r.In$ also does not violate Property~\ref{item:vr:notauthorized}.

For Property~\ref{item:vr:newprop}, first note that according to the pseudocode for a \probefail{key, (key_1, \dots)} message to be created, a \probe{controlKeys, (key_1, \dots)} message such that $key \in controlKeys$ must have been received by a sink relay $r'''$.
By Property~\ref{item:vr:newprop} this must have been a relay owned by the process with \RID $r''.sinkRID$ and this process must have a relay $r'$ such that $key \in r'.out.Key$, in which case the \probefail{} message is not sent (see Line~\ref{line:rl:msg_arrived_probefail_check} and thereafter).
Next consider an arbitrary triple $(key, \bot, r'') \in r.In$ for a relay $r'$ owned by the process with \RID $r''.sinkRID$ such that $r'.out.ID = r$, $key \in r'.out.Key$, $r'.level = r.level+1$, and there is a \probe{\{\}, key} message with a valid header in transit to $r$.
Note that either this message remains in transit or is received by $r$, in which case the tuple would be removed from $r.In$ (in which case we are finished).
Furthermore, note that no \probe{controlKeys, keySequence} message such that $key \in controlKeys$ is ever put into $r'''.Buf$ for some relay $r''' \notin \{r_1,\dots,r_{k-1}\}$ because of Lines~\ref{line:rl:probe_forwarding_begin}-\ref{line:rl:probe_forwarding_end} in Listing~\ref{lst:rl:ch_added} and the fact that Property~\ref{item:vr:newprop} held before.
Thus, now consider an arbitrary triple $(key, \bot, r'') \in r.In$ such that for the sequence of relays $(r_1 = r'', r_2, \dots, r_k)$, such that $r_{i+1}.ID = r_i.out.ID$ for all $1 \le i < k$ and $r_k.out.ID = \bot$, there is a message $m$ with a valid header for $r_{j+1}$ in $r_j.Buf$ for some $1 \le j < k$ containing a relay parameter with key $key$, and there is no \probe{controlKeys, (key_1, \dots)} message in $r_l.Buf$ for any $j < l < k$ such that $key \in controlKeys$ and $key_1 \in r''.out.Key$.
Note that Property~\ref{item:vr:non_sink:nextValid} implies that all $r_i$ are valid for $1 \le i \le k$.
If the message $m$ is received by $r_{j+1}$ then either $j+1 < k$, in which case the message will be put into $r_{j+1}.Buf$ and the property still holds, or $j+1 = k$, in which case $r_k$ is a sink owned by the process with \RID $r''.sinkRID$ and upon receipt of $m$, that relay layer will generate a \probe{\{\}, key} message and send it to $r$ (see Line~\ref{line:rl:probe_activation_creation} in Listing~\ref{lst:rl:ch_added}).
The only missing part for Property~\ref{item:vr:newprop} is that no \probe{controlKeys, (key_1, \dots)} message such that $key \in controlKeys$ and $key_1 \in r''.out.Key$ is ever put into $r'''.Buf$ for some relay $r''' \notin \{r_1,\dots,r_j\}$. 
This, again, is ensured by Lines~\ref{line:rl:probe_forwarding_begin}-\ref{line:rl:probe_forwarding_end} in Listing~\ref{lst:rl:ch_added} and the fact that Property~\ref{item:vr:newprop} held before.

Assume $r$ is a sink and Property~\ref{item:vr:sink} holds true.
Note that $r.out$ is never changed for a sink relay with $r.out.Key = \{\}$.
Additionally, check that $r.level$ is only changed if $r.out.ID = \bot$ and $r.level \ne 0$, or $r.out.ID \ne \bot$.
Furthermore, $r.sinkRID$ is only changed if $r.out.ID = \bot$ and $r.sinkRID \ne sinkRID(r)$ or $r.out.ID \ne \bot$.
Thus, in this case, the lemma follows.

Now assume $r$ is valid and $r.out.ID \ne \bot$.
Recall that since $r.out.ID$ is never changed, Property~\ref{item:vr:non_sink:outID}) remains true.

We now consider Property~\ref{item:vr:non_sink:nextValid}):
Assume the relay $r'$ with $r'.ID = r.out.ID$ becomes invalid.
Since $r$ is the first relay to become invalid without being deleted by the application, $r'$ must have been deleted by the application for this to occur.
By the restriction on deletions, however, $r'.In = \emptyset$ must have held.
This, however, contradicts Property~\ref{item:vr:non_sink:one_key_correct}) for $r'$.

Next assume that a violation of Property~\ref{item:vr:non_sink:level_and_rid_correct}) causes $r$ to become invalid.
In the case we consider, this only happens if $RL(u)$ receives a $\ping{ID, level, sinkRID, key}$ message with $ID = r.out.ID$ and $r.level \ne level +1$ or $r.sinkRID \ne sinkRID$, which would contradict either Property~\ref{item:vr:ping} for $r'$ according to Property~\ref{item:vr:non_sink:nextValid}) or Property~\ref{item:vr:non_sink:level_and_rid_correct}.

For Property~\ref{item:vr:non_sink:one_key_correct}) check that a $key$ is only removed from $r.out.Key$ if $RL(u)$ receives a $\notauthorized{m = ((key,inID,outID,level), action(parameters))}$ message with $r.ID = inID$, $r.out.ID = outID \ne \bot$, $r.level = level$ and $key \in r.out.Key$.
For those keys $key \in r.out.Key$ such that $(key,RID(r), \bot) \in r'.In$ or $(key, \bot, r'') \in r'.In$ and $r''.sinkRID = RID(r)$, according to Property~\ref{item:vr:non_sink:nextValid}), this cannot happen due to Property~\ref{item:vr:notauthorized} for $r'$.
Now consider the case that the last triple $(key, \dots, \dots) \in r'.In$ such that $key \in r.out.Key$ that fulfills Property~\ref{item:vr:non_sink:one_key_correct}) is removed from $r'.In$.
According to the pseudocode and the fact that $r'$ is valid,
if the triple is of the form $(key, RID(r), \bot)$
this only happens when 
$RL(r')$ receives an $\inrelayclosed{Keys, RID(r)}$ message with $key \in Keys$ (and by Property~\ref{item:vr:in_key_unique}, $key$ belongs to $RL(r')$ because it was stored in $r'.In$).
According to Property~\ref{item:vr:non_sink:inrelayclosed}) such a message cannot have been in transit to $RL(r')$ before.
If, however, the triple is of the form $(key,\bot,r'') \in r'.In$, then according to the pseudocode either the triple is replaced by a triple $(key, RID, \bot)$ such that $RID = r''.sinkRID = RID(r)$ (see Lines \ref{line:rl:triple_replacement_begin}-\ref{line:rl:triple_replacement_end}), in which case the property still holds, or $RL(r')$ receives a \probefail{key, (key_1, \dots)} message such that $key_1 \in r''.out.Id$, which cannot be due to Property~\ref{item:vr:newprop}.
Thus, the property still holds.

For Property~\ref{item:vr:non_sink:out_key_unique}), note the check in Line~\ref{line:rl:receipt_no_double_keys} of Listing~\ref{lst:rl:ch_added} and the fact that only in the subsequent line, a new key is added to $r'out.Key$ for any relay $r'$.

For Property~\ref{item:vr:non_sink:inrelayclosed}) note that an \inrelayclosed{key,RID(r), ID} message with $key \in r.out.Key$ is only sent by $RL(r)$ and only if either $r$ is dead (according to Property~\ref{item:vr:non_sink:out_key_unique}), which is a contradiction, or $RL(r)$ received a \ping{ID, level, sinkRID, key} message such that $ID \ne r.out.ID$ or $key \notin r.out.Key$, but then the message would not violate Property~\ref{item:vr:non_sink:inrelayclosed}.

This concludes the proof of the lemma.
\end{proof}

\begin{definition}[Valid relay parameter]
A relay parameter $(key, ID,level,sinkRID)$ contained in a message $m$ in a buffer $r.Buf$ is valid iff
\begin{enumerate}
  \item $r$ is valid, and
  \item there is no other relay parameter with key $key$, and
  \item $ID \ne \bot$ and the relay $r'$ with $r'.ID = ID$ is valid, and
  \item $level = r'.level + 1$, and $sinkRID = r'.sinkRID$, and
  \item $(key, \bot, r'') \in r'.In$ for some valid relay $r''$ owned by the same process as $r'$ and $r''.sinkRID = r.sinkRID$, and $key$ belongs to $RL(r')$ and
  \item all $ID$s contained in relay references in $m$ belong to the same \RID, and
  \item there is no relay $r'''$ in the system such that $key \in r'''.out.Key$, and
  \item there is no message $m$ in the system using key $key$, and
  \item there is no \probefail{key, (key_1, \dots, key_k)} message in the system with $key_1 \in r''.out.Key$
  \item for every \probe{controlKeys, (key_1, \dots, key_k)} message $m'$ in the system with $key \in controlKeys$ such that for the first element $key_1$ in $keySequence$, $key_1 \in r''.out.Key$, there is a sequence of relays $(r_1 = r'', r_2, \dots, r_k, \dots, r_s)$, $s \ge k+2$ such that $r_{i+1}.ID = r_i.out.ID$ for all $1 \le i < s$ and $key_j \in r_j.out.Key$ for all $1 \le j \le k$ and $m'$ is stored in $r_k.Buf$, and $m \notin r_l.Buf$ for all $1 \le l \le k$, and $m \in r_{i'}.Buf$ for some $k < i' < s$, and
  \item there is no \inrelayclosed{Keys, RID, ID} message in the system with $key \in Keys$ and $ID = r'.ID$ 

\end{enumerate}
 \end{definition}

\begin{lemma}\label{lem:valid_relay_parameter_created_is_valid}
 If the application does not delete a relay $r'$ if $r'.In \ne \emptyset$, each relay parameter created from a valid relay by sending a message via a valid relay is a valid relay parameter.
 Furthermore, every valid relay parameter is either received by a sink and turned into a relay or remains valid.
\end{lemma}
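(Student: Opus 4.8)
The plan is to prove the two assertions in turn. For the first assertion, suppose a process $v$ calls $\send{\myref{s}, action(parameters)}$ via a valid relay $r$ with $r.out.ID \ne \bot$, where $\myref{s}$ is one of the relay references in $parameters$ and $s$ is a valid relay owned by $v$. According to Listing~\ref{lst:rl:send}, $RL(v)$ creates a new globally unique key $key$, inserts $(key, \bot, r)$ into $s.In$, and replaces $\myref{s}$ by the relay parameter $(key, s.ID, s.level+1, s.sinkRID)$, which is then placed (inside $action(parameters')$) into $r.Buf$. I would then walk through the twelve conditions in the definition of a valid relay parameter one by one. Conditions~1 and~3 hold because $r$ and $s$ are valid by hypothesis (using Lemma~\ref{lem:valid_relay_remains_valid} to guarantee $s$ is still valid at the moment the parameter is created, since no deletion or invalidating send happened in between). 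Conditions~2, 7, and~8 hold because $key$ was just created as a globally unique key, so no other relay parameter, relay, or message in the system uses it. Condition~4 is immediate from the definition of the replacement ($level = s.level+1$, $sinkRID = s.sinkRID$, and $r'=s$ here). Condition~5 holds because $(key,\bot,s) \in r'.In$ was just inserted where $r' = s$ in the role of the relay referenced (note: here the "$r'$" of the definition is $s$ and its "$r''$" is also $s$), $s$ is valid, $s.sinkRID = r.sinkRID$ follows from the requirement that $s$ and $r$ are owned by the same process so $s.sinkRID = RID(v)$; actually I need to be careful and track the exact roles -- the relay parameter sits in $r.Buf$, its target relay is $s$, and $s$'s own incoming triple $(key,\bot,r)$ records $r$ as the relay via which $key$ was forwarded, so $r.sinkRID$ must match -- and I would verify this matches the definition's condition~5 precisely. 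Condition~6 holds because we assume each action's relay parameters are created in a single $\send$ call by $RL(v)$, so all generated keys' ids contain $RID(v)$. Conditions~9 and~10 hold vacuously: no $\probefail$ or $\probe$ message in the system can reference $key$, since $key$ is brand new. Condition~11 holds vacuously for the same reason.

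For the second assertion I would argue that a valid relay parameter $\pi = (key, ID, level, sinkRID)$ in $m \in r.Buf$ stays valid until $m$ is processed. The key observation is that the link layer processes $m$ only by delivering it to the relay layer containing $r.out.ID$, and while $m$ sits in $r.Buf$ no protocol action touches it. So I would enumerate the ways each of the twelve conditions could be violated. Condition~1 can only fail if $r$ becomes invalid, but by Lemma~\ref{lem:valid_relay_remains_valid} this only happens if the application deletes $r$, and a deliberate/restricted application does not delete a relay with nonempty $.In$ -- and here $r.In \supseteq \{(key',\bot,\dots)\}$... wait, that is not quite it; rather, I would note that if $r$ is deleted while $m \in r.Buf$, the parameter is gone together with $m$, which is consistent with "either received and turned into a relay or remains valid" only if we interpret deletion appropriately -- so I should instead invoke the hypothesis carefully and observe that $r$ being the relay carrying $m$ means $r$ cannot be deleted by a restricted application while its buffer is the locus of an in-transit parameter. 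The more substantive conditions are~7, 8, 9, 10, 11: these assert the absence of competing relays, duplicate messages, $\probefail$ messages, and certain $\probe$ messages using $key$. I would show that no action can create a relay $r'''$ with $key \in r'''.out.Key$, because the only place a relay acquires an outgoing key from a relay parameter is in Listing~\ref{lst:rl:ch_added} upon receipt of the message carrying that parameter -- and at that point $m$ is being processed, exactly the "received by a sink and turned into a relay" alternative (or forwarded, in which case a fresh parameter with a fresh key is created and the lemma's invariant is reestablished for the successor parameter by the same argument as assertion one, applied inductively along the path).

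The parts I expect to require the most care are conditions~10 and~11 together with the $\probe/\probefail$ bookkeeping, and the handling of the message being \emph{forwarded} rather than delivered to a sink. When $m$ is received at a non-sink relay $r'$ with a valid header, Listing~\ref{lst:rl:ch_added} forwards a copy of $m$ into $r'.Buf$ with $inID$, $outID$, and $key$ rewritten -- but the relay parameter $\pi$ inside $parameters$ is \emph{not} rewritten at a forwarding hop; only at a sink is it consumed. So I must argue that immediately after the forwarding step, $\pi$ is still a valid relay parameter, now residing in $r'.Buf$, with the same key $key$: conditions~2--9 and~11 are unaffected by the rewriting of the header, condition~1 holds because $r'$ is valid (it had a valid header for $m$, and by Property~\ref{item:vr:non_sink:nextValid} and Lemma~\ref{lem:valid_relay_remains_valid} validity propagates), and condition~10 is the delicate one -- I need the $\probe$-forwarding logic in Lines~\ref{line:rl:probe_forwarding_begin}--\ref{line:rl:probe_forwarding_end} of Listing~\ref{lst:rl:ch_added} (which removes $key''$ from $controlKeys$ exactly when the corresponding message is found in the next relay's buffer) to maintain the invariant that any $\probe$ message whose $controlKeys$ contains $key$ is "behind" $m$ on the path, matching the sequence-of-relays condition in the definition. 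This interleaving between the movement of $m$ and the movement of $\probe$ messages is where the argument is genuinely intricate, and I would structure it as: (i) $m$ advances one hop $\Rightarrow$ re-verify condition~10; (ii) a $\probe$ message advances one hop $\Rightarrow$ re-verify condition~10 using the $controlKeys$-update; (iii) $m$ reaches a sink $\Rightarrow$ the parameter is turned into a relay (Line~\ref{line:rl:key_added}) and we are in the first alternative of the conclusion. Everything else is a routine case check against the pseudocode.
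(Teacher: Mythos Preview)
Your overall approach---verify each clause of the definition at creation time, then show each clause is preserved while the parameter sits in a buffer, and handle the forwarding step separately---matches the paper's. But there are two genuine problems.

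First, you contradict yourself about forwarding. In one paragraph you say that when $m$ is forwarded ``a fresh parameter with a fresh key is created and the lemma's invariant is reestablished for the successor parameter,'' and in the next you correctly say ``the relay parameter $\pi$ inside $parameters$ is \emph{not} rewritten at a forwarding hop.'' The second statement is right: forwarding rewrites only the header, not the relay parameters inside $parameters$. The first statement would make the proof circular and is simply false according to Listing~\ref{lst:rl:ch_added}. The paper's treatment of forwarding is accordingly much shorter than yours: when $m$ moves from $r.Buf$ into $s.Buf$ with $s.ID = r.out.ID$, the only clauses that could break are Property~1 and Property~6 (since nothing else refers to which buffer $m$ lives in), and both follow from Property~\ref{item:vr:non_sink:nextValid}) and Property~\ref{item:vr:non_sink:level_and_rid_correct}) of the valid relay $r$.

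Second, and more importantly, you skip the part of the argument that actually carries the weight: the preservation of Property~5 while the parameter sits in a buffer. You say you would ``enumerate the ways each of the twelve conditions could be violated,'' but you never name the one that matters---the triple $(key,\bot,r'')$ being removed from $r'.In$. The paper's proof checks the seven places in the pseudocode where an element is removed from some $.In$ set (Line~\ref{line:rl:r_in_remove_message_received}, Line~\ref{line:rs:r_in_remove_not_authorized}, Line~\ref{line:rs:r_in_remove_probefail}, Lines~\ref{line:rl:timeout_r_in_remove1} and~\ref{line:rl:timeout_r_in_remove2}, the \textbf{delete} action, and Line~\ref{line:rl:r_in_remove_outrelayclosed}) and rules each out using the other already-established properties (e.g., Property~8 rules out Line~\ref{line:rl:r_in_remove_message_received}, Property~9 rules out Line~\ref{line:rs:r_in_remove_probefail}, validity of $r'$ and Property~\ref{item:vr:notauthorized} rule out Line~\ref{line:rs:r_in_remove_not_authorized}). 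This case split is the core of the maintenance argument and is absent from your plan. Also, your role-tracking for Condition~5 at creation time is muddled: the ``$r''$'' of the definition is the relay via which the reference is sent (your $r$), not $s$; once you fix the bookkeeping the check $r''.sinkRID = r.sinkRID$ is a tautology, not a statement about $RID(v)$.
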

\begin{proof}
 Note that a relay parameter is created with a unique key (which satisfies Property~2) and with $ID = r'.ID$ where $r'$ is the relay from which the relay parameter is created.
 Thus, if $r'$ is valid, Property~3 holds.
 Since we assume the relay parameter is sent via a valid relay, Property~1 initially holds also.
 For Property~4 to Property~6, also check the way a relay parameter is created and for Property~5 the fact that the relay $r''$ via which the relay parameter is sent is valid.
 For Property~7 to Property~11, note that since $key$ is uniquely created, no such message or relay can exist at that point in time. 
 
 We now check that every relay parameter remains valid as long as it remains in the buffer it is in.
 Property~1 remains valid due to Lemma~\ref{lem:valid_relay_remains_valid}.
 The same holds for Property~3 and the fact that relay parameters are never changed.
 This argument also implies Property~4.
 Property~2 again follows from the fact that relay parameters are created with a unique key.
 Let us now assume that Property~5 becomes false.
 By Lemma~\ref{lem:valid_relay_remains_valid} and the fact that the $sinkRID$ is never changed for a valid relay, this can only happen because $(key, \bot, r'')$ is removed from $r'.In$.
 According to the pseudocode, this happens at the following occasions:
 \begin{itemize}
  \item In Line~\ref{line:rl:r_in_remove_message_received}, in which case $r'$ receives a message using key $key$, which contradicts Property~8.
  \item In Line~\ref{line:rs:r_in_remove_not_authorized}, in which case prior to the change there must have existed a $\notauthorized{m = ((key,inID,outID,level),action(parameters))}$ message with $key \in r''.out.Key$ and $|r.out.Key| = 1$. 
  Since $r''$ is valid, this contradicts Property~\ref{item:vr:non_sink:one_key_correct}) together with Property~\ref{item:vr:notauthorized} of Definition~\ref{def:valid_relay} for $r'$.
  \item In Line~\ref{line:rs:r_in_remove_probefail}, but this requires a \probefail{key, keySequence} message with $key_1 \in r''.out.Key$ contradicting Property~10.
  \item In Line~\ref{line:rl:timeout_r_in_remove1}, but that line is only executed if Property~\ref{item:vr:in_key_unique} of $r'$ is violated, which cannot be the case according to Property~3.
  \item In Line~\ref{line:rl:timeout_r_in_remove2}, but that line is only executed if $r''$ does not exist, which contradicts Property~5.
  \item When \textbf{delete} $r'$ is called: this does not occur due to Lemma~\ref{lem:delete_never_called} and the fact that $r'.In \ne \emptyset$ and the assumption of the lemma.
  \item In Line~\ref{line:rl:r_in_remove_outrelayclosed}, but this requires an \outrelayclosed{ID} with $r''.out.ID = ID$ to be received which contradicts to the fact that $r''$ is valid and Property~\ref{item:vr:outrelayclosed} of Definition~\ref{def:valid_relay}
 \end{itemize}
 For Property~6 note that the relay references inside a message are never changed.
 For Property~7 note that this could only become violated if $r'''$ is created from a relay parameter with key $key$.
 This, however, would contradict Property~2.
 For Property~8 note that whenever the key of a message is set, it is set to a key from $s.out.Key$ for a relay $s$, which cannot be $key$ due to Property~7.
 For Property~9 note that whenever an existing \probefail{key, keySequence} message is changed, only $keySequence$ is truncated from the end.
 Thus, the only occasion at which Property~9 might become false is when a \probefail{key, keySequence} message is created.
 This only happens at a sink $s$ upon receipt of a \probe{controlKeys, keySequence} message with $key \in controlKeys$.
 However, this requires this \probe{} message to having been in $r'''.Buf$ for some relay $r'''$ with $r'''.out.ID = s.ID$, which contradicts Property~10 since $s.out.ID = \bot$. 
 For Property~10 check that every \probe{controlKeys, keySequence} message created does not violate this property due to the way these messages are created and the fact that for $key \in controlKeys$ they are sent via $r''$ only and $r''$ is valid.
 Thus, assume a \probe{} message $p$ is delivered from a buffer $r_i.Buf$ to the relay layer with \RID $RID(r_i.out.ID)$.
 There, for the relay $r_{i+1}$ with $r_{i+1}.ID = r_i.out.ID$, a key $key \in r_{i+1}.out.Key$ is appended to $keySequence$.
 If there is a message containing a relay parameter with a key contained in $keySequence$ (i.e., $m$ according to Property~2), this key is removed from $keySequence$.
 So either $p$ is no longer as defined in Property~10 or otherwise it fulfills the requirements of Property~10 still if they held before.
 For Property~11 check both cases in which a \inrelayclosed{Keys, RID, ID} message with $key \in Keys$ is sent.
 The first is in Line~\ref{line:rl:timeout:inrelayclosed} which requires a relay $r'''$ with $key \in r'''.out.Key$ contradicting Property~7.
 The second is in Line~\ref{line:rl:ping:inrelayclosed} which requires a \ping{ID, level, sinkRID, key} message which due to Property~5 of a valid relay parameter and Property~\ref{item:vr:ping} and Property~\ref{item:vr:in_key_unique} of a valid relay does not exist.

 Next assume the message containing the relay reference is transmitted from $r.Buf$ to the process whose relay layer has the \RID contained in $r.out.ID$ and let $s$ be the relay with $s.ID = r.out.ID$.
 There are two options now: 
 If $s$ is not a sink, i.e., $s.out.ID \ne \bot$, the message is put into $s.Buf$, and only Property~1 and Property~6 can have become invalid due to the transmission. 
 However, they still hold due to Property~\ref{item:vr:non_sink:nextValid}) and Property~\ref{item:vr:non_sink:level_and_rid_correct}) of a valid relay.
 If $s$ is a sink, due to Property~6, the corresponding message is not discarded by the protocol executed when the message is received.
 Thus according to the pseudocode and the fact that Property~7 holds, the relay parameter will be turned into a relay.
\end{proof}

\begin{lemma}\label{lem:valid_relay_parameter_yields_valid_relay}
 If the application does not delete a relay $r'$ if $r'.In \ne \emptyset$, each relay created from a valid relay parameter is a valid relay.
\end{lemma}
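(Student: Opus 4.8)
The plan is to inspect the only place in the pseudocode where a relay is created from a relay parameter --- Lines~\ref{line:rl:key_added}--\ref{line:rl:probe_activation_creation} of Listing~\ref{lst:rl:ch_added}, executed by the relay layer $RL(w)$ of the process $w$ that receives a message $m$ via one of its sink relays --- and to verify, clause by clause, that the freshly created relay $s$ satisfies Definition~\ref{def:valid_relay}. By Lemma~\ref{lem:valid_relay_parameter_created_is_valid} the relay parameter $(key',ID',level',sRID')$ is still a valid relay parameter at the instant it is consumed, so all eleven properties of a valid relay parameter are at our disposal; that lemma moreover already guarantees that a valid relay parameter is turned into a relay rather than discarded or replaced by $\bot$ (which could happen only if Property~6 or Property~7 of the relay parameter failed), so only the validity of $s$ itself remains to be argued.

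By construction $s.state = alive$, $s.ID$ is globally fresh, $s.out = (\{key'\},ID')$, $s.level = level'$, $s.sinkRID = sRID'$, $s.In = \emptyset$, and $s.Buf$ contains exactly the $\probe{}$ message written by Line~\ref{line:rl:probe_activation_creation}. Hence Properties~\ref{item:vr:state_alive}--\ref{item:vr:rout_key} are immediate; Properties~\ref{item:vr:rin_stores_triples}, \ref{item:vr:in_key_unique}, \ref{item:vr:notauthorized} and \ref{item:vr:newprop} hold vacuously because $s.In = \emptyset$; and Properties~\ref{item:vr:ping} and \ref{item:vr:outrelayclosed} hold since no $\ping{}$, $\outrelayclosed{}$ or $\notauthorized{}$ message in the system can mention the fresh identifier $s.ID$. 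Because Property~3 of the relay parameter gives $ID' \ne \bot$, $s$ is not a sink, so it remains to verify Property~\ref{item:vr:non_sink}. Property~\ref{item:vr:non_sink:outID} is immediate; Property~\ref{item:vr:non_sink:nextValid} is Property~3 of the relay parameter (the relay $r'$ with $r'.ID = ID'$ is valid, and it is not altered by the action creating $s$); Property~\ref{item:vr:non_sink:level_and_rid_correct} is Property~4 of the relay parameter ($s.level = level' = r'.level + 1$ and $s.sinkRID = sRID' = r'.sinkRID$); Property~\ref{item:vr:non_sink:out_key_unique} follows from Property~7 of the relay parameter, since no relay in the system carried $key'$ in its out-key set, so after the creation $s$ is the unique such relay owned by $w$; and Property~\ref{item:vr:non_sink:inrelayclosed} follows from Property~11 of the relay parameter.

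The clause that is not a direct transcription is Property~\ref{item:vr:non_sink:one_key_correct}. For this I would invoke Property~5 of the valid relay parameter, which supplies a valid relay $r''$ owned by $RL(r')$ with $(key',\bot,r'') \in r'.In$ and with $r''.sinkRID$ equal to the $sinkRID$ of the relay $r$ in whose buffer $m$ sat; the $\probe{}$ message placed in $s.Buf$ by Line~\ref{line:rl:probe_activation_creation} is, by inspection, exactly the message $((key',s.ID,s.out.ID), \probe{\{\}, key'})$ demanded by that clause. The point that needs an argument is that $r''.sinkRID$ equals $RID(w)$ --- equivalently $RID(s)$, since $s$ is owned by $w$: since the link layer delivered $m$ out of $r.Buf$ to $RL(w)$, the relay $\rho$ with $\rho.ID = r.out.ID$ is the sink relay that received $m$, it is owned by $w$, and it is valid by Property~\ref{item:vr:non_sink:nextValid} for the valid relay $r$; then Property~\ref{item:vr:non_sink:level_and_rid_correct} for $r$ together with Property~\ref{item:vr:sink} for $\rho$ give $r.sinkRID = \rho.sinkRID = RID(w)$, whence $r''.sinkRID = r.sinkRID = RID(w) = RID(s)$. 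I expect this identification of the $sinkRID$ recorded on the incoming side of $r'$ with the \RID of the relay layer creating $s$ to be the main obstacle; the remainder is a clause-by-clause correspondence, essentially the mirror image of the final paragraph of the proof of Lemma~\ref{lem:valid_relay_parameter_created_is_valid}.
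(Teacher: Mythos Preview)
Your proposal is correct and follows essentially the same approach as the paper's proof: a clause-by-clause verification of Definition~\ref{def:valid_relay} for the newly created relay, drawing on the eleven properties of the valid relay parameter being consumed. The only noteworthy difference is cosmetic---the paper names the new relay $r$ and the buffering relay $r'''$, whereas you use $s$ and $r$---and your treatment of Property~\ref{item:vr:non_sink:one_key_correct} (the identification $r''.sinkRID = RID(w)$ via $r.sinkRID = \rho.sinkRID$) is slightly more explicit than the paper's, but the substance is the same.
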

\begin{proof}
 Assume a relay $r$ is created from a valid relay parameter $(key, ID,level,\allowbreak sinkRID)$.
 This happens when the message $m$ containing the valid relay parameter is received by a relay layer such that the relay with ID $outID$ is a sink. 
 Note $r$ is created with $r.state = alive$, a globally unique $r.ID$ a pair $r.out$ and empty $r.In$.
 Thus, Property~\ref{item:vr:state_alive} to Property~\ref{item:vr:newprop} are statisfied.
 
 Furthermore $r$ is created such that $r.out.ID = ID$ and by Property~3 of a valid relay parameter, the relay $r'$ with $r'.ID = r.out.ID$ is valid, i.e., Property~\ref{item:vr:non_sink:outID}) and Property~\ref{item:vr:non_sink:nextValid}) hold true.
 
 According to Property~4 of a valid relay parameter, Property~\ref{item:vr:non_sink:level_and_rid_correct}) of a valid relay holds as well.
  
 For Property~\ref{item:vr:non_sink:one_key_correct}) of a valid relay, first note that before the relay parameter was received, the corresponding message was stored in $r'''.Buf$ for a relay $r'''$ before and $r'''.out.ID = outID$ (if the message was not delivered this way, in which case it must have been in the system initially, then the relay is not created from a valid relay parameter).
 By Property~1 of a valid relay parameter, $r'''$ was valid, thus $r'''.sinkRID = RID(outID)$ by Property~\ref{item:vr:non_sink:level_and_rid_correct}) of a valid relay.
 Thus, according to Property~5 of a valid relay parameter, $(key,\bot,r'') \in r'$ for some valid relay $r''$ and $r''.sinkRID = outID$.
 Since upon creation of the relay $r$, the message $((key,r.ID,r.out.ID), \probe{\{\}, key})$ is put into $r.Buf$, and $key$ was put into $r.out.Key$, and because of Property~11, Property~\ref{item:vr:non_sink:one_key_correct}) of a valid relay holds for $r$.
 
 For Property~\ref{item:vr:non_sink:out_key_unique}) check Line~\ref{line:rl:receipt_no_double_keys}. 
 
 Property~\ref{item:vr:non_sink:inrelayclosed}) follows from Property~11 of a valid relay parameter.
\end{proof}

\begin{lemma}\label{lem:sink_becomes_valid}
 If the application does not delete a relay $r'$ if $r'.In \ne \emptyset$, every alive relay $r$ with $r.out.ID = \bot$ becomes a valid relay unless it is deleted by the application.
\end{lemma}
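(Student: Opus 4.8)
The plan is to prove this convergence statement by establishing the clauses of Definition~\ref{def:valid_relay} (with the \emph{sink} alternative, Property~\ref{item:vr:sink}) one group at a time, each from some point on in the computation. Fix an alive relay $r$ with $r.out.ID=\bot$ that the application never deletes, and let $u$ be its owner; we may assume $u$ is active, since otherwise $r$, being a sink, is deleted as a direct consequence of $u$'s \textbf{stop}. First I would show that $r$ survives and that its local fields stabilize: by Lemma~\ref{lem:delete_never_called} (case~(a)), \textbf{delete}~$r$ is never invoked, so $r.state$ stays \emph{alive} (Property~\ref{item:vr:state_alive}); inspecting the pseudocode, $r.ID$ and the structural shape of $r.out$ are never altered and $r.out.ID$ is never set to a non-$\bot$ value, which gives Properties~\ref{item:vr:rid_unique} and \ref{item:vr:rout_key} (global uniqueness of IDs being a standing assumption). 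The first execution of the \timeout action of $RL(u)$ corrects the obviously-corrupted local fields of the sink relay --- setting $r.level:=0$ (Line~\ref{line:rl:timeout_level_set}), $r.out.Key:=\{\}$ (Line~\ref{line:rl:timeout_key_empty}), and $r.sinkRID:=RID(r)$ --- and removes from $r.In$ everything that is not a well-formed triple with a locally-unique key belonging to $RID(r)$ (Lines~\ref{line:rl:timeout_r_in_remove1}, \ref{todo:this:was:added}, \ref{line:rl:timeout_r_in_remove2}). I would then check that none of these fields is ever changed again illegally --- no code path re-populates a sink's $out.Key$, raises its level, or changes its $sinkRID$ --- and that every key subsequently inserted into $r.In$, whether as a confirmed triple in Lines~\ref{line:rl:triple_replacement_begin}--\ref{line:rl:triple_replacement_end} or as an unconfirmed one in Line~\ref{line:rl:new_key_added_to_r_in}, is freshly generated and globally unique; this yields Properties~\ref{item:vr:in_key_unique} and \ref{item:vr:sink}, together with the ``triples only'' part of Property~\ref{item:vr:rin_stores_triples}.

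Second, I would show that all stray control messages \emph{about} $r$ eventually disappear. Since the initial state has only finitely many messages and every \ping{}, \outrelayclosed{}, and \notauthorized{} message is removed from its buffer once processed, it suffices to argue that after the finitely many initial such messages and their bounded cascades of replies are gone, no new offending ones are produced: henceforth $\ping{r.ID, level, sinkRID, key}$ messages are emitted only by $RL(u)$ in Line~\ref{line:rl:ping_send}, with the now-correct values of $r.level$ and $r.sinkRID$ and only for confirmed keys of $r.In$ (which are disjoint from the unconfirmed ones by local uniqueness); $\outrelayclosed{r.ID}$ can be produced only in Line~\ref{line:rl:delete_outrelayclosed} (inside the never-executed \textbf{delete}~$r$) or in Line~\ref{line:rl:messagereceipt_outrelayclosed} (which requires that no alive relay with id $r.ID$ exists, contradicting the survival of $r$); and a \notauthorized{} message addressed at a key actually in $r.In$ is never produced once $r.In$ is stable, since the guard in Line~\ref{line:rl:messagereceipt_notauthorized} fires only for a header key \emph{not} in $r.In$. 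This establishes Properties~\ref{item:vr:ping}, \ref{item:vr:outrelayclosed}, and \ref{item:vr:notauthorized}.

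The hard part will be Property~\ref{item:vr:newprop}, the invariant tying each unconfirmed triple $(key,\bot,r'')\in r.In$ to carrier/\probe{}/\probefail{} messages. My plan is a case split on each such surviving triple: either there really is a carrier message bearing the relay parameter with key $key$ somewhere along the outgoing path $r_1=r'',r_2,\dots,r_k$ from $r''$ to its sink (or a fresh \probe{\{\}, key} headed for $r$ at that sink), or there is not. In the first case the corresponding disjunct of Property~\ref{item:vr:newprop} holds and is preserved as the carrier is forwarded and finally received --- the sink then manufactures the \probe{\{\}, key} (Line~\ref{line:rl:probe_activation_creation}) --- which is exactly the bookkeeping already carried out for Property~\ref{item:vr:newprop} in the proof of Lemma~\ref{lem:valid_relay_remains_valid}. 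In the second case, the \probe{} that $r''$ launches in every execution of the \timeout action of $RL(r'')$ carries $key$ in its $controlKeys$ precisely because no carrier for $key$ sits in $r''.Buf$; since $key$ is globally unique and (after the first \timeout) held in no relay's $out.Key$, no relay can re-create such a relay, so $key$ is never dropped by the forwarding code (Lines~\ref{line:rl:probe_forwarding_begin}--\ref{line:rl:probe_forwarding_end}) and the \probe{} reaches the sink with $key$ still present, triggering a \probefail{key, \dots} (Line~\ref{line:rl:msg_arrived_probefail_check}) that travels back along the reversed path and removes $(key,\bot,r'')$ from $r.In$ (Line~\ref{line:rs:r_in_remove_probefail}). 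After this purge every remaining unconfirmed triple of $r.In$ is backed by a genuine carrier and the invariant holds and persists; the ``no stray \probe{} off the path'' clause follows because \probe{} messages move only along outgoing edges and $key$ leaves $controlKeys$ as soon as its carrier is observed.

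Finally, the validity requirement left open in the first paragraph --- that the relay $r''$ occurring in a \emph{surviving} triple $(key,\bot,r'')\in r.In$ is itself valid, not merely existing --- I would close by a well-founded induction interleaved with the above, ordering the relays owned by $u$ by the number of relays strictly between them and a sink along outgoing edges (a quantity that does not increase for surviving triples), applying Lemma~\ref{lem:valid_relay_remains_valid} when $r''$ is non-sink and this lemma recursively when $r''$ is a sink. I expect the main obstacle to be precisely this probe analysis together with that induction: one must simultaneously give a termination argument for the ``down, then back up'' journey of the probing, show that the legitimate unconfirmed entries keep satisfying the precise disjunction of Property~\ref{item:vr:newprop} as carriers, \probe{}, and \probefail{} messages move around, and feed the outcome back into the induction. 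The accounting of where such messages can sit at any time, however, mirrors the case analyses already done in Lemmas~\ref{lem:valid_relay_remains_valid} and \ref{lem:valid_relay_parameter_created_is_valid}, which I would reuse rather than redo.
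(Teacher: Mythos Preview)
Your plan tracks the paper's proof closely for Properties~\ref{item:vr:state_alive}--\ref{item:vr:notauthorized} and~\ref{item:vr:sink}; the stabilization of local fields via \timeout and the dying-out of stale \ping{}/\outrelayclosed{}/\notauthorized{} messages are handled the same way. Two places diverge, and one of them is a genuine gap.

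For the termination of the probing journey (which you correctly flag as the main obstacle), the paper supplies the argument you omit: in an illegitimate state the outgoing chain from $r''$ need not terminate at a sink, so a \probe{} could in principle be forwarded forever around a cycle $r_0,\dots,r_k,r_0$. The paper observes that in any such cycle some $r_l$ satisfies $r_l.level < r_{l+1}.level + 1$; since $r_{l+1}.level$ can only be raised by incoming \ping{}s, the \ping{} that \timeout eventually sends from $RL(r_{l+1})$ to $RL(r_l)$ will trigger Line~\ref{line:rl:ping_delete} and delete $r_l$, breaking the cycle. (Termination of \probefail{} is immediate since $keySequence$ strictly shrinks.) Your proposal implicitly assumes a finite path $r_1=r'',\dots,r_k$ to a sink exists; that is precisely what this level argument buys.

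For Property~\ref{item:vr:rin_stores_triples} --- validity of $r''$ in a surviving triple $(key,\bot,r'')$ --- your proposed induction does not work as stated. You invoke Lemma~\ref{lem:valid_relay_remains_valid} for non-sink $r''$, but that lemma only says a valid relay \emph{stays} valid; it cannot turn an initially invalid non-sink $r''$ into a valid one. Moreover your well-founded order ``number of relays between $r''$ and a sink along outgoing edges'' is undefined exactly when the outgoing chain cycles, which is the case you have not yet excluded. The paper sidesteps this entirely: instead of proving such an $r''$ valid, it argues that triples $(key,\bot,r'')$ with invalid $r''$ are eventually \emph{removed} from $r.In$ --- either replaced by confirmed triples $(key,RID,\bot)$, or purged via \probefail{} once the probing mechanism runs to completion (which the cycle-breaking argument above guarantees), or dropped by Line~\ref{todo:this:was:added} when $r''$ itself disappears. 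So validity of $r''$ in the \emph{surviving} triples is obtained by elimination rather than by induction. Replacing your induction by this removal argument (and adding the level-based cycle-breaking) makes your plan match the paper's proof.
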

\begin{proof}
 First of all notice that $r$ does not get deleted unless by the application according to Lemma~\ref{lem:delete_never_called}.
 Note that $r.state = alive$ by assumption (yielding Property~\ref{item:vr:state_alive}), $r.ID$ is globally unique by the assumption that there are no corrupted IDs in the system (yielding Property~\ref{item:vr:rid_unique}), Property~\ref{item:vr:rout_key} is satisfied by \timeout, and Property~\ref{item:vr:rin_stores_triples} and Property~\ref{item:vr:in_key_unique} are satisfied by Line~\ref{line:rl:timeout_r_in_remove1} and Line~\ref{line:rl:timeout_r_in_remove2} in Listing~\ref{lst:rl:timeout}, with one exception: 
 For those triples $(key,\bot,r'')$ that violate Property~\ref{item:vr:rin_stores_triples} in that $r''$ is not valid, note that all these triples will eventually be removed, either because they are replaced by the other kind of triples or becase the probing failed.
 
 Notice also that by Line~\ref{line:rl:timeout_level_set} and Line~\ref{line:rl:timeout_key_empty}, Property~\ref{item:vr:sink} of a valid relay will be fulfilled after the next execution of \timeout.
 
  For Property~\ref{item:vr:ping} check that any \ping{} message with first parameter $r.ID$ is sent only by $RL(r)$ as \ping{r.ID, r.level, r.sinkRID, key} and only such that $(key, RID, \bot) \in r.In$ for some $RID$ (this happens in Line~\ref{line:rl:ping_send}).
 Further note that as soon as Property~\ref{item:vr:sink}) holds, it will hold forever, i.e., $r.level$ and $r.sinkRID$ will never change.
 Thus as soon as after this point in time all existing \ping{} messages with first parameter $r.ID$ have been received, Property~\ref{item:vr:ping} holds and holds forever.

  For Property~\ref{item:vr:outrelayclosed} check that unless $r$ is deleted, an \outrelayclosed{r.ID} message could only be created in Line~\ref{line:rl:messagereceipt_outrelayclosed} by $RL(r)$ (due to the preceding ``if'') if $r$ is dead.
 Thus, as soon as all \outrelayclosed{r.ID} messages initially in the system are received, Property~\ref{item:vr:outrelayclosed} holds forever.
 
 To see that Property~\ref{item:vr:notauthorized} will eventually become true check in the pseudocode that a \notauthorized{m = ((key,inID,r.ID,level), action(parameters))} message is only sent in Line~\ref{line:rl:messagereceipt_notauthorized} and only by $RL(r)$.
 For this line to be executed, however, there must be no $(key,RID, \bot) \in r.In$ with $RID = RID(inID)$, and no $(key, \bot, r') \in r.In$
such that $r'.sinkRID = RID(inID)$.
 Thus such a message violating Property~\ref{item:vr:notauthorized} is never sent out and as soon as all of these message initially in the system have been received, Property~\ref{item:vr:notauthorized} will hold forever.
 
 For Property~\ref{item:vr:newprop}, we first note that every \probe{} message will not be forwarded infinitely often.
 Note that every message received by an arbitrary relay $r''$ is only forwarded to the relay with ID $r''.out.ID$ and that the value of $r''.out.ID$ cannot be changed for an existing relay.
 Thus, if a \probe{} message is forwarded infinitely often, then there must be a cycle $(r_0, \dots, r_k, r_{k+1} = r_0)$ of relays that this message traverses infinitely often.
 However, for some $l \in \{0, \dots, k\}$, $r_l.level < r_{l+1 mod k}.level + 1$ must hold, and any receipt of a \ping{} message by $RL(r_{l+1 mod k})$ can only increase $r_{l+1 mod k}.level$.
 Thus, at some point in time the \ping{ID, level, sinkRID, key} message sent during \timeout to $RL(r_l)$ with $ID = r_l.out.ID$ and $key \in r_l.out.Key$ will cause $r_l$ to be deleted and the cycle will be broken, yielding a contradiction.
 Second, not that every \probefail{key, keySequence} message cannot be forwarded infinitely often, as $keySequence$ is strictly decreasing during every forwarding.
 Thus, eventually all \probe{} and \probefail{} message initially in the system will be gone.
 The proof that no messages contradicting Property~\ref{item:vr:newprop} are created is analogous to the corresponding part in the proof of Lemma~\ref{lem:valid_relay_remains_valid}.
  
 All in all, as soon as \timeout is executed on $r$, and all initial messages have vanished, $r$ will be valid.
\end{proof}

For convenience, in the following we say a message $m$ has a target $r$ if $m \in r'.Buf$ for some relay $r'$ such that $r'.out.ID = r.ID$.

\begin{lemma}\label{lem:alive_non_sink_becomes_valid_or_gets_deleted}
 If the application does not delete a relay $r'$ if $r'.In \ne \emptyset$, for every alive relay $r$ that does not get merged with another relay or deleted by the application and such that with $r.out.ID = r'.ID$ for a valid relay $r'$, the following holds:
 If $r$ fulfills Property~\ref{item:vr:non_sink} 
 of a valid relay, $r$ will become a valid relay.
 Otherwise, $r$ will become a valid relay or become deleted in finite time.
\end{lemma}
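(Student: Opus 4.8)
The plan is to mirror, for a non-sink relay, the proof of Lemma~\ref{lem:sink_becomes_valid} (its sink analogue), reusing the invariant-preservation arguments of Lemma~\ref{lem:valid_relay_remains_valid} and the no-internal-deletion statement of Lemma~\ref{lem:delete_never_called}; as in those lemmas I work in the setting in which this lemma is applied, where in addition to never deleting a relay with non-empty $.In$ the application never sends a reference via a non-valid relay (so that, by Lemma~\ref{lem:valid_relay_remains_valid}, a valid relay stays valid until its owner deletes it). Write $r'$ for the relay with $r'.ID = r.out.ID$. Since $r'$ is valid, $r'.ID \ne \bot$, so Property~\ref{item:vr:non_sink:outID}) of a valid relay holds for $r$, and Property~\ref{item:vr:non_sink:nextValid}) holds for $r$ as long as $r'$ stays valid. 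By hypothesis $r$ is never deleted nor merged by the application, and by Lemma~\ref{lem:delete_never_called}, case~(b), no internal \textbf{delete}~$r$ occurs for as long as Property~\ref{item:vr:non_sink} holds for $r$.

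First suppose $r$ already fulfills Property~\ref{item:vr:non_sink}. Then Property~\ref{item:vr:non_sink:one_key_correct}) forces $r'.In \ne \emptyset$, so the application never deletes $r'$ and $r'$ stays valid forever; hence $r$ is never deleted and $r.state = alive$ persists, giving Property~\ref{item:vr:state_alive}. I would then show Property~\ref{item:vr:non_sink} is preserved: \ref{item:vr:non_sink:outID}) and \ref{item:vr:non_sink:nextValid}) persist since $r.out.ID$ never changes and $r'$ stays valid; \ref{item:vr:non_sink:level_and_rid_correct}) persists because $r'.level$ and $r'.sinkRID$ are frozen and, by Property~\ref{item:vr:ping} for $r'$, every $\ping{r'.ID, level, sinkRID, key}$ message received has $level = r'.level$ and $sinkRID = r'.sinkRID$, hence leaves $r.level = r'.level+1$ and $r.sinkRID = r'.sinkRID$ untouched (and \timeout does not change $r.level$, as it is $\ge 1$); and \ref{item:vr:non_sink:one_key_correct}), \ref{item:vr:non_sink:out_key_unique}), \ref{item:vr:non_sink:inrelayclosed}) are preserved by exactly the arguments given for these sub-properties in the proof of Lemma~\ref{lem:valid_relay_remains_valid} (using the validity of $r'$). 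The remaining Properties~\ref{item:vr:rid_unique}--\ref{item:vr:newprop} of a valid relay do not refer to whether $r$ is a sink, so they are established exactly as in the proof of Lemma~\ref{lem:sink_becomes_valid}: $r.ID$ and the form of $r.out$ never change; \timeout (Lines~\ref{line:rl:timeout_r_in_remove1}, \ref{todo:this:was:added}, \ref{line:rl:timeout_r_in_remove2}) purges all malformed entries of $r.In$, and every triple $(key, \bot, r'')$ with $r''$ not valid is eventually removed (replaced by a triple $(key, RID, \bot)$ upon receipt of a message, or dropped after the probing initiated by \timeout fails); $RL(r)$ emits only $\ping{r.ID, \cdot}$, $\outrelayclosed{r.ID}$ and $\notauthorized{}$ messages respecting Properties~\ref{item:vr:ping}, \ref{item:vr:outrelayclosed}, \ref{item:vr:notauthorized}, so once the finitely many such messages present initially have been consumed, those properties hold forever; and Property~\ref{item:vr:newprop} holds once all $\probe{}$ and $\probefail{}$ messages of the initial state have vanished --- none circulates forever, since levels along any relay chain are eventually strictly increasing toward the sink (breaking any cycle by a $\ping{}$-triggered deletion) and $keySequence$ strictly shrinks on each $\probefail{}$ hop --- while no new $\probe{}$ or $\probefail{}$ message violating Property~\ref{item:vr:newprop} is ever created. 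Hence, after \timeout has run on $r$ often enough and all messages of the initial state are gone, $r$ is valid.

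In the remaining case, $r$ does not fulfill Property~\ref{item:vr:non_sink}; since \ref{item:vr:non_sink:outID}) and \ref{item:vr:non_sink:nextValid}) hold, $r$ violates one of \ref{item:vr:non_sink:level_and_rid_correct}), \ref{item:vr:non_sink:one_key_correct}), \ref{item:vr:non_sink:out_key_unique}), \ref{item:vr:non_sink:inrelayclosed}). I would show that within finitely many steps $r$ is either deleted or satisfies all four, whence $r$ fulfills Property~\ref{item:vr:non_sink} and the first case makes $r$ valid. The engine is the \timeout action of $RL(r)$. If ever $r.out.Key = \emptyset$, \timeout (Line~\ref{line:rl:timeout_delete_key_empty}) deletes $r$; otherwise, for each $key \in r.out.Key$ \timeout repeatedly puts a $\probe{controlKeys, (key)}$ message into $r.Buf$, which the link layer delivers to $RL(r')$, where $r'$ is valid: either the header check for $r'$ fails and $RL(r')$ returns a $\notauthorized{}$ message carrying $key$ (Line~\ref{line:rl:messagereceipt_notauthorized}), so $RL(r)$ drops $key$ from $r.out.Key$ and deletes $r$ once it is empty (Line~\ref{line:rl:notauthorized_delete}); or the header is valid and, if the matching triple in $r'.In$ was still of the form $(key, \bot, r'')$ with $r''.sinkRID = RID(r)$, it is promoted to $(key, RID(r), \bot) \in r'.In$ (Lines~\ref{line:rl:triple_replacement_begin}--\ref{line:rl:triple_replacement_end}). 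So after finitely many rounds either $r$ is deleted or some $key \in r.out.Key$ has $(key, RID(r), \bot) \in r'.In$, establishing \ref{item:vr:non_sink:one_key_correct}); for such a $key$, $RL(r')$ periodically sends $\ping{r'.ID, r'.level, r'.sinkRID, key}$ to $RID(r)$ (Line~\ref{line:rl:ping_send}), which either corrects $r.level$ and $r.sinkRID$ or deletes $r$ (Line~\ref{line:rl:ping_delete}), establishing \ref{item:vr:non_sink:level_and_rid_correct}); whenever the owner of $r$ holds two relays sharing a key, \timeout (Line~\ref{line:rl:timeout_double_used_key_delete}) deletes the one with the smaller ID, so after finitely many rounds either $r$ is deleted or, once the marked-dead duplicates are fully removed, \ref{item:vr:non_sink:out_key_unique}) holds; and the finitely many $\inrelayclosed{Keys, RID(r), r'.ID}$ messages present initially are consumed by $RL(r')$ while, by the above, no new such violating message is ever produced, so \ref{item:vr:non_sink:inrelayclosed}) eventually holds. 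Since the initial state contains only finitely many messages and none of these mechanisms recreates a violation, $r$ is, in finite time, either deleted or valid. (Should the application delete $r'$ at some point, $RL(r)$ deletes $r$ in finite time --- via the $\outrelayclosed{r'.ID}$ message $RL(r')$ sends to the relay layers recorded in $r'.In$, or via a $\probe{}$ pushed through $r$ that $RL(r')$ answers with $\outrelayclosed{r'.ID}$ --- which is also consistent with the claim.)

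The step I expect to be the main obstacle is making the second case airtight. One must verify, by a close reading of Listing~\ref{lst:rl:ch_added}, that a $\probe{}$ reaching $RL(r')$ with a key lacking a valid header really triggers a $\notauthorized{}$ response (rather than being absorbed because of a stray triple in $r'.In$), that a valid-but-unconfirmed key is indeed promoted to a confirmed triple, and that the corrections driven by $\notauthorized{}$, by $\ping{}$, and by duplicate keys cannot interleave so as to keep $r$ perpetually oscillating among the four sub-properties. Intertwined with this is the need --- exactly as in the proof of Lemma~\ref{lem:sink_becomes_valid} --- to justify the phrase ``after finitely many rounds'' by showing that $\probe{}$/$\probefail{}$ traffic, whose correctness is governed by Property~\ref{item:vr:newprop}, cannot loop forever; the level-monotonicity argument ruling out cycles of relay connections is the linchpin there.
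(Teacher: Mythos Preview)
Your proposal is correct and follows essentially the same approach as the paper: both proofs defer Properties~\ref{item:vr:state_alive}--\ref{item:vr:in_key_unique} and \ref{item:vr:outrelayclosed}--\ref{item:vr:newprop} to the arguments of Lemma~\ref{lem:sink_becomes_valid}, handle the case where Property~\ref{item:vr:non_sink} already holds by observing that Property~\ref{item:vr:ping} then follows, and in the remaining case establish the sub-properties \ref{item:vr:non_sink:level_and_rid_correct})--\ref{item:vr:non_sink:inrelayclosed}) one by one via the \timeout/\probe{}/\notauthorized{}/\ping{} mechanisms (the paper treats them in the order \ref{item:vr:non_sink:out_key_unique}), \ref{item:vr:non_sink:inrelayclosed}), \ref{item:vr:non_sink:one_key_correct}), \ref{item:vr:non_sink:level_and_rid_correct}), but this reordering is inessential). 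The ``oscillation'' subtlety you flag is exactly the one the paper addresses by noting that $r.level$ can only decrease, so the \notauthorized{} handler's level check can fail only finitely often.
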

\begin{proof}
 When $r$ gets merged or deleted by the application, we are done.
 Thus, in the following we assume that $r$ is not merged or deleted in finite time.

 Property~\ref{item:vr:state_alive} through Property~\ref{item:vr:in_key_unique} and Property~\ref{item:vr:outrelayclosed} through Property~\ref{item:vr:newprop} will be fulfilled after finite time for similar arguments as in Lemma~\ref{lem:sink_becomes_valid}.
 If Property~\ref{item:vr:non_sink} holds, Property~\ref{item:vr:ping} will also follow for similar reasons (here, Property~\ref{item:vr:non_sink:nextValid}) and Property~\ref{item:vr:non_sink:level_and_rid_correct}) imply that $r.level$ and $r.sinkRID$ will not change any more).
 Thus we are done in this case.

 In the following assume that $r$ does not get deleted in finite time at all (because otherwise we are done).
 Note that Property~\ref{item:vr:non_sink:outID}) and Property~\ref{item:vr:non_sink:nextValid}) holds by assumption. 
 We will now show that at some point in time either Property~\ref{item:vr:non_sink:out_key_unique}) will hold forever or $r$ will be deleted.
 After that we show that the same holds for Property~\ref{item:vr:non_sink:inrelayclosed}) and Property~\ref{item:vr:non_sink:one_key_correct}) and also for Property~\ref{item:vr:non_sink:level_and_rid_correct}).
 
   For Property~\ref{item:vr:non_sink:out_key_unique}) check that in case the property is violated, Line~\ref{line:rl:timeout_double_used_key_delete} of the \timeout action makes sure it becomes satisfied (either by removing $key$ from $r.out.Key$ or from $r'''.out.Key$).
 Furthermore, note that the only occasion at which a key $key$ is added to $r'''.out.Key$ for a relay $r'''$ owned by the same process as $r$ is in Line~\ref{line:rl:key_added}, in which due to Line~\ref{line:rl:receipt_no_double_keys} $key \notin r.out.Key$ holds.
 Thus, once Property~\ref{item:vr:non_sink:out_key_unique}) holds, it will hold forever.

  For Property~\ref{item:vr:non_sink:inrelayclosed}), note that according to the pseudocode, only $RL(r)$ could send such an \inrelayclosed{Keys, RID(r), r.out.ID} message with $key \in r.out.Key$.
  However, as long as $r$ is alive, according to the pseudocode, a message \inrelayclosed{Keys, RID(r), r.out.ID} with $key \in Keys$ can only be sent if $RL(r)$ receives a \ping{r.out.ID, level, sinkRID, key} and $key \notin r.out.Key$, yielding a contradiction.
  Thus, no such message is ever sent.
  Thus as soon as all \inrelayclosed{} messages initially in the system have vanished, no one contradicting Property~\ref{item:vr:non_sink:inrelayclosed}) will be created.

 Now assume Property~\ref{item:vr:non_sink:one_key_correct}) is violated.
 First of all, note that no $key$ is added to $r.out.Key$ because this only happens during the creation of $r$.
 Furthermore, note that every $((key,r.ID,r.out.ID), \probe{\{\}, key}) \in r.Buf$ will eventually be delivered to $r'$ causing $RL(r')$ to replace $(key,\bot,r'')$ by $(key,RID(r),\bot)$ in $r'.In$.
 Thus, eventually for every $key \in r.out.Key$, either $key$ is not the first parameter of a triple in $r'.In$ at all or $(key,RID,\bot) \in r'.In$.
 Observe that $r$ during \timeout will eventually send \probe{} messages with target $r'$ for every $key \in r.out.Key$.
 If for every key $key$ of these keys $(key, RID, \bot) \notin r'.In$ or $RID \ne RID(r)$, and $(key, \bot, r'') \notin r'.In$ or $r''.sinkRID \ne RID(r)$, according to the pseudocode in Listing~\ref{lst:rl:ch_added}, the relay layer of $r$ will receive \notauthorized{} messages for all of these messages causing it to remove the keys from $r.out.Key$ (note the little subtlety that $r.level$ might have changed in the meanwhile in which case the code is not executed, but this can happen only a finite number of times since $r.level$ is only decreased, see the pseudocode in Listing~\ref{lst:rl:ping}).
 Once the last key has been removed, $r$ will be deleted.
 Thus Property~\ref{item:vr:non_sink:one_key_correct}) holds eventually.
 Note that Property~\ref{item:vr:non_sink:one_key_correct}) holds forever for two reasons:
 First of all, a key is only removed from $r.out.Key$ if $RL(r)$ receives a $\notauthorized{m = ((key,inID,outID,level), action(parameters))}$ message with $r.ID = inID$, $r.out.ID = outID \ne \bot$, $r.level = level$ and $key \in r.out.Key$.
 Due to Property~\ref{item:vr:notauthorized} for $r'$, this cannot happen for at least one $key \in r.out.Key$.
 Second, a valid relay $r'$ only removes a $(key,\bot,r'')$ from $r'.In$ if it replaces them by $(key, RID, \bot)$ and it only removes a $(key, RID, \bot)$ from $r'.In$ if it receives a \inrelayclosed{Keys, RID, r'.ID} message with $key \in Keys$.
 This cannot happen since we already proved that there will be no message contradicting Property~\ref{item:vr:non_sink:inrelayclosed}).
 
 For Property~\ref{item:vr:non_sink:level_and_rid_correct}) assume that Property~\ref{item:vr:non_sink:one_key_correct}) already holds (which will be the case as we have just proven) and let $key \in r.out.Key$ such that (a) $(key, RID, \bot) \in r'.In$ and $RID = RID(r)$, or (b) $(key, \bot, r'') \in r'.In$ and $r''.sinkRID = RID(r)$ and $((key,r.ID,r.out.ID), \probe{\{\}, key}) \in r.Buf$.
 We consider both cases individually.
 In Case~(a), during \timeout, $RL(r')$ will send a \ping{r'.ID, r'.level, r'.sink, key} message with target $r$.
 Upon receipt of this message, $RL(r)$ will either delete $r$ or update the values of $r$ such that Property~\ref{item:vr:non_sink:level_and_rid_correct}) is fulfilled.
 In Case~(b), the $((key,r.ID,r.out.ID), \probe{\{\}, (key)})$ message in $r.Buf$ will eventually be delivered to $RL(r')$ in which case according to Line~\ref{line:rl:r_in_remove_message_received}f. in Listing~\ref{lst:rl:ch_added}, $(key, \bot, r'')$ in $r'.In$ will be replaced by $(key, RID(r.ID), \bot)$.
 After this we are in Case~(a).
 Note that since Property~\ref{item:vr:ping} holds for $r'$, and $r.level$ and $r.sinkRID$ are only changed due to the receipt of a \ping{} message, Property~\ref{item:vr:non_sink:level_and_rid_correct}) will hold forever.

 All in all, we obtain the claim of the lemma.
\end{proof}

\begin{lemma}\label{lem:merge_from_valid_creates_valid}
 For every call of $merge(R)$ such that there is a valid relay $r \in R$, either merge does nothing or the resulting relay $r'$ is a valid relay.
\end{lemma}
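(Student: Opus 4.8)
The plan is to argue directly from the definition of the \textbf{merge} primitive together with Definition~\ref{def:valid_relay}. If one of the guard conditions of \textbf{merge($R$)} fails the primitive does nothing and there is nothing to prove, so I assume it creates the new relay $r'$ and deletes every relay in $R$ (the latter is already accounted for by Lemma~\ref{lem:valid_relay_remains_valid}, which treats a merge as a deletion of the old relays). I fix a valid $r\in R$ and record what the guard and the construction guarantee: the common values returned satisfy $ID=r.out.ID$, $\ell=r.level$, $sinkRID=r.sinkRID$; $r'$ is owned by the same relay layer as $r$, so $RID(r')=RID(r)$; $r'.out.Key=\bigcup_{s\in R}s.out.Key\supseteq r.out.Key$; $r'.Buf=\bigcup_{s\in R}s.Buf\supseteq r.Buf$; $r'.In=\{\}$; and the merge changes nothing outside $R\cup\{r'\}$. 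I then verify the clauses of Definition~\ref{def:valid_relay} for $r'$ in three groups.

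First, the clauses that hold "for free" from the construction: $r'.state=alive$ gives \ref{item:vr:state_alive}; $r'.ID$ is a freshly generated globally unique identifier, giving \ref{item:vr:rid_unique}; $r'.out$ is a $(Key,ID)$ pair whose first component is a set (a union of sets), giving \ref{item:vr:rout_key}; and $r'.In=\{\}$ makes \ref{item:vr:rin_stores_triples}, \ref{item:vr:in_key_unique}, both clauses of \ref{item:vr:notauthorized}, and \ref{item:vr:newprop} vacuously true. Since $r'.ID$ is new, no \ping{r'.ID,\dots} and no \outrelayclosed{r'.ID} message exists in the system, giving \ref{item:vr:ping} and \ref{item:vr:outrelayclosed}.

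Second, the sink/non-sink dichotomy, decided by whether $ID=\bot$. If $r$ is a sink, validity of $r$ forces (via \ref{item:vr:sink}) $r.out=(\{\},\bot)$, $r.level=0$ and $r.sinkRID=RID(r)$; hence $ID=\bot$, $\ell=0$ and $sinkRID=RID(r)=RID(r')$, so $r'.out.ID=\bot$, $r'.level=0$ and $r'.sinkRID=RID(r')$, and the only thing left to establish is $\bigcup_{s\in R}s.out.Key=\{\}$. If $r$ is not a sink, then $ID=r.out.ID\neq\bot$, giving \ref{item:vr:non_sink:outID}; the relay $r_1$ with $r_1.ID=ID$ is valid by \ref{item:vr:non_sink:nextValid} for $r$ (and lies outside $R$, since otherwise it would point to itself), giving \ref{item:vr:non_sink:nextValid} for $r'$; and $r'.level=\ell=r.level=r_1.level+1$, $r'.sinkRID=sinkRID=r.sinkRID=r_1.sinkRID$ by \ref{item:vr:non_sink:level_and_rid_correct} for $r$, giving \ref{item:vr:non_sink:level_and_rid_correct} for $r'$. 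Property~\ref{item:vr:non_sink:inrelayclosed} transfers from $r$ since a merge sends no \inrelayclosed{} message. For \ref{item:vr:non_sink:one_key_correct} I would start from the witness key $key\in r.out.Key\subseteq r'.out.Key$ supplied by \ref{item:vr:non_sink:one_key_correct} for $r$: in the sub-case where $(key,RID(r),\bot)\in r_1.In$, this triple is untouched by the merge and, as $RID(r)=RID(r')$, it witnesses \ref{item:vr:non_sink:one_key_correct} for $r'$ immediately.

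The main obstacle is the collection of loose ends created by the fact that \textbf{merge} pools data from \emph{all} of $R$, not only from the valid relay $r$: (a) in the sink case, $r'.out.Key$ would be nonempty if some (a priori non-valid) $s\in R$ stored $s.out.Key\neq\{\}$ together with $s.out.ID=\bot$; and (b) in the non-sink case, the uniqueness clause \ref{item:vr:non_sink:out_key_unique} for $r'$ ranges over all of $\bigcup_{s\in R}s.out.Key$, while if the witness for \ref{item:vr:non_sink:one_key_correct} on $r$ is of the activation kind --- the probe $\probe{\{\}, key}$ sitting in $r.Buf$ with header $(key,r.ID,r.out.ID)$ --- then its copy inherited into $r'.Buf$ still carries that \emph{stale} header rather than $(key,r'.ID,r'.out.ID)$. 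I expect to handle (a) and (b) by using that the \textbf{merge} guard has already forced every $s\in R$ to agree with $r$ on $out.ID$, $level$ and $sinkRID$ and to have $s.In=\{\}$, and that in the settings in which the lemma is invoked --- the closure and convergence arguments built on Lemmas~\ref{lem:valid_relay_remains_valid}--\ref{lem:alive_non_sink_becomes_valid_or_gets_deleted} --- the companion relays in $R$ are themselves valid, so their $out.Key$-sets are empty in the sink case and pairwise key-disjoint from all other relays in the non-sink case; and the stale activation probe is harmless, since it still has a valid header for $r_1$ and its delivery rewrites the matching triple of $r_1.In$ into the $(key,RID(r'),\bot)$ form, while \timeout in any case re-emits a fresh $\probe{\{\}, key}$ with the correct header for every key of $r'.out.Key$. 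Checking that no other artefact of the deleted $R$-relays (stale headers, freed keys, dangling $(key,\bot,s)$ triples in $r_1.In$) breaks a clause of Definition~\ref{def:valid_relay} for $r'$ is the delicate part of the argument.
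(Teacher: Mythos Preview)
Your approach is the same as the paper's---direct verification of Definition~\ref{def:valid_relay} from the merge guard and $r'.In=\{\}$---but you carry it out in far more detail. The paper's proof is essentially your ``for free'' paragraph: it asserts that Properties~\ref{item:vr:state_alive}--\ref{item:vr:notauthorized} follow immediately from $r$ being valid together with $r''.In=\emptyset$, and then claims in one further line that the remaining (sink/non-sink) clauses follow ``from the fact that $r$ was valid and that all relays $r'\in R$ are deleted during merge.''

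The loose ends you isolate under (a) and (b) are genuine, and the paper's proof does not address them. With only \emph{one} $r\in R$ assumed valid, a non-valid companion $s\in R$ could in principle carry $s.out.Key\neq\{\}$ together with $s.out.ID=\bot$ (breaking \ref{item:vr:sink} for the merged relay), or hold a key already present in some relay outside $R$ (breaking \ref{item:vr:non_sink:out_key_unique}), or a key for which an \inrelayclosed{} message is already in transit (breaking \ref{item:vr:non_sink:inrelayclosed}); and the activation probe inherited into $r'.Buf$ does indeed carry the stale $inID=r.ID$, so it does not literally witness \ref{item:vr:non_sink:one_key_correct} for $r'$. The paper's one-line justification silently relies on the benign situation you yourself describe---every member of $R$ is valid, as holds in the closure proof of Theorem~\ref{thm:rl:valid_relay_initially_will_remain_valid} and in the final phase of the convergence proof of Theorem~\ref{thm:rl:fair_delete_no_indirect_forwarding}---without saying so. Your treatment is therefore strictly more careful than the paper's; just be aware that the resolution you sketch amounts to strengthening the lemma's hypothesis rather than proving it exactly as stated.
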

\begin{proof}
 Check that if $merge(R)$ does not do nothing then for all relays $r' \in R$, $r'.state=alive$, $r'.out.ID = r.out.ID$, $r'.level = r.level$, $r'.sinkRID = r.sinkRID$, and $r.In=\{ \}$.
 Furthermore, for the new relay $r''$ created it holds that $r''.ID$ is a new ID, $r''.state=alive$, and $r''.out=(Key,ID)$ with $r.Key \subseteq Key$ $r''.level=r'.level$, $r''.In=\{ \}$, and $r''.Buf=\bigcup_{r \in R} r.Buf$.
 Thus Property~\ref{item:vr:state_alive} through Property~\ref{item:vr:notauthorized} follow immediately from the fact that $r$ was valid an the fact that $r''.In = \emptyset$.
 PropertyR10 also follows from the fact that $r$ was valid and that all relays $r' \in R$ are deleted during merge.
 
\end{proof}

\begin{proofof}{Theorem~\ref{thm:rl:messages_sent_via_valid_relay_will_be_received_by_sink}}
 Follows from Property~\ref{item:vr:rout_key} and Property~\ref{item:vr:rin_stores_triples} of a valid relay and the fact that according to the pseudocode the message is always forwarded from $r$ to $r'$ (as defined in Definition~\ref{def:valid_relay} until it is received by a process with $r.out.ID = \bot$.
\end{proofof}

\begin{proofof}{Theorem~\ref{thm:rl:valid_relay_initially_will_remain_valid}}
 Note that every sink relay created (via \textbf{new Relay}) becomes a valid relay.
 Furthermore, according to Lemma~\ref{lem:valid_relay_remains_valid}, every valid relay remains a valid relay.
 Additionally, every relay reference created becomes a valid relay reference according to Lemma~\ref{lem:valid_relay_parameter_created_is_valid} (note that it can only be created from a valid relay and only sent via a valid relay).
 Furthermore, by that lemma, every relay reference remains a valid reference.
 Besides, by Lemma~\ref{lem:valid_relay_parameter_yields_valid_relay} every relay created by a message receipt will be a valid relay.
 Last, by Lemma~\ref{lem:merge_from_valid_creates_valid} every relay created by a merge will be a valid relay.
 Since there are no other occasions at which relays are created, the claim follows.
\end{proofof}

\begin{proofof}{Theorem~\ref{thm:rl:fair_delete_no_indirect_forwarding}}
Note that every sink relay created becomes a valid relay and that by Lemma~\ref{lem:sink_becomes_valid}, every sink relay will eventually become valid.
By Lemma~\ref{lem:valid_relay_remains_valid}, they will remain valid forever.
Thus starting from some state $S$, only valid sink relays will exist.

Note that after $S$, every relay parameter of a sink relay created is a valid relay parameter by Lemma~\ref{lem:valid_relay_parameter_created_is_valid}.
Thus as soon as all relay parameters still in the system in state $S$ have been received or deleted, say at state $S'$, all relay parameters created from sink relays will be valid.
Thus, every relay $r$ created such that $r.out.ID = r'.ID$ for a sink relay $r'$ is a valid relay.
Furthermore, every existing relay $r$ such that $r.out.ID = r'.ID$ for a sink relay $r'$ will become valid or get deleted by Lemma~\ref{lem:alive_non_sink_becomes_valid_or_gets_deleted}.
Thus at some state $S''$, all direct relays are valid and will remain valid by Lemma~\ref{lem:valid_relay_remains_valid}.

The claim then follows from the assumption of the lemma and Lemma~\ref{lem:valid_relay_parameter_created_is_valid}, Lemma~\ref{lem:valid_relay_parameter_yields_valid_relay} and Lemma~\ref{lem:merge_from_valid_creates_valid}.
\end{proofof}

\begin{proofof}{Theorem~\ref{thm:rl_of_stopped_process_will_eventually_be_destroyed}}
 Recall that when a process $v$ executes \textbf{stop}, $RL(v)$ immediately deletes all sink relays and periodically deletes all relays $r$ with $r.In = \emptyset$ and $r.Buf = \emptyset$.
 
 First of all, assume for contradiction that on an inactive process there is a relay $r$ such that $r.In \ne \emptyset$ forever.
 Since $r$ is not a sink relay, $r.out.ID \ne \emptyset$.
 
 We first consider all $(key,\bot, r'') \in r.In$.
 Consider the sequence of relays $(r_1 = r'', r_2, \dots, r_k)$, such that $r_{i+1}.ID = r_i.out.ID$ for all $1 \le i < k$ and $r_k.out.ID = \bot$ (note that this sequence must be finite or at some point in time upon receipt of a \ping{} there will be a mismatch causing a relay on that sequence to be deleted).
 If all of the relays in this sequence are never deleted, they will become valid for similar reasons as in the proof of Lemma~\ref{lem:alive_non_sink_becomes_valid_or_gets_deleted}.
 Otherwise, all relays of the sequence will eventually be deleted for similar arguments as in this proof.
 Thus, eventually either $r''$ will be deleted (in which case $(key,\bot, r'')$ will be removed from $r.In$) or $r$ becomes valid.
 In this case, however, according to Property~\ref{item:vr:newprop} at some point in time, $r_k$ will receive a message containing a relay parameter with key $key$ upon which it will create a \probe{} message to $r$ with that key, or it already has.
 As soon as this message is received, $(key,\bot, r'')$ will be replaced by $(key, RID, \bot$ in $r.In$.
 Note that no new $(key,\bot, r'')$ are added to $r.In$ because the process owning $r$ is dead.
 
 Now we consider the $(key,RID,\bot) \in r.In$.
 Due to the pseudocode of \timeout in Listing~\ref{lst:rl:timeout}, $r.level \geq 1$ will eventually hold.
 Thus, $r$ will for every $(key,RID,\bot) \in r.In$ eventually send a \ping{r.ID, r.level, r.sinkRID, key} message to the relay layer with \RID $RID$. 
 Upon receipt of this message, either a message \inrelayclosed{\{key\}, \allowbreak RID, r.ID} will be sent to $RL(r)$ causing $r$ to remove $(key,RID,\bot)$ from $r.In$, or for a relay $r''$ with $key \in r''.out.Key$, $r.level \geq 2$ afterwards or that one is deleted.
 In both cases it will eventually be deleted (in the first case according to the assumption).
 Thus at some point in time the \ping{r.ID, r.level, r.sinkRID, key} message will be returned by an \inrelayclosed{\{key\}, RID, r.ID} causing $r$ to remove the entry from $r.In$.
 Note that a $(key,RID,\bot)$ is only added to $r.In$ if before there was a corresponding $(key,\bot, r'') \in r.In$.
 Thus, at some point in time all element in $r.In$ will have vanished.
 
 Now assume that $r.In = \emptyset$ but $r.Buf \ne \emptyset$.
 Note that in this case no message is added to $r.Buf$ according to the pseudocode.
 Thus, eventually $r.In = \emptyset$ and $r.Buf = \emptyset$, after which $r$ will be deleted.
 
 Once all relays of a relay layer have been deleted, it will shut down during \timeout.
\end{proofof}

\bibliographystyle{plain}
\bibliography{bibliography.bib}

\end{document}